\numberwithin{equation}{section}
\newtheorem{theorem}{Theorem}[section]
\newtheorem{corollary}[theorem]{Corollary}
\newtheorem{lemma}[theorem]{Lemma}
\newtheorem{proposition}[theorem]{Proposition}
\theoremstyle{definition}
\newtheorem{remark}[theorem]{Remark}
\newcommand{\B}{\mathscr{B}}
\newcommand{\C}{\mathbb{C}}
\newcommand{\CC}{\mathcal{C}}
\renewcommand{\epsilon}{\varepsilon}
\newcommand{\h}{\mathcal{H}}
\newcommand{\Q}{\mathcal{Q}}
\newcommand{\V}{\mathcal{V}}
\newcommand{\N}{\mathbb{N}}
\renewcommand{\phi}{\varphi}
\newcommand{\PP}{\mathbb{P}}
\newcommand{\R}{\mathbb{R}}
\newcommand{\U}{\mathcal{U}}
\newcommand{\Sph}{\mathbb{S}}
\newcommand{\Z}{\mathbb{Z}}
\newcommand{\eps}{\epsilon}
\newcommand{\id}{\mathds{1}}
\newcommand{\m}{{\scriptscriptstyle-}}
\newcommand{\pp}{{\scriptscriptstyle+}}
\newcommand{\ppm}{{\scriptscriptstyle\pm}}
\DeclareMathOperator{\tr}{Tr}
\begin{document}

\title[CLR-type estimates for  the Pauli and magnetic Schr\"odinger operator ]{Cwikel-Lieb-Rozenblum type estimates for the Pauli and magnetic Schr\"odinger operator in dimension two}

\author {Matthias Baur}
\address [Matthias Baur]{Institute of Analysis, Dynamics and Modeling, Department of Mathematics, University of
Stuttgart, Pfaffenwaldring 57, 70569 Stuttgart, Germany}
\email {matthias.baur@mathematik.uni-stuttgart.de}

\author {Hynek Kova\v{r}\'{\i}k}
\address [Hynek Kova\v{r}\'{\i}k]{DICATAM, Sezione di Matematica, Universit\`a degli studi di Brescia, Italy}
\email {hynek.kovarik@unibs.it}

\begin{abstract}
We prove a Cwikel-Lieb-Rozenblum type inequality for the number of negative eigenvalues of Pauli operators in dimension two. The resulting upper 
bound is sharp both in the weak as well as in the strong coupling limit. We also derive different upper bounds for magnetic Schr\"odinger operators. The nature 
of the two estimates depends on whether or not the spin-orbit coupling is taken into account.

\end{abstract}

\maketitle


\section{\bf Introduction and main results}
\label{sec-intro}
\subsection{Motivation} 
\label{ssec-bg} 
The famous Cwikel-Lieb-Rozenblum (CLR) inequality says that the number $N(-\Delta -V)$ of negative eigenvalues of a Schr\"odinger operator $-\Delta-V$ in $L^2(\R^d)$, the so-called counting function,
satisfies, for $d\geq 3$, the upper bound 
\begin{equation} \label{clr-rn}
N(-\Delta -V) \, \leq \, C_d\int_{\R^d} V_+(x)^{\frac d2}\, dx,
\end{equation}
where $C_d$ is a constant which depends only on the dimension and where $V_\pm :=\max (0,\pm V)$. The  inequality was proved independently by Cwikel, Lieb and Rozenblum in
\cite{cw}, \cite{lieb} and \cite{roz}. See also the recent paper \cite{hkrv}. For further background and reading we refer to the monograph \cite{flw-book} and references therein.

Here we treat the case $d=2$. The well-known phenomenon of weakly coupled eigenvalues \cite{sim}, absent in dimensions $d\geq 3$, implies that \eqref{clr-rn} must fail. 
As a replacement, upper bounds of the form 
\begin{equation} \label{clr-2-dim}
N(-\Delta -V) \, \leq \, 1 + G[V] , \qquad d=2,
\end{equation}
with certain homogenous functionals $G[\, \cdot\, ]$ of degree one were obtained in \cite{sol, chad, new,  st, kvw, shar}. Note that the existence of potentials $V\in L^1(\R^2)$ which 
induce a super-linear growth of $N(-\Delta -\lambda V)$ in $\lambda$ as $\lambda\to\infty$, see \cite{bl}, forbids us to put $G[V] = C \int_{\R^2} V_\pp(x)\, dx$, which would be a natural extension of \eqref{clr-rn}. 
Instead, the functional $G$ often includes weighted integrals of  $V_\pp$. To make an example let us mention that for radial potentials 
\begin{equation} \label{G-radial} 
G[V] = C \int_{\R^2}\, V_\pp(|x|)\, (1+|\log |x||)\, dx  \qquad \text{if}\quad  V(x)=V(|x|).
\end{equation} 
On the other hand, 
the fact that $G[\, \cdot\, ]$ is homogeneous of degree one implies 
that the bound \eqref{clr-2-dim} is, for a wide class of potentials, order-sharp for $\lambda\to\infty$. In fact, the Weyl asymptotic formula states that if $V$ is  continuous and compactly supported, then
\begin{equation} \label{weyl-2-dim}
\lim_{\lambda\to \infty}  \lambda^{-1}\, N(-\Delta -\lambda V)  = \frac{1}{2\pi}\, \int_{\R^2} V_+(x)\, dx,
\end{equation}
see e.g.~\cite[Sec.~4.4]{flw-book}. Moreover, \eqref{clr-2-dim} is sharp also in the weak coupling regime $\lambda\to 0$. Indeed, if $\int_{\R^2} V>0,$ then the operator $-\Delta -\lambda V$ has 
for $\lambda>0$ and small enough exactly one negative eigenvalue, \cite{sim, hhrv}. Put differently,
\begin{equation} \label{weak-schr-no-mag}
\lim_{\lambda\to 0+}   N(-\Delta -\lambda V)  = 1\, .
\end{equation}
The upper bound \eqref{clr-2-dim}  thus provides a valid alternative of the CLR-inequality in dimension two.

\smallskip

In this paper we prove an analog of the CLR-inequality for non-relativistic  two-dimensional magnetic Hamiltonians. Namely, for the Pauli and for the magnetic 
Schr\"odinger operator. The former acts in $L^2(\R^2,\C^2)$ and is formally given by
\begin{equation}\label{pauli-operator}
\PP =  \begin{pmatrix}
		H_+ & 0 \\
		0  &   H_-
	\end{pmatrix}\,  , \qquad H_\ppm =  (i\nabla +A)^2  \pm  B\,.
\end{equation}
Here $A:\R^2\to\R^2$ is a vector field and the function $B:\R^2\to\R$ is defined by
$$
B = {\rm curl}\, A = \partial_1 A_2 - \partial_2 A_1 \,.
$$
Since 
$$
\PP =\big(\sigma \cdot (i\nabla +A) \big)^2, \quad \sigma =( \sigma_1, \sigma_2),\quad  \sigma_1=
 \begin{pmatrix}
		0 & 1 \\
		1  &  0
	\end{pmatrix}\,  
\quad 
\sigma_2=  \begin{pmatrix}
		0 & i \\
		i  &  0
	\end{pmatrix}\,  ,
$$
it follows that $\PP\geq 0$. The matrix structure of $\PP$  comes from the spin-orbit coupling with the magnetic field, reflected by the $\pm$ sign of $B$ in $H_\ppm$. We restrict ourselves to the case where
 $\PP$ is 
perturbed by a scalar potential $V:\R^2\to\R$, to be understood as a multiplication by $V$ times the identity matrix $\id$. 
Hence our goal is to find an upper bound on $N(\PP -V)$ in  $L^2(\R^2,\C^2)$.

The regularity and decay conditions on $B$ are stated in \eqref{ass-B} below. In particular, the latter ensures that the magnetic field 
produces finite (normalized) flux 
\begin{equation} \label{flux}
\alpha:= \frac{1}{2\pi} \int _{\R^2}B(x)\, dx < \infty \, .
\end{equation}
Assume now for simplicity that $V$ is bounded, compactly supported and non-negative.
In this situation the number of negative eigenvalues of $\PP-\lambda V$, for $\lambda >0$ small enough, is positive and depends on  $\alpha$. More 
precisely, by \cite[Thm.~10.1]{weidl} we have
\begin{equation} \label{weak-pauli}
\lim_{\lambda\to 0+}   N(\PP  -\lambda V)  =  m(\alpha), 
\end{equation}
where 
\begin{equation}  \label{m-alpha}
m(\alpha)=  \max\{1+ [\, |\alpha|\, ] , \, 2\} 
\end{equation}
with  $[\, |\alpha|\, ] =\max\{k\in\N: k\leq |\alpha| \}$ being the integer part of $|\alpha|$. By the Aharonov-Casher theorem, \cite{ac, cfks, weidl}, if $|\alpha| >1$, then zero is an eigenvalue
of $\PP$ and its multiplicity is equal to $ [\, |\alpha|\, ]$ if $\alpha\not\in\Z$, and to  $ |\alpha| -1$ otherwise. Equation \eqref{weak-pauli} thus reflects the fact that in addition to the 
zero energy eigenfunctions, the Pauli operator admits also two, respectively one, virtual bound states (depending on whether or not $\alpha\in\Z$), i.e.~bounded solutions 
to the equation $\PP\, u=0$ such that $u\not\in L^2(\R^2; \C^2)$, \cite{weidl}.

The picture  changes completely if the spin-orbit coupling is neglected. In this case the matrix structure of the Pauli operator is destroyed and the Hamiltonian  \eqref{pauli-operator} reduces to two copies of the scalar magnetic Laplacian $(i\nabla +A)^2$ in $L^2(\R^2)$.  Accordingly, the absence of the spin-orbit coupling leads to the stabilization of the spectrum under small perturbations, see \cite{lw,weidl}. In other  words, in the weak coupling limit we have
\begin{equation} \label{weak-schr}
\lim_{\lambda\to 0+}   N\big ( (i\nabla +A)^2-\lambda V\big )  = 0.
\end{equation}
Contrarily, in the strong coupling regime, when $\lambda\to \infty$, there is typically no difference between the leading order terms of $N\big ( (i\nabla +A)^2-\lambda V\big )$ and $N(H_\pm  -\lambda V)$; all counting functions obey, for a generic $V$, the Weyl law, cf.~\eqref{strong-coupling}. Since $N(\PP - V)= N(H_+  -\lambda V) + N(H_-  -\lambda V)$, the counting function of the Pauli operator behaves identically except for a factor of two.

Any adequate  upper bound on $N(\PP - V)$ and $N\big ( (i\nabla +A)^2- V\big )$ thus should reflect the asymptotic behavior of these quantities both in weak and strong coupling regime displayed by equations \eqref{weak-pauli}, \eqref{weak-schr} and \eqref{strong-coupling}.

\subsection{Assumptions and main results}  To formulate our main results we need to introduce some necessary notation. Throughout the paper we will work under the following generic
assumption on the magnetic field:
\begin{equation}  \label{ass-B}
B\in L^q_{\rm\, loc} (\R^2), \ q>2, \qquad\ \  |B(x)| = \mathcal{O}(|x|^{-2-\eps}), \ \  \eps>0, \quad |x|\to\infty.
\end{equation}  
The assumptions on $V$ differ from case to case and will be specified later.

We introduce the function class $L^1(\R_+, L^p(\Sph))$ defined in polar coordinates on $\R^2$ as follows;
\begin{equation} \label{L-1p}
L^1(\R_+, L^p(\Sph)) = \Big\{ f: \R^2\to \C: \, \int_0^\infty \left(\int_0^{2\pi} |f(r,\theta)|^p\, d\theta\right)^{\frac 1p}\, r\, dr <\infty \Big\}\, .
\end{equation}
For  the associated norm we will adopt the shorthand
\begin{equation} \label{1p-norm}
\|f\|_{1,p} := \int_0^\infty \left(\int_0^{2\pi} |f(r,\theta)|^p\, d\theta\right)^{\frac 1p}\, r\, dr .
\end{equation}
By $\B_R=\{x\in\R^2: \, |x|< R\}$ we denote the ball of radius $R$ centered at the origin.  The indicator function of a set $M$ is denoted by $\id_M$.

The following theorem is the main result of our paper. 

\begin{theorem}[\bf Pauli operators]\label{thm-main-pauli}
Let $B$ satisfy Assumption  \eqref{ass-B}, and recall that $\alpha$ is given by \eqref{flux}. 
\begin{enumerate} 
\item {\bf Local logarithmic correction}. Assume that $\alpha \not\in\Z$. Then for any $p>1$ there exist constants $C_1=C_1(B,p)$ and $C_2=C_2(B)$ such that 
\begin{equation}  \label{clr-pauli-1}
N(  \PP -   V) \ \leq\     m(\alpha)+ C_1\, \|V_\pp\|_{1,p} +C_2 \| V_\pp  \log |x|\|_{L^1(\!\B_1)} 
\end{equation}
for all  $V \in L^1(\R_+, L^p(\Sph))$ with  $V  \log |\cdot | \in L^1(\B_1)$. Recall that $m(\alpha)$ is defined in \eqref{m-alpha}.

\bigskip

\item {\bf Global logarithmic correction}. Assume that $\alpha \in\Z$. Then for any $p>1$ there exist constants $\CC_1=\CC_1(B,p)$ and $\CC_2=\CC_2(B)$ such that 
\begin{equation}   \label{clr-pauli-2}
N(  \PP  -   V) \ \leq\  m(\alpha) + \CC_1\, \|V_\pp\|_{1,p} +\CC_2 \| V_\pp\log |x|\|_{L^1(\R^2)} 
\end{equation}
for all  $V \in L^1(\R_+, L^p(\Sph))$ with  $V  \log |\cdot | \in L^1(\R^2)$.
\end{enumerate}
\end{theorem}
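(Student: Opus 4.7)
The plan is to apply the Birman--Schwinger principle to each diagonal block of $\PP$. Since $N(\PP - V) \leq N(\PP - V_\pp)$ by monotonicity, we may assume $V \geq 0$. The block-diagonal structure gives
$$N(\PP - V) = N(H_+ - V) + N(H_- - V),$$
and the substitution $B \mapsto -B$ interchanges $H_+$ and $H_-$ and sends $\alpha \mapsto -\alpha$ (preserving $m(\alpha)$), so it suffices to analyse one summand.

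To extract the constant $m(\alpha)$ I would first split off the zero modes. By the Aharonov--Casher theorem, $\dim \ker \PP$ equals $[|\alpha|]$ when $\alpha \notin \Z$ and $|\alpha|-1$ when $\alpha\in\Z\setminus\{0\}$, with all such modes living in one spinor component. Let $P_0$ be the orthogonal projection onto $\ker \PP$. A Glazman-type min-max splitting yields
$$N(\PP - V) \leq \dim \ker \PP + N\bigl((\id - P_0)(\PP - V)(\id - P_0)\bigr)$$
up to a standard rank-perturbation correction. The constant $m(\alpha)$ in the theorem then arises as $\dim \ker \PP$ plus the one or two ``virtual'' threshold states highlighted in \eqref{weak-pauli}--\eqref{m-alpha}, which turn into genuine negative eigenvalues under any $V > 0$.

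For the restriction to $(\ker \PP)^\perp$ I would then apply Birman--Schwinger with a small regularization: for $\eps > 0$,
$$N\bigl((\PP + \eps P_0) - V\bigr) = n_+\bigl(1;\sqrt{V}\,(\PP + \eps P_0)^{-1}\sqrt{V}\bigr),$$
followed by the limit $\eps \to 0^+$. Everything then hinges on the zero-energy Green's function $G_\pm(x,y)$ of $H_\pm$ on $(\ker H_\pm)^\perp$. Near the diagonal $G_\pm$ has the standard singularity $\sim -\frac{1}{2\pi}\log|x-y|$, whereas at infinity there is a sharp dichotomy: $G_\pm$ decays polynomially when $\alpha \notin \Z$ but grows logarithmically when $\alpha \in \Z$. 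This is precisely the mechanism distinguishing the local weight in \eqref{clr-pauli-1} from the global weight in \eqref{clr-pauli-2}.

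I would then bound $n_+(1;\sqrt{V}\,G_\pm\,\sqrt{V})$ by decomposing into angular Fourier modes and applying a Solomyak-type one-dimensional estimate in each sector. The norm $\|\cdot\|_{1,p}$ is designed precisely to control the angular integral for $p > 1$ while leaving a one-dimensional radial $L^1$-integral on which the Bargmann-type bound is order-sharp. Summation produces the $\|V_\pp\|_{1,p}$ contribution from the ``regular'' part of $G_\pm$ together with the $\|V_\pp \log|x|\|_{L^1}$ contribution (localized on $\B_1$ if $\alpha \notin \Z$, global if $\alpha \in \Z$) coming from the log singularities of $G_\pm$. I expect the main obstacle to be the integer-flux case: one must verify that the logarithmic growth of $G_\pm$ at infinity is the \emph{only} obstruction uniformly across angular modes, without double-counting the Aharonov--Casher zero modes, which will require a careful near-field/far-field decomposition analogous to the weight $1+|\log|x||$ appearing in \eqref{G-radial}.
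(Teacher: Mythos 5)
The central gap is the mechanism that is supposed to produce the constant $m(\alpha)$. You propose to split off $\ker\PP$ (the Aharonov--Casher zero modes) by a finite-rank projection and then run Birman--Schwinger on the orthogonal complement via a regularized inverse $(\PP+\eps P_0)^{-1}$, $\eps\to 0^+$. This cannot work as stated: zero is the bottom of the essential spectrum of $\PP$, and the criticality of the spin components is carried by the threshold resonances (bounded, non-$L^2$ zero-energy solutions), not by the $L^2$ zero modes. Hence $(\id-P_0)\PP(\id-P_0)$ admits no Hardy-type lower bound in the finitely many critical angular momentum channels (the channels $m=0,\dots,[\alpha]$ of $H_\m$ after reducing to $\alpha\geq 0$), the zero-energy Green's function simply does not exist there, and $\sqrt{V}(\PP+\eps P_0)^{-1}\sqrt{V}$ diverges as $\eps\to 0$ in exactly those channels. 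In the paper the constant does \emph{not} come from $\dim\ker\PP$ (which is $[\alpha]$ resp.\ $\alpha-1$, not $m(\alpha)$): after passing to the weighted operators $\h_\ppm$ via the ground-state substitution $u=e^h v$, each critical channel is decoupled by an additional Dirichlet condition at $r=1$; this is a rank-one perturbation of the resolvent costing exactly $+1$ per channel (whence $[\alpha]+1$), and only after this decoupling do the half-line problems become subcritical, so that explicit Sturm--Liouville/Bessel zero-energy kernels exist and their traces produce the weighted $L^1$ terms (local log weight from the $(0,1)$ piece, global log weight only in the borderline channel $m=\alpha$, i.e.\ only for integer flux). Your ``standard rank-perturbation correction'' is precisely the step that has to be supplied; without it both the bookkeeping of the $m(\alpha)$ term and the finiteness of the Birman--Schwinger operators fail.

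Two further points. First, the reduction ``it suffices to analyse one summand'' via $B\mapsto -B$ obscures that the two spin components require different treatments: with $\alpha\geq 0$, $H_\m$ is critical and contributes the entire constant $m(\alpha)$, while $H_\pp$ is subcritical and must be shown to contribute no constant at all (otherwise the bound would degrade to $2m(\alpha)$); the paper proves separate propositions for the two components (Propositions \ref{prop-H-} and \ref{prop-H+}). Second, in the non-critical channels a trace bound on $\sqrt{V}G_\ppm\sqrt{V}$ cannot yield the $\|V_\pp\|_{1,p}$ term: the two-dimensional zero-energy kernel has a logarithmic diagonal singularity, and $\|V_\pp\|_{1,p}$ is not the trace of anything. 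What is needed is a genuine CLR-type bound; the paper establishes a Hardy-type lower bound for $P^\perp\h_\ppm P^\perp$ (after conjugation by $(1+|x|)^{\pm\alpha}$, with constants uniform in the angular momentum, which is where the weighted one-dimensional Hardy inequalities of the appendix enter) and then invokes the Laptev--Netrusov theorem, which is exactly where $p>1$ and the norm $\|\cdot\|_{1,p}$ come in. Your Solomyak-type sector-by-sector estimate gestures in this direction but, without the uniform Hardy bound, the per-channel estimates do not sum.
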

Theorem \ref{thm-main-pauli} follows from Propositions \ref{prop-H-} and \ref{prop-H+} which are proved in Sections \ref{sec-upperb-minus} respectively \ref{sec-upperb-plus}. 
For radial potentials we get

\begin{corollary}[\bf Radial potentials]\label{cor-radial-pauli}
Let $B$ satisfy Assumption  \eqref{ass-B} and suppose that $V(x)=V(|x|)$.
\begin{enumerate} 
\item Assume that $\alpha \not\in\Z$. Then there exists a constants $C=C(B)$ such that 
\begin{equation}  \label{clr-pauli-radial-1}
N(  \PP -   V) \ \leq\     m(\alpha)+ C  \int_{\R^2} V_\pp(|x|) \big(1+ \id_{\{|x|<1\}} | \log |x|| \big)\, dx
\end{equation}
for all  $V \in L^1(\R^2)$ with  $V  \log |\cdot | \in L^1(\B_1)$.

\bigskip

\item  Assume that $\alpha \in\Z$. Then there exists a constant $\CC=\CC(B)$  such that 
\begin{equation}   \label{clr-pauli-radial-2}
N(  \PP -   V) \ \leq\     m(\alpha)+ \CC \int_{\R^2} V_\pp(|x|) \big(1+  | \log |x|| \big)\, dx
\end{equation}
for all  $V \in L^1(\R^2)$ with  $V  \log |\cdot | \in L^1(\R^2)$.
\end{enumerate}
\end{corollary}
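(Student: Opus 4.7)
The plan is to deduce Corollary \ref{cor-radial-pauli} directly from Theorem \ref{thm-main-pauli} by reducing the mixed Lebesgue norm $\|\cdot\|_{1,p}$ to the ordinary $L^1(\R^2)$ norm in the radial case. Since the theorem already contains the relevant spectral analysis, there is no new operator-theoretic input required.

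First I would fix an arbitrary exponent $p>1$ (the value is immaterial, say $p=2$) and observe that when $V(x)=V(|x|)$ the inner $\theta$-integral in \eqref{1p-norm} is trivial, giving
\[
\|V_\pp\|_{1,p}=(2\pi)^{1/p}\int_0^\infty V_\pp(r)\,r\,dr=(2\pi)^{1/p-1}\int_{\R^2}V_\pp(|x|)\,dx.
\]
In particular, every radial $V\in L^1(\R^2)$ automatically belongs to $L^1(\R_+,L^p(\Sph))$, and the regularity hypothesis of Theorem \ref{thm-main-pauli} is satisfied as soon as the stated logarithmic integrability condition on $V$ holds.

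Next I would substitute this identity into \eqref{clr-pauli-1} (case $\alpha\notin\Z$) and into \eqref{clr-pauli-2} (case $\alpha\in\Z$). The term $C_1\|V_\pp\|_{1,p}$, respectively $\CC_1\|V_\pp\|_{1,p}$, turns into a fixed constant multiple of $\int_{\R^2}V_\pp(|x|)\,dx$, while the logarithmic term already has the form appearing on the right-hand side of \eqref{clr-pauli-radial-1} and \eqref{clr-pauli-radial-2}; in the first case one simply rewrites the $L^1(\B_1)$ integral as $\int_{\R^2}V_\pp(|x|)\,\id_{\{|x|<1\}}|\log|x||\,dx$. Collecting the two resulting multiplicative constants into a single $C=C(B)$, respectively $\CC=\CC(B)$, yields the claimed inequalities.

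As the corollary is a direct specialization of Theorem \ref{thm-main-pauli}, no substantive obstacle appears; the only point to keep in mind is that the factor $(2\pi)^{1/p-1}$ absorbed into the final constant is finite because $p>1$ is fixed internally to the argument and plays no role in the statement.
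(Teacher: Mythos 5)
Your proposal is correct and is exactly the reduction the paper intends: the corollary is stated as a direct specialization of Theorem \ref{thm-main-pauli}, using the identity $\|V_\pp\|_{1,p}=(2\pi)^{1/p-1}\|V_\pp\|_{L^1(\R^2)}$ for radial potentials (the same identity the authors record in Remark \ref{rem-strong}), with a fixed $p>1$ so that the constants depend only on $B$.
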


\smallskip

Another consequence of Theorem \ref{thm-main-pauli}, or rather of its proof,  is the following bound on the number of  negative eigenvalue of magnetic Schr\"odinger operators.

\begin{corollary}[\bf magnetic Schr\"odinger operators]\label{cor-main-schr}
Let $B$ satisfy \eqref{ass-B} and assume that $\alpha\neq 0 $. Let $p>1$ and let $C_j$ and  $\CC_j$ be  the constants in Proposition \ref{prop-H+}.  

\begin{enumerate}
\item {\bf Local logarithmic correction}. If $\alpha\not\in\Z$, then 
\begin{equation}   \label{clr-schr-1}
N( (i\nabla +A)^2 -   V) \ \leq\  2\, C_1\, \|V_\pp\|_{1,p} +2\,  C_2\,  \| V_\pp  \log |x|\|_{L^1(\!\B_1)} 
\end{equation}
for all  $V \in L^1(\R_+, L^p(\Sph))$ with  $V  \log |\cdot | \in L^1(\B_1)$.

\bigskip

\item {\bf Global logarithmic correction}. If $\alpha \in\Z$, then
\begin{equation}  \label{clr-schr-2}
N( (i\nabla +A)^2 -   V) \ \leq\   2\,  \CC_1\, \|V_\pp\|_{1,p}  +2\, \CC_2\,  \| V_\pp\log |x|\|_{L^1(\R^2)} 
\end{equation}
for all  $V \in L^1(\R_+, L^p(\Sph))$ with  $V  \log |\cdot | \in L^1(\R^2)$.
\end{enumerate}
\end{corollary}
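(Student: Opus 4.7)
The plan is to reduce the magnetic Schr\"odinger bound to the $H_+$-bound already supplied by Proposition \ref{prop-H+}, via a one-line operator identity and the nonnegativity of the Pauli operator. The key observation is the trivial quadratic form identity
\begin{equation*}
(i\nabla+A)^2=\tfrac{1}{2}(H_++H_-),
\end{equation*}
which follows at once from the definitions $H_\pm=(i\nabla+A)^2\pm B$. By Aharonov--Casher, $\ker H_+=\{0\}$ when $\alpha>0$ and $\ker H_-=\{0\}$ when $\alpha<0$, so Proposition \ref{prop-H+} applies to the ``good'' component without the $m(\alpha)$ correction. I shall assume $\alpha>0$; the case $\alpha<0$ is identical after interchanging the roles of $H_+$ and $H_-$ (equivalently, after applying Proposition \ref{prop-H+} to the magnetic field $-B$, which swaps the two Pauli components).

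Rearranging the identity as
\begin{equation*}
(i\nabla+A)^2-V=\tfrac{1}{2}(H_+-2V)+\tfrac{1}{2}H_-
\end{equation*}
and using the nonnegativity $H_-\ge 0$, which is a consequence of the factorization $\PP=(\sigma\cdot(i\nabla+A))^2\ge 0$, one obtains the quadratic form inequality
\begin{equation*}
(i\nabla+A)^2-V\ \ge\ \tfrac{1}{2}(H_+-2V)
\end{equation*}
on the natural magnetic Sobolev domain. The variational min--max principle then yields
\begin{equation*}
N\bigl((i\nabla+A)^2-V\bigr)\ \le\ N\bigl(\tfrac{1}{2}(H_+-2V)\bigr)=N(H_+-2V).
\end{equation*}

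It remains to apply Proposition \ref{prop-H+} with the potential $2V$ (which lies in the same weighted spaces as $V$, with doubled norms). Since the functionals on the right-hand sides of the bounds in Proposition \ref{prop-H+} are homogeneous of degree one in $V_\pp$, the substitution $V\mapsto 2V$ produces exactly the factor of $2$ appearing in \eqref{clr-schr-1} for $\alpha\not\in\Z$ and in \eqref{clr-schr-2} for $\alpha\in\Z$. There is no serious obstacle: the whole argument rests on the single operator inequality $H_-\ge 0$, and the only point requiring care is the choice of which Pauli component to pass through, dictated by the sign of $\alpha$ through the Aharonov--Casher theorem.
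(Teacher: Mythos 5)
Your argument is correct and is essentially the paper's own proof: both rest on the form inequality $2(i\nabla+A)^2\geq H_\pp$ (equivalently $H_\m\geq 0$), the variational principle giving $N((i\nabla+A)^2-V)\leq N(H_\pp-2V)$, and an application of Proposition \ref{prop-H+} with potential $2V$, whose degree-one homogeneity yields the factors of $2$. The appeal to the Aharonov--Casher theorem is superfluous (only the hypothesis $\alpha>0$ of Proposition \ref{prop-H+} is needed, the case $\alpha<0$ being handled by the sign-flip reduction already fixed in the paper), but it does no harm.
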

The proof of Corollary \ref{cor-main-schr} is given in Section \ref{sec-schr}.

Our proofs imply explicit bounds on all the constants involved in Theorem \ref{thm-main-pauli} and Corollary \ref{cor-main-schr},  but we will not state them 
as they are far from optimal.


\subsection{Discussion}
\label{ssec:discussion}
Let us make a couple of comments on the above theorems.

\smallskip

\begin{remark}[\bf Strong coupling] \label{rem-strong}
It has been already mentioned that  there exist potentials in $L^1(\R^2)$ which produce super-linear growth of the counting function
$N(-\Delta-\lambda V)$ as $\lambda\to\infty$, \cite{bl}. Typical examples of such potentials are
\begin{equation} \label{local}
V_\sigma(x) = \left\{
\begin{array}{l@{\quad}cr}
r^{-2}\, |\log r|^{-2}\, \big(\log |\log r|\big)^{-1/\sigma} & \text{if\, \,} & r<
e^{-2}  \\
0 & \, \, \text{if \, }  & r \geq e^{-2} 
\end{array}
\right. \qquad r=|x|,
\end{equation}
and  
\begin{equation} \label{global}
W_\sigma(x) = \left\{
\begin{array}{l@{\quad}cr}
r^{-2}\, (\log r)^{-2}\, (\log \log r)^{-1/\sigma} & \ \ \, \text{if\, \,} & r>
e^{2}  \\
0 & \ \ \ \text{if \, }  & r \leq e^{2} 
\end{array}
\right.  
\quad \qquad r=|x|.
\end{equation}
In particular it follows from \cite[Sec.~6]{bl} that 
\begin{equation} \label{bl-asymp}
\lim_{\lambda\to\infty} \lambda^{-\sigma}\,  N( -\Delta -   \lambda V_\sigma) = \lim_{\lambda\to\infty} \lambda^{-\sigma}\,  N( -\Delta -   \lambda W_\sigma) = \frac{\Gamma\big(\sigma-\frac 12\big)}{2\sqrt{\pi}\ \Gamma(\sigma)}\qquad 	\forall\, \sigma >1.
\end{equation}

It turns out that these effects partially persist even in the presence of a magnetic field. For magnetic Schr\"odinger operators this was proved  in \cite{kov-11}.
For Pauli operators we prove in Section \ref{sec-strong-coupling} 
that, as $\lambda\to\infty$,
\begin{align}
N(\PP  -   \lambda V_\sigma) \, & \asymp\, \lambda^\sigma,  \qquad \forall\ \sigma >1, \qquad  \forall \ \alpha\in\R \label{V_sigma}  \\[5pt]
N(\PP  -   \lambda W_\sigma) \, & \asymp\, \lambda^\sigma,  \qquad \forall\ \sigma >1, \qquad \forall \ \alpha\in\Z. \label{W_sigma} 
\end{align}

Here, for two positive functions $f$ and $g$, we write $f(x) \asymp g(x)$ as $x\to \infty$ if there exist positive constants $K_1 <K_2$ such that 
$$
K_1 \leq \liminf_{x\to \infty} \frac{f(x)}{g(x)} \leq \limsup_{x\to \infty} \frac{f(x)}{g(x)} \leq K_2.
$$

 Let us mention that although the potentials $V_\sigma$ and $W_\sigma$ produce similar behavior of the counting function in the limit $\lambda\to\infty$, their nature is completely different. For $V_\sigma$ is compactly supported and singular in the origin, while $W_\sigma$ is bounded and slowly vanishing at infinity. Notice that neither  $V_\sigma$ nor  $W_\sigma$  belongs to  $L^1(\R^2, \big|\log |x| \big|\, dx)$, but both of them are in $L^1(\R_+, L^p(\Sph))$ for any $p\geq 1$. Indeed, since $V_\sigma$ and $W_\sigma$ are radial,  
\begin{equation}
\|V_\sigma\|_{1,p} =(2\pi)^{\frac 1p -1}\,   \|V_\sigma\|_{L^1(\R^2)} < \infty, \qquad \|W_\sigma\|_{1,p} =(2\pi)^{\frac 1p -1}\,  \|W_\sigma\|_{L^1(\R^2)} < \infty
\end{equation}
for all $\sigma>1$ and all $p\geq 1$. 
Meanwhile, the upper bounds in Theorem \ref{thm-main-pauli} grow linearly in $\lambda$.  In combination with \eqref{V_sigma},  \eqref{W_sigma} and the above equation this shows that the logarithmic weights in  \eqref{clr-pauli-1} and \eqref{clr-pauli-2} cannot be removed. 

However, in the case of non-integer $\alpha$, the logarithmic weight is needed only locally. In fact, since $W_\sigma\in  L^1_{\rm loc}(\R^2, \big|\log |x| \big|\, dx)$, Theorem \ref{thm-main-pauli}  implies that if $\alpha\not\in\Z$, then $W_\sigma$ {\it does not} produce a super-linear growth of the counting function. 
This is compatible with equation \eqref{W_sigma}, or more generally with the hypotheses of Proposition \ref{prop-semiclass-1}.  Same remarks apply  to inequalities  \eqref{clr-schr-1} and \eqref{clr-schr-2}.

\end{remark}

\smallskip

\begin{remark}[\bf Condition $p>1$] \label{rem-p>1} 
Equation \eqref{V_sigma} also implies that the condition $p>1$ in Theorems \ref{thm-main-pauli} and Corollary \ref{cor-main-schr} is sharp, i.e.~the upper bounds \eqref{clr-pauli-1}-\eqref{clr-schr-2} do not hold if $p=1$. To see this, consider the translated potential $V_\sigma(\cdot -x_0)$, with $x_0\neq 0$ and with $\sigma >1$. Then, $\|V_\sigma(\cdot-x_0 ) \|_{1,p} =\infty$ for all $p>1$, whereas
\begin{align*}
\|V_\sigma(\cdot-x_0 ) \|_{1,1} = \|V_\sigma(\cdot-x_0 ) \|_{L^1(\R^2)} = \|V_\sigma  \|_{L^1(\R^2)} < \infty. 
\end{align*}
At the same time, for any  $x_0\neq 0$, we have
\begin{align*}
V_\sigma(\cdot-x_0 )\in L^1(\R^2, \big|\log |x| \big|\, dx).
\end{align*} 
For $p=1$, we would therefore obtain upper bounds on the counting functions that grow linearly in $\lambda$. However, by \eqref{bl-asymp} resp.\ \eqref{V_sigma} and translational invariance, 
$$
N(\PP  -   \lambda V_\sigma(\cdot-x_0 ) ) \, \asymp\, \lambda^\sigma, \qquad N\big( (i\nabla +A)^2 -   \lambda V_\sigma(\cdot-x_0 )\big) \, \asymp\, \lambda^\sigma  \qquad   \forall \ \alpha\in\R\
$$
as $\lambda\to\infty$.

\end{remark}

\smallskip

\begin{remark}[\bf Weak coupling] \label{rem-weak}
The estimates stated in Theorems \ref{thm-main-pauli} and Corollary \ref{cor-main-schr} display the correct behavior also in the weak coupling limit $\lambda\to 0$, cf.~equations \eqref{weak-pauli} 
and \eqref{weak-schr}. 
The presence of the additional factor $m(\alpha)$ in \eqref{clr-pauli-1} and \eqref{clr-pauli-2} is yet another consequence of the spin-orbit coupling which produces exactly $m(\alpha)$ negative eigenvalues of the perturbed Pauli operator in the low energy limit, see \eqref{weak-pauli}.  For the asymptotic expansion of these  eigenvalues we refer to \cite{bcez, fmv, kov, ba}, see also the recent preprint \cite{fi-kr}.
When the the spin-orbit coupling 
is neglected, the weakly coupled eigenvalues disappear, \cite{lw, weidl}. Accordingly,  the  factor $m(\alpha)$ is absent in estimates  \eqref{clr-schr-1} and \eqref{clr-schr-2}.
\end{remark}

\smallskip

\begin{remark}[\bf Long range potentials] \label{rem-long-range}
We have already pointed out in Remark \ref{rem-strong} that the logarithmic weight on the right hand side of \eqref{clr-pauli-2} prevents the application of this estimate to potentials 
which decay as slowly as  $W_\sigma$. In Section \ref{sec-slow} we show that the last term in \eqref{clr-pauli-2} can be replaced by a  different functional of $V$ in such a way that the resulting upper bound covers also potentials of the type $W_\sigma$, see Theorem \ref{pauli-long-range}. 
\end{remark}

\smallskip

\begin{remark}[\bf Condition $\alpha\neq 0$]  \label{rem-alpha-zero}
Corollary \ref{cor-main-schr} follows from the proof of Theorem \ref{thm-main-pauli}, in particular from Proposition \ref{prop-H+} under the hypotheses $\alpha\neq 0$. It is natural to expect that 
estimates  \eqref{clr-schr-1} and \eqref{clr-schr-2} hold even if $\alpha=0$. This question remains open.
\end{remark}

\smallskip


\subsection{Related results}
\label{ssec:literature}
Apart from inequality \eqref{clr-2-dim} for two-dimensional Schr\"odinger operators, results similar to Theorem \ref{thm-main-pauli} were obtained 
in \cite{dflnn} for Hardy-Schr\"odinger operators in dimensions $d\geq 3$. In this case the (unique) weakly coupled eigenvalue arises from subtracting the sharp Hardy 
weight $\frac{(d-2)^2}{4 |x|^2}$ from the Laplace operator. The resulting upper bound on the counting function then include weighted integrals of the potential similar to those 
in estimates  \eqref{clr-pauli-1}-\eqref{clr-schr-2}, see
\cite[Thm.~1]{dflnn}. Fractional  Schr\"odinger operators were discussed in a very recent paper \cite{bfg}. For a weighted version of the Cwikel–Lieb–Rozenblum inequality
for two-dimensional Schr\"odinger operators with Aharonov-Bohm magnetic field we refer to \cite{flr}.

As for estimates on the counting function of the Pauli operator in dimension two, the only existing result is \cite{ffgks},  to the best of our knowledge. The latter
work deals with the Pauli operator on a bounded smooth domain $\Omega$ with magnetic Robin boundary conditions. The authors obtain a sharp {\em lower bound} on the counting function in 
terms of the normalized flux and of the number of boundary components of $\Omega$, see \cite[Thm.~1.1]{ffgks}.

Closely related to the estimates on the counting function are the Lieb-Thirring inequalities, i.e.~
upper bounds on the Riesz means 
\begin{equation} \label{riesz}
\sum_j |E_j|^\gamma = \tr(\PP-V)_\m^\gamma\ , 
\end{equation}
where $E_j$ are the negative eigenvalues of $\PP-V$.  Such inequalities, in dimension two, were obtained in  \cite{es,sob} for all $\gamma \geq 1$,
and in \cite{FK} for all $\gamma>0$ satisfying $\gamma \geq \min\{1, |\alpha|\}$. Note that $N(\PP-V)$ coincides with \eqref{riesz} in the case $\gamma=0$ which is not 
covered by the results of  \cite{es,sob, FK}.

\smallskip

\subsection{Notation} Given a self-adjoint operator $T$ on a Hilbert space $\mathscr{H}$, we indicate the associated counting function with 
$$
N(T)_\mathscr{H}
$$
in those cases where a confusions might arise. In all other cases we drop the subscript $\mathscr{H}$.

Let $X$ be an arbitrary set and $f,g:X \to \mathbb{R}$. In the following, we write 
$$
f(x) \lesssim_{\, \eps} g(x)
$$ 
if
there exists a constant $c_\eps>0$, depending only on $\eps$,  such that $f(x) \geq c_\eps \, g(x)$ for all $x\in X$. Accordingly, $f(x) \lesssim g(x)$ indicates that the implicit constant on the right hand side 
is independent of all the possible parameters introduced in our model.
The symbols $\gtrsim_{\, \varepsilon}$ and $\gtrsim$ are used  similarly. Dependencies on multiple parameters are indicated with multiple subscripts.

We also write $f(x) \asymp g(x) $ if $f(x) \gtrsim g(x)$ and $f(x) \lesssim g(x)$. Note that this is a stronger notion of the symbol $\asymp$ than that introduced in \eqref{V_sigma}, \eqref{W_sigma} but confusion should not arise as we indicate the former notion with the addition ''as $\lambda \to \infty$''.


\subsection{Strategy of the proof}
\label{ssec:strategy}
In this section we briefly sketch the main steps of our proof.
Obviously, 
\begin{equation} \label{sketch-0}
N(\PP  - V )_{L^2(\R^2;\C^2)}  =N( H_\pp -   V )_{L^2(\R^2)} +N( H_\m -   V )_{L^2(\R^2)}\, .
\end{equation}
First, following \cite{weidl, FaKo, FK}  we transform the problem 
to the analysis of  operators $\h_\m$ and $\h_\pp$ acting on weighted $L^2-$spaces, see equation \eqref{passage-weighted}.
One of the main technical tools which we will use in estimating the counting functions of $\h_\ppm- V$ is the following result of Laptev and Netrusov \cite{ln}:

\begin{theorem}[\bf Laptev-Netrusov]\label{thm-ln}
Let $b>0$ and assume that $p>1$. Then there exist $C(b,p)>0$ such that 
\begin{equation} \label{laptev-netrusov}
N\Big( -\Delta +	\frac{b}{|x|^2} - V \Big)_{L^2(\R^2)}  \, \leq\, C(b,p)\,  \|V\|_{1,p} 
\end{equation}
for all $V\in L^1(\R_+, L^p(\Sph))$.
\end{theorem}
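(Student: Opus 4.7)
The plan is to reduce the two-dimensional problem to a family of one-dimensional Bessel-type operators via Fourier decomposition in the angular variable, and to exploit the positivity of $b/|x|^2$: it lifts every angular momentum channel (including the $k=0$ channel) strictly above the critical Hardy constant $-1/4$, which restores a CLR-type bound that would otherwise fail in two dimensions due to the phenomenon of weakly coupled eigenvalues.

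First I would perform the angular decomposition in polar coordinates. With the unitary $u\mapsto r^{1/2} u$ from $L^2(\R_+, r\, dr)$ onto $L^2(\R_+, dr)$ and the Fourier basis $\{e^{ik\theta}/\sqrt{2\pi}\}_{k\in\Z}$, the free operator becomes the orthogonal sum $\bigoplus_{k\in\Z}\tilde T_k$ with
\begin{equation*}
\tilde T_k = -\partial_r^2 + \frac{\mu_k}{r^2}, \qquad \mu_k := k^2 + b - \tfrac{1}{4} > -\tfrac{1}{4}\quad \text{for every } k\in\Z.
\end{equation*}
For each channel the classical Bargmann/Calogero bound for half-line Schr\"odinger operators gives, for any $W\geq 0$,
\begin{equation*}
N(\tilde T_k - W)_{L^2(\R_+, dr)}\ \leq\ \frac{c}{\sqrt{k^2 + b}} \int_0^\infty r\, W_+(r)\, dr,
\end{equation*}
the pre-factor being the diagonal value of the zero-energy Bessel Green function. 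Since $b > 0$ this estimate is uniform in $k\in\Z$, with no weakly-coupled-eigenvalue obstruction in the $k=0$ channel.

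For a non-radial $V$ the potential couples distinct $k$-channels, which I would treat by a variational / Birman--Schwinger comparison combined with an angular H\"older bound
\begin{equation*}
\int_0^{2\pi} V_+(r,\theta)\,|\psi(\theta)|^2\, d\theta\ \leq\ \|V_+(r,\cdot)\|_{L^p(\Sph)}\, \|\psi\|_{L^{2p'}(\Sph)}^2
\end{equation*}
together with a dyadic decomposition of the angular Fourier variable. The idea is that on a dyadic block $|k|\sim 2^j$ the one-dimensional Sobolev embedding on $\Sph$ converts the $L^{2p'}$-norm into an $L^2$-norm at the cost of a power of $2^j$, which combined with the per-channel constant $(k^2+b)^{-1/2}\sim 2^{-j}$ produces a geometric series that converges precisely when $p > 1$, reassembling after the radial integration into $\|V\|_{1,p}$.

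The main obstacle is this last step: balancing the Sobolev loss against the Bargmann channel constant through the dyadic/H\"older bookkeeping so that the sum over $k$ converges and the radial integrations reassemble exactly into $\|V\|_{1,p}$. This is essentially the Laptev--Netrusov mechanism from \cite{ln}. The condition $p > 1$ is sharp for this argument: at $p = 1$ the Sobolev gain on $\Sph$ becomes marginal and the summation over angular scales diverges logarithmically, consistent with the $p = 1$ counterexamples discussed in Remark~\ref{rem-p>1}.
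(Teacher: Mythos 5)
A preliminary remark: the paper does not prove Theorem \ref{thm-ln} at all; it is quoted as an external result of Laptev and Netrusov \cite{ln} and used as a black box, so there is no in-paper argument to compare with and your sketch has to stand on its own. As written it has a genuine gap, and the gap sits exactly in the step you defer to ``dyadic/H\"older bookkeeping''.

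Concretely: (a) already for radial $V$, where your angular H\"older/Sobolev step is vacuous, the per-channel Bargmann bound $N(\tilde T_k-W)\le c\,(k^2+b)^{-1/2}\int_0^\infty rW\,dr$ cannot be summed to a bound linear in $I:=\int_0^\infty rW\,dr$. The only additional information it carries is that a channel is empty once its right-hand side drops below $1$, and then the best one can conclude is $\sum_{|k|\lesssim I} c\,I/\sqrt{k^2+b}\sim I\log I$; the test potential $W(r)=a r^{-2}$ on $R\le r\le Re^{L}$ shows the ratio of the channel-summed Bargmann bounds to $\|V\|_{L^1(\R^2)}$ genuinely diverges like $\log(aL)$, so this is a loss of the method, not of the accounting. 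Repairing it needs an extra idea (absorbing part of the centrifugal barrier into the potential before taking the Birman--Schwinger trace, or working on dyadic annuli where the Hardy term supplies a scale-covariant positive lower bound so that most annuli contribute nothing), and that idea is precisely what your proposal omits. (b) For non-radial $V$ the dyadic balance you describe goes the wrong way: on a block $|k|\sim 2^j$ there are $\sim 2^j$ channels with Bargmann constant $\sim 2^{-j}$ each, while the Nikolskii/Sobolev embedding on $\Sph$ for such functions costs a factor $2^{j/(2p)}$ in $\|\psi\|_{L^{2p'}(\Sph)}$, hence $2^{j/p}$ in the quadratic form; the product $2^{j}\cdot 2^{-j}\cdot 2^{j/p}$ grows geometrically in $j$, so the series you claim converges for $p>1$ in fact diverges unless the counting inside a block is done jointly and only over radii where $\|V(r,\cdot)\|_{L^p(\Sph)}$ actually exceeds the centrifugal barrier of size $4^{j}/r^2$. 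Organizing that restriction and the subsequent summation is the real content of \cite{ln} (whose proof proceeds via a partition of $\R^2$ into annuli and Birman--Solomyak type estimates on each piece, the role of $b|x|^{-2}>0$ being to prevent a constant contribution per annulus), not an afterthought. Your identification of the per-channel constant and the $p=1$ sharpness heuristic are correct, but they are the easy part of the theorem.
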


 Since $N\big( (i\nabla +A)^2 - B- V \big) =N\big( (i\nabla -A)^2 +B- V\big)$, see equation \eqref{q-forms} below, 
without loss of generality we may and will assume throughout the rest of the paper that 
$$
\alpha\geq 0.
$$
The operator $H_\m$ is then critical, i.e.~it admits weakly coupled negative eigenvalues when perturbed by a negative potential. We introduce the projection operators $P_m$ acting as 
\begin{equation} \label{pm-def}
P_m\, u (r,\theta)= \frac{e^{im\theta}}{2\pi} \int_0^{2\pi} e^{-im\theta'}\, u(r,\theta')\, d\theta'\, \qquad \, m\in\Z\, .
\end{equation}
Clearly, $P_m$ projects $L^2(\R^2)$ onto the subspace of functions with angular momentum $m$. We now set
\begin{equation} \label{P-def}
P=\sum_{m=0}^n P_m  \qquad \text{with} \qquad  n := [\alpha]  ,
\end{equation}
and $P^\perp = 1- P$. Since $\h_\m$ is associated with the quadratic form $\Q_\m$ defined in \eqref{new-qform}, it commutes with $P$. In view of \eqref{passage-weighted} 
and the Cauchy-Schwarz inequality, 
this allows us to estimate the number of negative  eigenvalues of $H_\m-  V$ as follows;
\begin{equation} \label{sketch-0}
N( H_\m-  V) \, \leq\, N\big( \h_\m   -  M_-^2\, V \big) \leq   N\big(P ( \h_\m   -  2 M_-^2 V)P\big)  + N\big(P ^\perp( \h_\m   -  2  M_-^2V)P^\perp\big)\, , 
\end{equation}
where $M_-$ is a positive constant depending only on $B$, see \eqref{M-def}. Iterated application of the Cauchy-Schwarz inequality to the first term on the right side of \eqref{sketch-0} 
further gives 
\begin{equation}  \label{sketch}
N( H_\m-  V) \, \leq\, \left(\sum_{m=0}^n N( h_m^\m -  C_n\, \V) \right) + N\big(P ^\perp( \h_\m   -  2M_-^2  V)P^\perp\big)\, .
\end{equation}
Here $h_m^\m= P_m \h_m\, P_m$ acts in $L^2(\R_+; (1+r)^{-2\alpha}\, r dr)$, $C_n$ is a positive constant,  and 
\begin{equation} \label{V-m-eq}
\V(r) = \frac{1}{2\pi} \int_0^{2\pi} V(r,\theta)\, d\theta\, .
\end{equation}
One of the key ingredients of our proof consists in showing that the operator $P ^\perp\, \h_\m\, P^\perp$ satisfies the Hardy-type bound \eqref{p-perp-lowerb}. Therefore, using Theorem \ref{thm-ln}, we can estimate the 
contribution from the last term in \eqref{sketch} by a constant times $\|V_\pp\|_{1,p}$. This is done in Proposition \ref{prop-Hm-perp-1}. 

As for the first term on the right side of \eqref{sketch}, we note that 
all the operators $h_m^\m$ with $m\in\{0,\dots, n\}$ are critical. However, since they act in $L^2(\R_+; (1+r)^{-2\alpha}\, r dr)$, imposing an additional Dirichlet condition at $r=1$ leads to a rank one perturbation of the resolvent. Hence, by the variational principle, 
\begin{equation}  \label{sketch-2}
 N( h_m^- -  \V) \leq 1+   N( \mathfrak{h}_m^-  -  \V),
\end{equation}
where the operators $\mathfrak{h}_m^-$ act in the same way as $h_m^-$ but with the additional Dirichlet boundary condition $u(1)=0$. Using the Sturm-Liouville theory  in combination with the Birman-Schwinger principle we then estimate the second term in \eqref{sketch-2} by a weighted integral of $\V$. This gives the third term on the right side of \eqref{clr-pauli-1} and \eqref{clr-pauli-2}.
After inserting \eqref{sketch-2} in \eqref{sketch} the additional constant terms add up to $n+1$. See Proposition \ref{prop-H-} for details.

The contribution from $N( \h_\pp -   V )$ in \eqref{sketch-0} is treated in a similar but slightly different way. First, in the case $\alpha=0$, we have $\h_\pp = \h_\m$ and therefore the estimate for $N( \h_\m -   V )$ carries over to $N( \h_\pp -   V )$, resulting in the constant term $m(0)=2$ in \eqref{clr-pauli-1} and \eqref{clr-pauli-2}. Next, we treat the case $\alpha>0$. Here, the operator $\h_\pp $ becomes subcritical. We set $P= P_0$ if $\alpha\not\in\Z$, and $P=P_0+P_\alpha$ if $\alpha\in\Z$. On the range of $P^\perp$ we use the same arguments as above. The counting function of $P (\h_\pp -V) P$ is bounded again with the help of the Sturm-Liouville theory, but this time there is no additional constant term, see Propositions \ref{prop-H+radial+alpha} and \ref{prop-H+}.

\smallskip

Corollary \ref{cor-main-schr} is a consequence of the positivity of the Pauli operator and of  Proposition \ref{prop-H+}. It should be pointed out however, that inequalities \eqref{clr-schr-1} and \eqref{clr-schr-2} were previously known only for radial magnetic fields and for $p=\infty$, cf.~\cite[Sec.~3.2]{kov-11}. The approach of the present paper, which relies on estimating the counting function of the magnetic 
Schr\"odinger operator by the counting function of the subcritical component of the Pauli operator, works even without assuming the axial symmetry of $B$.

\section{\bf  Preliminaries}
From now on we will assume, without loss of generality, that 
\begin{equation} \label{V-positive}
V\geq 0\, .
\end{equation} 
As it is often the case when dealing with the  Pauli operator, we introduce the function
\begin{equation}  \label{h-eq}
h(x) = \frac{1}{2\pi} \int_{\R^2} B(y) \log |x-y|\, dy .
\end{equation}
Standard regularity arguments  imply that under condition \eqref{ass-B} we have $h\in W^{1,\infty}(\R^2)$. Since $-\Delta h = B$ in the sense of distributions, it follows that the vector field  
\begin{equation} \label{A-h-def} 
A_h = (\partial_{x_2} h, - \partial_{x_1} h)
\end{equation}
satisfies  $\nabla\times  A_h = B$ and $|A_h| \in L^\infty(\R^2)$. Owing to the gauge invariance of $N(\PP -V)$ and  $N\big ( (i\nabla +A)^2- V\big )$ we can assume without loss of generality that $|A|\in L^\infty(\R^2)$. We will work in the sequel in the gauge $A_h$. The operators $H_\ppm$ are then associated to the quadratic forms
\begin{equation}\label{q-forms}
Q_\ppm[u] = \int _{\R^2} \big ( |(i\nabla +A_h) u|^2 \pm B |u|^2 \big )\, dx, \qquad u\in H^1(\R^2)\, .
\end{equation}

The functions $h$ defined in \eqref{h-eq}  satisfies
\begin{equation} \label{h-asymp}
h(x) = |x|^{\alpha}\big(1+ \mathcal{O}(|x|^{-1})\big) , \qquad |x|\to \infty.
\end{equation}
Hence  the constants $\mu_\ppm$ and  $m_\ppm$ defined by  
\begin{equation} \label{h-bounds}
\mu_\ppm := \inf_{x\in \R^2} \frac{e^{\pm h(x)}}{(1+|x|)^{\pm \alpha}}\qquad \text{and} \qquad m_\ppm :=  \sup_{x\in \R^2} \frac{e^{\pm h(x)}}{(1+|x|)^{\pm \alpha}}
\end{equation}
depend only on $B$ and satisfy
\begin{equation}  \label{mu-m}
0<\mu_\ppm \leq m_\ppm < \infty.
\end{equation}
Let 
\begin{equation}  \label{M-def}
M_\ppm  = \frac{m_\ppm}{\mu_\ppm} \, \in [1,\infty).
\end{equation}
A  standard calculation, based on the ground-state representation $u= e^{h} v$, gives
\begin{equation} \label{form-factor}
Q_\ppm[e^{h} v] = \int_{\R^2} e^{\pm 2h} \, |(\partial_{x_1} \mp i \partial_{x_2}) v |^2\, dx  .
\end{equation}
Thanks to \eqref{mu-m} we thus conclude that 
\begin{equation} 
Q_\ppm[e^{h} v] \, \geq\,  \mu_\ppm^2\,  \Q_\ppm[v]  \qquad \forall\, v \in D_\ppm(\alpha) ,
\end{equation}
where 
\begin{equation}  \label{new-qform}
 \Q_\ppm[v] =   \int_{\R_+} \int_0^{2\pi}  (1+r)^{\pm 2\alpha}\, |(\partial_r \mp i r^{-1}  \partial_\theta) v |^2\, r\, dr d\theta
\end{equation}
with the form domain
\begin{equation} 
 D_\ppm(\alpha)= \big\{ v\in H^1_{\rm loc}(\R^2):\, \int_{\R^2} (1+|x|)^{\pm 2\alpha}\,(|\nabla v|^2 +|v|^2)\, dx < \infty \big\} .
\end{equation}
Let $\h_\ppm$ denote the operators associated with the quadratic forms $ \Q_\ppm[v]$ on the weighted spaces $L^2(\R^2;(1+|x|)^{\pm 2\alpha}\,dx)$ respectively. From equations \eqref{mu-m} and \eqref{M-def} we deduce that 
\begin{equation}  \label{passage-weighted}
N( \h_\ppm - M_\ppm^{-2}\, V)_{L^2(\R^2;(1+|x|)^{\pm 2\alpha}\,dx )} \, \leq\, N( H_\ppm -V)_{L^2(\R^2)} \, \leq N( \h_\ppm - M_\ppm^2\, V)_{L^2(\R^2;(1+|x|)^{\pm 2\alpha}\,dx )} \, .
\end{equation}

In the sequel we will often use the following elementary  bound. Let $\Pi$ be projection operator on a Hilbert space $\mathscr{H}$, and let $\Pi^\perp=1-\Pi$. If $V\geq 0$, then
the Schwarz inequality implies that for all $\eps>0$ 
\begin{equation} \label{schwarz-eps} 
\Pi^\perp V \Pi +\Pi V \Pi^\perp \leq \eps\, \Pi V \Pi +\eps^{-1} \Pi^\perp V \Pi^\perp
\end{equation} 
in the sense of quadratic forms on $\mathscr{H}$.

\section{\bf Upper bound on $N( H_\m -  V)$ }  
\label{sec-upperb-minus}
The main result of this section is Proposition \ref{prop-H-}. 
In view of equation \eqref{passage-weighted}, it suffices to prove the same upper bound for $N( \h_\m -  V) $. Since $P$ commutes with $\h_\m$, upon setting $\Pi=P$ and $\eps=1$ in \eqref{schwarz-eps} we get
\begin{align}  \label{split-1}
N( \h_\m -  V) &\,  \leq\,  N(P \, \h_\m P -  2PVP)   + N(P ^\perp \h_\m P^\perp -  2P^\perp V P^\perp)\, , 
\end{align}
where all the operators act on the weighted space $L^2(\R^2;(1+|x|)^{- 2\alpha}\,dx ) $. We estimate the terms on the right hand side individually. For the second term we have 

\begin{proposition} \label{prop-Hm-perp-1}
Let $B$ satisfy \eqref{ass-B}.  Assume that $V \in L^1(\R_+, L^p(\Sph))$ for some $p>1$. Then 
there exists a constant $C=C(B,p)$ such that 
\begin{equation}  \label{hm-perp-upperb}
N(P ^\perp\, \h_\m\, P^\perp -  P^\perp\, V\, P^\perp)\, \leq\, C\, \| V \|_{1,p} \, .
\end{equation}
\end{proposition}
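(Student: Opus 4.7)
My strategy is to combine a Hardy-type lower bound on $P^\perp \h_\m P^\perp$ with the Laptev--Netrusov estimate of Theorem~\ref{thm-ln}. Since $\h_\m$ preserves angular momentum, $P^\perp$ commutes with $\h_\m$, and it suffices to work sector by sector on $\ran P^\perp$, which consists of those $v=\sum_m v_m(r) e^{im\theta}$ with $v_m\equiv 0$ for $m=0,1,\ldots,n$. In each remaining sector one finds
\[
\Q_\m^{(m)}[v_m] \,=\, 2\pi \int_0^\infty (1+r)^{-2\alpha}\,\bigl|v_m'(r) - \tfrac{m}{r} v_m(r)\bigr|^2 r\, dr,
\]
which, after the ground-state substitution $v_m = r^m u$, reduces to $2\pi \int_0^\infty (1+r)^{-2\alpha} r^{2m+1} |u'|^2\,dr$. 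A weighted one-dimensional Hardy inequality of Muckenhoupt type---with constant controlled by $\min\{m^2,(m-\alpha)^2\}$, hence uniformly bounded below by $\min\{1,(n+1-\alpha)^2\}>0$ for $m\in\ran P^\perp$---then yields, after summing over sectors and retaining a small fraction of the kinetic term,
\[
\Q_\m[P^\perp v] \,\geq\, c_\alpha \int_{\R^2} \Bigl(|\nabla(P^\perp v)|^2 + \frac{|P^\perp v|^2}{|x|^2}\Bigr)(1+|x|)^{-2\alpha}\, dx,
\]
with $c_\alpha=c_\alpha(B)>0$. This is the Hardy-type bound alluded to in the strategy section as \eqref{p-perp-lowerb}.

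Once the Hardy bound is in hand, the next step is to transfer from the weighted Hilbert space to standard $L^2(\R^2)$ via the unitary $U v = (1+|x|)^{-\alpha}\, v$, which commutes with multiplication by $V$ and with the angular projections, and which converts the weighted Hardy estimate into an operator inequality of the form $U P^\perp \h_\m P^\perp U^{-1} \geq c_\alpha'\bigl(-\Delta + b|x|^{-2}\bigr)$ on the image of $UP^\perp U^{-1}$ in $L^2(\R^2)$ (the additional first-order terms generated by the conjugation are lower order and absorbable by the Hardy term since $m\neq 0$ on $\ran P^\perp$). The min--max principle followed by Theorem~\ref{thm-ln} then gives
\[
N\bigl(P^\perp \h_\m P^\perp - P^\perp V P^\perp\bigr) \,\leq\, N\bigl(c_\alpha'(-\Delta + b|x|^{-2}) - V\bigr)_{L^2(\R^2)} \,\leq\, C(B,p)\, \|V\|_{1,p},
\]
which is the claimed estimate.

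The main obstacle is the Hardy inequality on the sectors $m \geq n+1$, where the negative cross term arising from the weight $(1+r)^{-2\alpha}$ obstructs a direct bound. The pointwise estimate on the effective radial potential
\[
V_m(r) \,=\, \frac{m^2}{r^2} - \frac{\alpha(2m+1)}{r(1+r)} + \frac{\alpha(\alpha+1)}{(1+r)^2}
\]
does \emph{not} suffice: for $m=n+1$ and $\alpha$ slightly below $n+1$, $V_m$ becomes locally negative. The inequality must therefore be proved at the operator level, either by verifying the Muckenhoupt criterion for the pair $\bigl((1+r)^{-2\alpha} r^{2m+1},\, (1+r)^{-2\alpha} r^{2m-1}\bigr)$ after the substitution $v_m=r^m u$---which holds because $m > \alpha$ on $\ran P^\perp$ makes $\int_R^\infty (1+r)^{2\alpha}/r^{2m+1}\,dr$ converge---or by integrating by parts the cross term into the form $-4\pi m\int (1+r)^{-2\alpha} v_m v_m'\,dr$ and absorbing it via Cauchy--Schwarz against small fractions of the two surviving positive contributions. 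The sectors $m\leq -1$ are handled more easily, since after expansion all three terms of $\Q_\m^{(m)}$ are positive and the bound is immediate with constant $m^2\geq 1$.
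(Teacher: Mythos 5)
Your proposal is correct and takes essentially the same route as the paper's proof: an angular-momentum decomposition on $\ran P^\perp$, the ground-state substitution with Muckenhoupt-type weighted Hardy inequalities giving sector constants of order $(m-\alpha)^2$ resp.\ $m^2$, recombination into the weighted bound \eqref{p-perp-lowerb}, conjugation by $(1+|x|)^{\pm\alpha}$ to pass to $L^2(\R^2)$ with the generated potential absorbed by a fraction of the Hardy term, and finally the min--max principle together with Theorem \ref{thm-ln}. The only caveat is that your remark that the sector constants need merely be uniformly positive understates what is actually used: the $m^2$-growth of $\min\{m^2,(m-\alpha)^2\}$ is what lets one recover the full weighted gradient as in \eqref{Q-lowerb1}, which your own constant estimate does provide.
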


\begin{proof}
We shall prove that 
\begin{equation} \label{eq-enough}
\langle v,  P^\perp\, \h_\m\, P^\perp  v \rangle  \,  \gtrsim_{\, \alpha} \,  \int_{\R^2} (1+|x|)^{-2\alpha}\big( |\nabla v|^2+   |x|^{-2} |v|^2\big)\,dx \, , \qquad v\in D_\m(\alpha).
\end{equation}

We will then show that this in turn implies
\begin{equation*} 
\langle \psi , \U^* P^\bot \mathcal H_\m P^\bot \U\,  \psi  \rangle  \, \gtrsim_{\, \alpha}\,  \int_{\R^2} \Big (|\nabla \psi|^2 + \frac{|\psi|^2}{|x|^2} \Big)\,dx \, ,
\end{equation*}
where $\U: L^2(\R^2,dx) \to L^2(\R^2,(1+|x|)^{-2\alpha}dx)$ is a unitary operator. The statement of the proposition then follows upon an application of Theorem \ref{thm-ln}.

By density it suffices to prove the estimate \eqref{eq-enough} for all $v\in C_0^\infty(\R^2)$. Let $\varphi= P^\perp\, v$. First of all, let us note 
that since $\varphi$ is orthogonal to the space of radial functions, the well-known Hardy inequality, see e.g.~\cite{bl}, implies 
\begin{align} \label{hardy-orth}
\int_{\R^2}  (1+|x|)^{-2\alpha} |\nabla \varphi |^2\,dx
& \geq  \int_0^\infty (1+r)^{-2\alpha} \, r \int_0^{2\pi}  | r^{-1} \partial_\theta \varphi|^2 \, d\theta\, dr \,  \geq \, \int_{\R^2} (1+|x|)^{-2\alpha}   |x|^{-2} |\varphi|^2\,dx \,.
\end{align}
Now let
\begin{equation}  \label{fi-m}
\varphi_m(r) = \frac{1}{2\pi} \int_0^{2\pi}\!\! e^{-im\theta}\, \varphi(r,\theta)\, d\theta 
\end{equation}
denote the Fourier coefficients of $\varphi$. 
By a direct calculation, similar to the one in \cite[Sec.~10]{weidl}, we then get 
\begin{equation} \label{h-fourier}
\begin{aligned}  
 \langle \varphi,   \h_\m \varphi \rangle = \Q_-[\varphi] & = \sum_{m\in\Z} \int_0^\infty (1+r)^{-2\alpha}  \big | \varphi'_m(r) - \frac{m\, \varphi_m(r)}{r} \big|^2\, r\, dr  \\[5pt]
&=   \sum_{m\in\Z}   \int_0^\infty (1+r)^{-2\alpha}\,    r^{1+2m}\, \big | \partial_r( r^{-m}\, \varphi _m(r))\big|^2\, dr \, .
\end{aligned}
\end{equation} 
Let us consider the integrals in the sum on the right hand side. 

First, suppose that $m<0$. After an integration by parts, we get
\begin{align} \label{m<0}
\int_0^\infty (1+r)^{-2\alpha}\,   \big | \varphi_m' - \frac{m\, \varphi_m}{r} \big|^2\, r\, dr & = \int_0^\infty (1+r)^{-2\alpha}\,   \Big( |\varphi'_m|^2 + \frac{m^2\, |\varphi_m|^2}{r^2} -2\alpha m \frac{ |\varphi_m|^2}{(1+r)r} \Big)\, r\, dr
\nonumber  \\[5pt]
& \geq \int_0^\infty (1+r)^{-2\alpha}\,   \Big( |\varphi'_m|^2 + \frac{m^2\, |\varphi_m|^2}{r^2} \Big)\, r\, dr.
\end{align}

If $m>\alpha$, we let $g_m= r^{-m} \varphi _m$. Then $\liminf_{t\to\infty} |g_m(t)| =0$ and  Theorem \ref{thm-class-1} applied with $U(t) = t^{2m +1} (1+t)^{-2\alpha}$ 
and $W(t) = t^{-2}\, U(t)$ gives 
\begin{equation}
\int_0^\infty (1+r)^{-2\alpha}\,  r^{2m+1}\, |g'_m(r)|^2\, dr\, \gtrsim_{\, \alpha} \, (m-\alpha)^2  
\int_0^\infty (1+r)^{-2\alpha}\,  r^{2m-1}\, |g_m(r)|^2\, dr .
\end{equation}
Since  $(m-\alpha)^2  \gtrsim_{\, \alpha}  m^2$ for all $m\geq n+1$, this implies
\begin{equation}
\int_0^\infty  (1+r)^{-2\alpha}  \big | \varphi'_m(r) - \frac{m\, \varphi_m(r)}{r} \big|^2\, r\, dr \,  \gtrsim_{\, \alpha} \,  m^2  
\int_0^\infty (1+r)^{-2\alpha}\,   \frac{|\varphi_m(r)|^2}{r^2} \, r\, dr .
\end{equation}
Using the inequality
\begin{equation} \label{young}
(a+b)^2 + \eps^2 \, b^2 \,  \gtrsim_{\, \eps} a^2 + b^2,
\end{equation}
we deduce that 
\begin{equation} \label{m>n}
\int_0^\infty  (1+r)^{-2\alpha}  \big | \varphi'_m(r) - \frac{m\, \varphi_m(r)}{r} \big|^2\, r\, dr \,  \gtrsim_{\, \alpha} \, 
\int_0^\infty (1+r)^{-2\alpha}\, \Big(  |\varphi'_m(r) |^2+ \frac{m^2 |\varphi_m(r)|^2}{r^2}\Big) \, r\, dr\,  . 
\end{equation}

Since, by definition of $P^\perp$,
\begin{equation}  \label{fi-m-zeros}
\varphi_m = 0 \qquad \forall \, m\in \{0,1,\dots, n\},
\end{equation} 
equations \eqref{m>n} and \eqref{m<0} combined with the Parseval identity imply
\begin{equation}  \label{Q-lowerb1}
 \Q_-[\varphi]  \, \gtrsim_{\, \alpha} \,  \sum_{m\in\Z} \int_0^\infty (1+r)^{-2\alpha}   \Big( |\varphi'_m|^2 + \frac{m^2\, |\varphi_m|^2}{r^2} \Big)\, r\, dr  =  \int_{\R^2}  (1+|x|)^{-2\alpha}\,  |\nabla \varphi |^2\,dx\, .
 \end{equation}
Note that using the obvious upper bound $(a+b)^2 \leq 2 a^2 + 2b^2$, we could replace here $\gtrsim_{\, \alpha}$ by $\asymp_{\,\alpha}$. In view of \eqref{hardy-orth}, the estimate \eqref{Q-lowerb1} proves \eqref{eq-enough}. 

To proceed, we consider the unitary operator $\U: L^2(\R^2,dx) \to L^2(\R^2,(1+|x|)^{-2\alpha}dx)$ given by  
$\U \psi=  (1+|x|)^{\alpha} \psi$. Note that  $\U$ commutes with $P^\perp$. Hence by \eqref{Q-lowerb1} and \eqref{hardy-orth},
\begin{equation} \label{no-weight}
\langle \psi , \U^* P^\bot \mathcal H_\m P^\bot \U\,  \psi  \rangle  \, \gtrsim_{\, \alpha}\,  \int_{\R^2} (1+|x|)^{-2\alpha} |\nabla ((1+|x|)^\alpha \psi)|^2\,dx \, \geq \, \int_{\R^2} \frac{|\psi|^2}{|x|^2} \,dx \, .
\end{equation} 
Meanwhile, integration by parts shows that
\begin{align*}
		\int_{\R^2} (1+|x|)^{-2\alpha} |\nabla ((1+|x|)^\alpha \psi)|^2\,dx
		& = \int_{\R^2} (|\nabla \psi|^2 - \alpha (1+|x|)^{-2} (|x|^{-1}-\alpha ) |\psi|^2 )\,dx \,.
	\end{align*}
Combining this with \eqref{no-weight} we find, for any $\eps\in[0,1]$,
\begin{align*}
		\int_{\R^2} (1+|x|)^{-2\alpha} |\nabla ((1+|x|)^\alpha \psi)|^2\,dx & \geq\,  \eps \int_{\R^2} |\nabla \psi|^2 \,dx  \\[4pt]
		&\quad + \int_{\R^2} ((1-\eps)|x|^{-2} - \eps \alpha (1+|x|)^{-2} (|x|^{-1}-\alpha )) |\psi|^2 \,dx \,.
\end{align*}
As in \cite{FK} it follows that upon setting 	
$$
\eps = \big[ \sup_{r>0} ( 1+ \alpha (1+r)^{-2} r(1-\alpha r) ) \big]^{-1}
$$
we have
$$
(1-\eps)|x|^{-2} - \eps \alpha (1+|x|)^{-2} (|x|^{-1}-\alpha )\geq 0
	\qquad \text{for all}\ x\in\R^2 \,.
$$
Altogether we thus get
\begin{equation} \label{p-perp-lowerb}
\langle \psi , \U^* P^\bot \mathcal H_\m P^\bot \U\,  \psi  \rangle  \, \gtrsim_{\, \alpha}\,  \int_{\R^2} \Big (|\nabla \psi|^2 + \frac{|\psi|^2}{|x|^2} \Big)\,dx \, .
\end{equation}
Hence there exists a constant $c_\alpha>0$ such that 
\begin{align*}
 N(P ^\perp \h_\m P^\perp - P^\perp VP^\perp)_{L^2(\R^2;(1+|x|)^{- 2\alpha}\,dx )} \, & \leq \,  N(P ^\perp ( -\Delta + |x|^{-2} -c_\alpha V)P^\perp)_{L^2(\R^2)}  \\[5pt]
 & \leq N( -\Delta + |x|^{-2} -c_\alpha V)_{L^2(\R^2)} \, .
\end{align*}
Inequality \eqref{hm-perp-upperb} now follows from Theorem \ref{thm-ln}.
\end{proof}

Next we consider the first term on the right hand side of \eqref{split-1}. Since $\h_\m$ commutes with $P_m$, equation \eqref{V-positive} and the Schwarz inequality imply 
\begin{equation} \label{split-n}
P\, \h_\m\, P-  P\, V\, P  \geq  \sum_{m=0}^n  P_m \h_\m P_m - c_n \sum_{m=0}^n P_m V P_m, \qquad c_n = 1+\frac{n(n+1)}{2}\, .
\end{equation}
Now, for any $u\in D_-(\alpha)$ we have 
\begin{equation} 
(P_m V P_m u)(r,\theta) = \frac{e^{im\theta}}{2\pi}\, u_m(r)\, \V(r),
\end{equation}
where 
\begin{equation*} 
u_m(r) = \frac{1}{2\pi}\int_0^{2\pi} e^{-im\theta} u(r,\theta)\, d\theta\, 
\end{equation*}
is the $m$-th Fourier coefficient of $u$.
Combined with inequality \eqref{split-n}, this gives 
\begin{equation} \label{upperb-PHP}
N(P \, \h_\m P -  PVP)_{L^2(\R^2;(1+|x|)^{- 2\alpha}\,dx )}  \, \leq\, \sum_{m=0}^n N( h_m^\m -  c_n \V)_{L^2(\R_+; (1+r)^{-2\alpha}\, r dr)}
\end{equation} 
Here, $h_m^-$ is the operator associated in $L^2(\R_+; (1+r)^{-2\alpha}\, r dr)$ with the quadratic from
\begin{equation} \label{qm-form}
  \int_0^\infty (1+r)^{-2\alpha}  \big | v'(r) - \frac{m v}{r} \big|^2\, r\, dr  =  \int_0^\infty (1+r)^{-2\alpha}\,    r^{1+2m}\, \big | \partial_r( r^{-m}\, v(r))\big|^2\, dr 
\end{equation} 
on the form domain
\begin{equation} 
\big\{  v\in H^1_{\rm loc}(\R_+):\, \int_0^\infty (1+r)^{- 2\alpha}\,(| v'|^2 +|v|^2)\, r\, dr < \infty \big\}.
\end{equation}
Now we impose a Dirichlet condition at $r=1$. Since this is a rank one perturbation of the resolvent, we conclude that 
\begin{align} \label{DBC-1}
0\leq N( h_m^- -  \V)_{L^2(\R_+; (1+r)^{-2\alpha}\, r dr)} -  N( \mathfrak{h}_m^- -  \V)_{L^2((0,1); (1+r)^{-2\alpha}\, r dr)}    -  N( \mathfrak{h}_m^- -  \V)_{L^2((1,\infty); (1+r)^{-2\alpha}\, r dr)}  &\leq 1,
\end{align} 
where the operator $\mathfrak{h}_m^-$ acts in $L^2((0,1); (1+r)^{-2\alpha}\, r dr)$ respectively $L^2((1,\infty); (1+r)^{-2\alpha}\, r dr)$ as $h_m^-$ with the additional Dirichlet boundary condition at $r=1$. Replacing the integral weight $ (1+r)^{-2\alpha}$ by $1$ on $(0,1)$ and by $r^{-2\alpha}$ on $(1,\infty)$, we find that 
\begin{equation} \label{reduced-weights}
\begin{aligned} 
N( h_{m,1}^- -  4^{-\alpha}\, \V)_{L^2((0,1); \, r dr)} \, \leq\, N( \mathfrak{h}_m^- -  \V)_{L^2((0,1); (1+r)^{-2\alpha}\, r dr)}  \,& \leq \, N( h_{m,1}^- -  4^\alpha\, \V)_{L^2((0,1); \, r dr)}  \\[8pt]
N( h_{m,2}^- -  4^{-\alpha}\, \V)_{L^2((1,\infty); \, r^{1-2\alpha} dr)} \leq N( \mathfrak{h}_m^- -  \V)_{L^2((1,\infty); (1+r)^{-2\alpha}\, r dr)}  \, &\leq \, N( h_{m,2}^- -  4^\alpha\, \V)_{L^2((1,\infty); \, r^{1-2\alpha} dr)}
\end{aligned}
\end{equation} 
where the operators $h_{m,1}^-$ and $h_{m,2}^-$ are associated with quadratic forms 
\begin{equation} \label{q-forms-dirichlet}
\begin{aligned} 
q_{m,1}^-[v]  &=  \int_0^1   \big | \partial_r( r^{-m}\, v(r))\big|^2\, r^{1+2m}\, dr ,  \qquad\ \ v \in H^1((0,1), r dr), \qquad  \ \quad v(1)=0
\\[4pt]
q_{m,2}^-[v]  &=  \int_1^\infty   \big | \partial_r( r^{-m}\, v(r))\big|^2\, r^{1+2m-2\alpha}\, dr ,  \quad v\in H^1((1,\infty), r^{1-2\alpha} dr), \  \ v(1)=0.
\end{aligned}
\end{equation}

It remains to estimate the counting functions of the operators $h_{m,1}^-$ and $h_{m,2}^-$ which appear on the right side of    \eqref{reduced-weights}. The next lemma provides an upper bound for 
$N( h_{m,1}^- -   \V)_{L^2((0,1); \, r dr)}$.

\begin{lemma}  \label{lem-01}
Let $0\leq \V\in L^1((0,1); |\log r| r dr)$ and $m\in\Z$. Then 
\begin{align} 
 N( h_{0,1}^- -    \V)_{L^2((0,1); \, r dr)} &  \, \lesssim\, \int_0^1  \V(r)\, |\log r|\, r\, dr\, \label{m=0} \\[5pt]
  N( h_{m,1}^- -    \V)_{L^2((0,1); \, r dr)} &  \, \lesssim_{\, m}\, \int_0^1  \V(r)\,  r\, dr\, \qquad\qquad \forall\ m\neq 0 \label{m-neq-0}.
\end{align}
\end{lemma}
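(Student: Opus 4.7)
The plan is to combine the Birman-Schwinger principle with an explicit computation of the Green's function of $h_{m,1}^-$. The isometry $w = r^{-m}v$ maps $L^2((0,1),\, r\,dr)$ onto $L^2((0,1),\, r^{1+2m}\,dr)$ and converts $h_{m,1}^-$ into the Sturm-Liouville operator $-\frac{1}{r^{1+2m}}(r^{1+2m} w')'$ with $w(1)=0$ and the regularity at $r=0$ inherited from the form domain. Since the Dirichlet condition at $r=1$ makes this operator strictly positive and $\V\ge 0$ by \eqref{V-positive}, the Birman-Schwinger principle together with the elementary bound $\#\{\mu_j\ge 1\}\le \tr K$ for positive compact $K$ yields
\begin{equation*}
N\bigl(h_{m,1}^- - \V\bigr)\,\le\, \tr\bigl(\V\,(h_{m,1}^-)^{-1}\bigr)\,=\, \int_0^1 \V(r)\,K(r,r)\,r^{1+2m}\,dr,
\end{equation*}
where $K(r,s)$ is the integral kernel of the resolvent on the weighted space.

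The kernel $K$ is determined by the ODE $-\bigl(r^{1+2m} K_r\bigr)'=\delta(r-s)$ with $K(1,s)=0$ and the required behavior at the origin. For $m=0$ the two homogeneous solutions are $1$ and $\log r$; selecting the bounded branch on $r<s$, the Dirichlet branch on $r>s$, and imposing the unit jump in $r K_r$ yields $K(r,s)=-\log\max(r,s)$, so that $K(r,r)=-\log r$ and inequality \eqref{m=0} follows immediately from the trace identity above.

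For $m\ne 0$ the homogeneous solutions are $1$ and $r^{-2m}$. When $m\ge 1$ only the constant is admissible near the origin, whereas when $m\le -1$ only the branch $r^{-2m}=r^{2|m|}$ yields a $v=r^m w$ lying in $L^2(r\,dr)$ near $0$. In either case a routine matching argument produces the closed-form expression
\begin{equation*}
K(r,r)\,r^{1+2m}\,=\,\frac{r-r^{1+2|m|}}{2|m|}\,\le\,\frac{r}{2|m|},
\end{equation*}
from which \eqref{m-neq-0} follows with constant of order $|m|^{-1}$. The only delicate point, and what I expect to be the main bookkeeping obstacle, is the correct identification of the admissible branch at $r=0$ for both signs of $m$; the remainder is a standard one-dimensional Sturm-Liouville computation, and the trace-class property of the Birman-Schwinger operator needed to justify the trace bound is automatically secured by the explicit pointwise estimate on $K(r,r)\,r^{1+2m}$ together with the integrability hypotheses on $\V$.
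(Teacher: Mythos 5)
Your proof is correct, and for $m\neq 0$ it follows a genuinely different route from the paper. The paper uses an explicit resolvent kernel only in the case $m=0$ (Bessel functions at energy $-\kappa^2$, then $\kappa\to 0$, positive definiteness of the limit kernel, and the trace bound); for $m\neq 0$ it instead bounds $q_{m,1}^-$ from below by the radial form with centrifugal term $m^2/r^2$ via a weighted Hardy inequality (Appendix B) and then invokes the Laptev--Netrusov bound \cite{ln}. You treat all $m$ at once by solving the zero-energy equation $(r^{1+2m}w')'=0$ directly. Your branch selection at $r=0$ is right in all three regimes: for $m=0$ the endpoint is limit circle and the finite-energy requirement of the form domain excludes $\log r$; for $m\geq 1$ and $m\leq -1$ it is limit point, and the $L^2((0,1),r\,dr)$ condition on $v=r^m w$ forces the branch you name. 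I checked the matching computation: it indeed gives $K(r,r)\,r^{1+2m}=(r-r^{1+2|m|})/(2|m|)$ for $m\neq0$ and $r|\log r|$ for $m=0$, so the trace bound yields both \eqref{m=0} and \eqref{m-neq-0}. What your route buys is a unified, elementary argument with an explicit constant $(2|m|)^{-1}$, uniform (in fact decaying) in $m$, instead of the unspecified $m$-dependence inherited from \cite{ln}; what the paper's route buys for $m\neq 0$ is that it avoids any Green's function computation and branch bookkeeping by outsourcing the work to a known CLR-type theorem.

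Two points you gloss over, though at roughly the paper's own level of brevity. First, passing from the Birman--Schwinger principle to $N(h_{m,1}^--\V)\le\tr\bigl(\sqrt{\V}\,(h_{m,1}^-)^{-1}\sqrt{\V}\bigr)$ uses strict positivity of $h_{m,1}^-$; this is true, but for $m\le-1$ it itself requires a weighted Hardy inequality (the Dirichlet condition at $r=1$ alone is not an argument), or one can avoid it entirely by working with $(h_{m,1}^-+\kappa^2)^{-1}$ and letting $\kappa\downarrow0$ by monotonicity, as the paper does. Second, the identity $\tr\bigl(\sqrt{\V}(h_{m,1}^-)^{-1}\sqrt{\V}\bigr)=\int_0^1\V(r)K(r,r)\,r^{1+2m}dr$ for merely measurable $\V$ rests on the kernel being continuous and positive semi-definite with integrable diagonal; this is exactly where the paper cites \cite[Thm.~2.12]{si-tr} together with its Appendix A lemmas. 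In your setting positive semi-definiteness is immediate, since the kernel represents the nonnegative operator $\sqrt{\V}(h_{m,1}^-)^{-1}\sqrt{\V}$, so the same citation closes the step; it deserves a sentence rather than the word ``automatically''.
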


\begin{proof} 
Let $v\in C_0^\infty(0,1)$. 
An application of Corollary \ref{cor-class-1} with $f(r) = r^{-m}\, v(r), \, U(r)= r^{1+2m}$ and $W(r) =r^{-2}\, U(r)$ gives
$$
q_{m,1}^-[v] \, \gtrsim_{\, m}\,  \int_0^1 |v|^2\, r^{-1}\, dr\, .
$$
Since
$$
q_{m,1}^-[v] = \int_0^1  \big | v'(r) - \frac{m\, v}{r} \big|^2\, r\, dr,
$$
we deduce from \eqref{m>n} that 
\begin{equation} 
q_{m,1}^-[v] \, \gtrsim_{\, m}\,  \int_0^1 |v'|^2\, r\, dr\, .
\end{equation}
It follows that there exists a constant a constant $c_m$ such that 
\begin{equation}  \label{hm1-upperb}
 N( h_{m,1}^- -    \V)_{L^2((0,1); \, r dr)}  \, \leq \,  N\big ( -r^{-1} \partial_r r \partial_r +\frac{m^2}{r^2}- c_m   \V\big)_{L^2((0,1); \, r dr)} \, .
\end{equation} 
Moreover, the unitary mapping $\U: L^2((0,1); r dr) \to L^2((0,1),  dr)$ given by $\U v= \sqrt{r}\, v$ shows that  
\begin{equation}  \label{unitary-transf}
N\big ( -r^{-1} \partial_r r \partial_r +\frac{m^2}{r^2}- c_m   \V\big)_{L^2((0,1); \, r dr)} =   N\big (   -\partial_r^2 - \frac{1}{4r^2}  +\frac{m^2}{r^2} -   \V\big)_{L^2((0,1);  dr)}\, ,
\end{equation} 
where the operator on the right hand side is subject to Dirichlet boundary conditions at $r=0$ and $r=1$. Since the operator $ -\partial_r^2 - \frac{1}{4r^2}$ coincides with the 
radial part of the two-dimensional Laplacian restricted to functions which vanish for $|x|\geq 1$,  the upper bound \eqref{m-neq-0}  follows from 
\cite[Thm.~1.2]{ln} and equations \eqref{hm1-upperb} and \eqref{unitary-transf}. 

If  $m=0$, then a standard calculation, see e.g.~\cite[Sec.~2.3]{ls},  shows that 
\begin{equation}
\Big(-\partial_r^2 -\frac{1}{4r^2} +\kappa^2\Big)^{-1}(r,r')= \sqrt{r r'}\, 
I_0(\kappa r) \big[  K_0(\kappa r') - \beta_\kappa I_0(\kappa r')]   \qquad  0 < r\leq r'\leq  1,
\end{equation}
where $K_\nu$ and $I_\nu$ denote the modified  Bessel functions, see \cite[Sec.~9.6]{as}, and where
\begin{equation} \label{beta-kappa}
\beta_\kappa = \frac{K_0(\kappa)}{I_0(\kappa)}\, .
\end{equation}
Note that, since $K_\nu$ is decreasing and $I_\nu$ is increasing, $K_0(\kappa r') - \beta_\kappa I_0(\kappa r') \geq 0$ for all $0 < r'\leq  1$. 
The Birman-Schwinger operator has the integral kernel 
\begin{equation} 
\Big(\sqrt{ \V}\big (-\partial_r^2 -\frac{1}{4r^2} +\kappa^2 \big)^{-1} \sqrt{ \V}\Big) (r,r' ) = \sqrt{ \V(r)}  \sqrt{r r'}\, 
I_0(\kappa r) \big[  K_0(\kappa r') - \beta_\kappa I_0(\kappa r')]  \sqrt{ \V(r')}\, .
\end{equation}
From the asymptotic expansions of Bessel functions:
\begin{equation}  \label{KI-zero}
	\begin{aligned}
		K_\nu(z) & = 
		\begin{cases}
			-\log z +  C + \mathcal O(z^2 |\log z|) & \text{if}\ \nu=0 \,,\\[3pt]
			\left( \frac z2 \right)^{-\nu} \tfrac12\, \Gamma(\nu) + \mathcal O(z^{\min\{\nu, 2-\nu\}})& \text{if}\ \nu \neq 1 \,, \\[3pt]
			z^{-1} + \mathcal O(z|\log z|) & \text{if}\ \nu=1 \,, \\			
		\end{cases}
		\qquad\text{as}\ z\to 0 \,, \\[6pt]
		I_\nu(z) & = \left( \frac z2 \right)^\nu \Gamma(\nu+1)^{-1} + \mathcal O(z^{2+\nu}) \qquad\text{as}\ z\to 0 \,,
	\end{aligned}
\end{equation} 
see  \cite[Eqs.~9.6.12, 9.6.13]{as},  we deduce that 
\begin{equation} \label{kernel-K}
\lim_{\kappa\to 0} \Big(\sqrt{ \V}\big (-\partial_r^2 -\frac{1}{4r^2} +\kappa^2 \big)^{-1} \sqrt{ \V}\Big) (r,r' )  = - \sqrt{\V(r)}\ \sqrt{r r'}\,  \log(\max\{ r, r'\}) \,  \sqrt{\V(r')}
\end{equation}
for all $r,r' \in (0,1)$.  This kernel is positive definite on $(0,1)\times (0,1)$, see Lemma \ref{lem-positive-def_lem3-2}. Moreover, since 
$$
\int_0^1 \Big(\sqrt{\V}\big (-\partial_r^2 -\frac{1}{4r^2} \big)^{-1} \sqrt{\V}\Big) (r,r)\, dr < \infty, 
$$
by assumption on $\V$, it follows from \cite[Thm.~2.12]{si-tr} that the operator $K$ with integral kernel \eqref{kernel-K} is trace class in $L^2(0,1)$. Let us denote its eigenvalues by $\{\mu_j\}_{j\in\N}$. 
The Birman-Schwinger principle  then implies 
\begin{equation}\label{BS-op-trace}
\begin{aligned}
N\big (   -\partial_r^2 -\frac{1}{4r^2} -  \V\big)_{L^2((0,1);  dr)}\, & = \sum_{j: \, \mu_j \geq 1} 1\,  \leq\,   \sum_{j: \, \mu_j \geq 1}  \mu_j  \, \leq \,  \sum_{j\in\N} \mu_j = \tr (K) = \int_0^1 K(r,r)\, dr \\[4pt]
& = \int_0^1 \V(r)\, |\log r|\, r\, dr\, .
\end{aligned}
\end{equation}
Hence equation \eqref{m=0} follows by \eqref{hm1-upperb} and \eqref{unitary-transf}. 
\end{proof}

Now we  estimate $N( h_{m,2}^- -   \V)$, the counting function that appears in the second line of \eqref{reduced-weights}. We distinguish the cases $m<\alpha$ and $m=\alpha$. In the first case, we have

\begin{lemma}  \label{lem-1infty}
Let $0\leq \V \in L^1(\R_+; r dr)$, $\alpha >0$ and $m\in\mathbb{Z}$, $m< \alpha$. Then 
$$
 N( h_{m,2}^- -   \V)_{L^2((1,\infty); \, r^{1-2\alpha} dr)}  \, \lesssim_{\, \alpha, m}\, \int_0^\infty \V(r)\,  r\, dr\, . 
$$
\end{lemma}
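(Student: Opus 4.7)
The plan is to use a weight-removing unitary to identify $h_{m,2}^-$ with a standard one-dimensional Schr\"odinger operator on $(1,\infty)$ with an inverse-square potential, derive a Hardy-type lower bound (available precisely because $m<\alpha$ is strict), and then embed the resulting problem into the radial sector of a two-dimensional operator of the form covered by Theorem \ref{thm-ln}.

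Setting $\phi(r)=r^{(1-2\alpha)/2}v(r)$ defines a unitary $\U\colon L^2((1,\infty);r^{1-2\alpha}dr)\to L^2((1,\infty);dr)$. Starting from the expression for $q_{m,2}^-[v]$ in \eqref{q-forms-dirichlet} and performing one integration by parts (legitimate by $\phi(1)=0$ and density of $C_0^\infty((1,\infty))$ in the form domain) one obtains the identity
\begin{equation*}
 q_{m,2}^-[v] \,=\, \int_1^\infty |\phi'(r)|^2\,dr + \bigl[(m-\alpha)^2 - \tfrac14\bigr]\int_1^\infty \frac{|\phi(r)|^2}{r^2}\,dr,
\end{equation*}
together with $\int_1^\infty \V|v|^2 r^{1-2\alpha} dr=\int_1^\infty \V|\phi|^2 dr$. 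Since $m<\alpha$ one has $(m-\alpha)^2>0$, so combining the above with the one-sided Hardy inequality $\int_1^\infty|\phi'|^2 dr\ge \tfrac14\int_1^\infty|\phi|^2 r^{-2}dr$ (valid for $\phi(1)=0$ by zero-extension to $(0,1)$) yields a constant $c=c(m,\alpha)>0$ with
\begin{equation*}
 q_{m,2}^-[v] \,\ge\, c\int_1^\infty \bigl(|\phi'(r)|^2+|\phi(r)|^2 r^{-2}\bigr)\,dr .
\end{equation*}
By the variational principle and a rescaling $\V\mapsto c^{-1}\V$, this yields $N(h_{m,2}^- -\V)\le N\bigl(-\partial_r^2 +r^{-2}-c^{-1}\V\bigr)_{L^2((1,\infty);dr)}$, where the right-hand operator carries a Dirichlet condition at $r=1$.

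To conclude, extend both $\phi$ and $\V$ by zero onto $(0,1)$, writing $\tilde\V=\V\,\id_{(1,\infty)}$. The form of $-\partial_r^2+r^{-2}-c^{-1}\V$ is then isometrically embedded into that of $-\partial_r^2+r^{-2}-c^{-1}\tilde\V$ on $L^2((0,\infty);dr)$, so the counting function can only grow. The standard radial unitary $\psi\mapsto (2\pi)^{-1/2}|x|^{-1/2}\psi(|x|)$ identifies this half-line operator with the restriction to radial functions of $-\Delta+\tfrac{5/4}{|x|^2}-c^{-1}\tilde V$ on $L^2(\R^2)$, where $\tilde V(x)=\tilde\V(|x|)$; dropping the radial restriction again only increases the counting function, so Theorem \ref{thm-ln} applied with $b=5/4$ and any fixed $p>1$ gives
\begin{equation*}
 N(h_{m,2}^- -\V) \,\lesssim_{\alpha,m}\, \|\tilde V\|_{1,p} \,=\,(2\pi)^{1/p}\int_1^\infty \V(r)\,r\,dr,
\end{equation*}
which proves the claim. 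The delicate point is the Hardy lower bound: when $|m-\alpha|\ge 1/2$ the effective inverse-square coefficient $(m-\alpha)^2-1/4$ is non-negative and inequality \eqref{young} suffices, whereas for $0<|m-\alpha|<1/2$ the coefficient is attractive with strength strictly less than the Hardy constant $1/4$, and one must split $\int|\phi'|^2$ carefully between absorbing the attraction via Hardy and retaining a positive remainder dominating $\int|\phi|^2 r^{-2}$. The strict inequality $m<\alpha$ is precisely what produces a positive gap and guarantees $c(m,\alpha)>0$.
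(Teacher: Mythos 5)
Your argument is correct, but its second half takes a genuinely different route from the paper. Both proofs share the first step: a weight-removing substitution together with a Hardy-type absorption of the cross term, where the strict inequality $m<\alpha$ is essential (the paper applies Theorem \ref{thm-class-2} to $g_m=r^{-m}v$ with the weights $U=r^{1+2m-2\alpha}$, $W=r^{-2}U$ and then uses \eqref{young}; you instead use the exact identity $q_{m,2}^-[v]=\int_1^\infty|\phi'|^2\,dr+\bigl((m-\alpha)^2-\tfrac14\bigr)\int_1^\infty|\phi|^2r^{-2}\,dr$ for $\phi=r^{(1-2\alpha)/2}v$ plus the Dirichlet Hardy inequality at $r=1$, which is a correct computation). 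From there the paper extends to $\R_+$, transforms to $-\partial_r^2+(\alpha^2-\tfrac14)r^{-2}$, computes the zero-energy Green's function explicitly via Bessel functions, verifies positive definiteness of the limiting kernel (Lemma \ref{lem-positive-def_lem3-3}), and bounds the counting function by the trace of the Birman--Schwinger operator, yielding the explicit constant $\tfrac{1}{2\alpha}$ in front of $\int_0^\infty\V\,r\,dr$. You instead bound the form from below by $c\bigl(-\partial_r^2+r^{-2}\bigr)$, embed into the radial sector of $-\Delta+\tfrac{5/4}{|x|^2}$ on $L^2(\R^2)$, and invoke the Laptev--Netrusov bound, Theorem \ref{thm-ln}. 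Your route is shorter here --- no Bessel asymptotics, no positive-definiteness lemma --- but it buries the constant in the (non-explicit) constant $C(b,p)$ of Theorem \ref{thm-ln}, and it depends on the gap $\alpha-m>0$ producing a genuinely repulsive effective inverse-square term; this is precisely why the same argument cannot reach the borderline case $m=\alpha$ of Lemma \ref{lem-1infty-bis}, where the paper's Green's-function/trace method is what generates the necessary logarithmic weight. The only point where you are brief is the sub-case $0<\alpha-m<\tfrac12$: the splitting you describe (retain a fraction $1-\theta$ of $\int|\phi'|^2$ and absorb the attractive part with Hardy, for any $\theta\in(1-4(\alpha-m)^2,1)$) does produce a constant $c(m,\alpha)>0$, so this is a routine completion rather than a gap.
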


\begin{proof}
We mimic the proof of Lemma \ref{lem-01}.
Let $v\in C_0^\infty(1,\infty)$. 
An application of Theorem \ref{thm-class-2}  with $f(r) = r^{-m}\, v(r), \, U(r)= r^{1+2m-2\alpha}$ and $W(r) =r^{-2}\, U(r)$ gives
$$
q_{m,2}^-[v] \, \gtrsim_{\, m}\,  \int_1^\infty |v|^2\, r^{-1-2\alpha}\, dr\, .
$$

Since
$$
q_{m,2}^-[v] = \int_1^\infty  \big | v'(r) - \frac{m\, v}{r} \big|^2\, r^{1-2\alpha}\, dr,
$$
we deduce from \eqref{young} that 
\begin{equation} 
q_{m,2}^-[v] \, \gtrsim_{\, m}\,  \int_1^\infty |v'|^2\, r^{1-2\alpha}\, dr\, .
\end{equation}
Hence there exists a constant $c'_m$ such that 
\begin{equation}  \label{hm2-upperb}
 N( h_{m,2}^- -   \V)_{L^2((1,\infty); \, r^{1-2\alpha} dr)}  \, \leq \,  N\big ( -r^{2\alpha-1}\partial_r\,  r^{-2\alpha+1} \partial_r- c'_m  \V\big)_{L^2((1,\infty); \, r^{1-2\alpha} dr)} \, .
\end{equation} 
By the variational principle, 
\begin{equation} \label{N-1}
 N( -r^{2\alpha-1}\partial_r\,  r^{-2\alpha+1} \partial_r -   \V)_{L^2((1,\infty); \, r^{1-2\alpha} dr)}  \, \leq\,  N( -r^{2\alpha-1}\partial_r\,  r^{-2\alpha+1} \partial_r -   \V)_{L^2(\R_+; \, r^{1-2\alpha} dr)},
\end{equation}
where the operator on the right hand side is subject to Dirichlet boundary condition at $r=0$. Meanwhile, using the mapping $v\mapsto r^{\frac 12 -\alpha} \, v$, 
which maps $L^2(\R_+; \, r^{1-2\alpha} dr)$ unitarily onto $L^2(\R_+;  dr)$, we infer that  
\begin{equation}\label{N-2}
N( -r^{2\alpha-1}\partial_r\,  r^{-2\alpha+1} \partial_r -   \V)_{L^2(\R_+; \, r^{1-2\alpha} dr)}  =   N\big (   -\partial_r^2 + (\alpha^2-1/4) r^{-2} -  \V\big)_{L^2(\R_+;  dr)}\, .
\end{equation} 
Similarly as in the proof of Lemma \ref{lem-01} we thus obtain
\begin{equation}\label{green-0-infty}
\Big(-\partial_r^2 + (\alpha^2-1/4 )  \frac{1}{ r^{2}} +\kappa^2\Big)^{-1}(r,r')= \sqrt{r r'}\ 
I_\alpha(\kappa r)\,  K_\alpha(\kappa r')   \qquad  0 < r\leq r' <  \infty. 
\end{equation}
Since 
\begin{equation}  \label{IK-gamma}
\lim_{\kappa\to 0} I_\alpha(\kappa r)\,  K_\alpha(\kappa r')   = \frac{\Gamma(\alpha)}{2 \Gamma(1+\alpha)} \left( \frac{r}{r'}\right)^\alpha = \frac{1}{2\alpha} \left( \frac{r}{r'}\right)^\alpha,
\end{equation}
 see  \eqref{KI-zero}, we conclude that
 \begin{equation} 
\lim_{\kappa\to 0} \Big(\sqrt{ \V}\big (-\partial_r^2 + (\alpha^2-1/4 )  \frac{1}{ r^{2}} +\kappa^2 \big)^{-1} \sqrt{ \V}\Big) (r,r' ) = \sqrt{ \V(r)}  \frac{\sqrt{r r'}}{2\alpha} \left(\min \Big\{ \frac{r}{r'}, \frac{r'}{r} \Big\}\right)^\alpha   \sqrt{ \V(r')}\, .
\end{equation} 
 
 This kernel is positive definite on $\R_+ \times \R_+$, see Lemma \ref{lem-positive-def_lem3-3}. As in \eqref{BS-op-trace} it follows from the Birman-Schwinger principle that 
\begin{align*}
N\big (   -\partial_r^2 +(\alpha^2-1/4 )  \frac{1}{ r^{2}} -  \V\big)_{L^2(\R_+;  dr)}\, & \leq  \,  \tr \Big (\sqrt{\V}\big (-\partial_r^2 + (\alpha^2-1/4) \frac{1}{ r^{2}}  \big)^{-1} \sqrt{\V}\, \Big ) = \frac{1}{2\alpha}  \int_0^\infty \V(r)\,  r\, dr\, .
\end{align*}
This in combination with \eqref{N-1} and \eqref{N-2} completes the proof.
\end{proof}

In the case $m= \alpha$ we find

\begin{lemma}  \label{lem-1infty-bis}
Let $0 \leq \alpha\in\Z$, and suppose that $0\leq \V\in L^1((1,\infty); (\log r )\, r \, dr)$. Then 
$$
 N( h_{\alpha,2}^- -   \V)_{L^2((1,\infty); \, r^{1-2\alpha} dr)}  \, \leq\, \int_1^\infty \V(r)\, (\log r)\, r \,  dr\, .
$$
\end{lemma}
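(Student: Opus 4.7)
The plan is to adapt the Birman-Schwinger argument used at the end of the proof of Lemma \ref{lem-01}, now for the critical case $m=\alpha$ on the exterior $(1,\infty)$.

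\emph{Reduction.} When $m=\alpha$, the weight $r^{1+2m-2\alpha}$ in \eqref{q-forms-dirichlet} reduces to $r$, so $q_{\alpha,2}^-[v] = \int_1^\infty |\partial_r(r^{-\alpha} v(r))|^2 r\, dr$. Setting $g(r) = r^{-\alpha} v(r)$, the map $g \mapsto r^\alpha g$ is a unitary isomorphism from $L^2((1,\infty), r\, dr)$ onto $L^2((1,\infty), r^{1-2\alpha} dr)$ that commutes with multiplication by $\V$. This reduces the problem to estimating $N(T - \V)_{L^2((1,\infty), r\, dr)}$, where $T = -r^{-1}\partial_r r \partial_r$ with Dirichlet boundary condition at $r=1$.

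\emph{Zero-energy Green's function.} I would then compute the resolvent $(T + \kappa^2)^{-1}$ explicitly. The solution of $(T + \kappa^2)u = 0$ vanishing at $r=1$ is proportional to $g_1(r) = K_0(\kappa) I_0(\kappa r) - I_0(\kappa) K_0(\kappa r)$, while the $L^2$-at-infinity solution is $g_2(r) = K_0(\kappa r)$. Using the Wronskian identity $I_0'(z) K_0(z) - I_0(z) K_0'(z) = 1/z$, one finds
\[
G_\kappa(r, r') = \frac{K_0(\kappa r_>)}{K_0(\kappa)}\bigl[K_0(\kappa) I_0(\kappa r_<) - I_0(\kappa) K_0(\kappa r_<)\bigr].
\]
The expansions \eqref{KI-zero} imply $K_0(\kappa r_>)/K_0(\kappa) \to 1$ as $\kappa \to 0^+$, while the leading $-\log\kappa$ singularities in the bracketed factor cancel to leave $\log r_<$. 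Hence $\lim_{\kappa\to 0^+} G_\kappa(r, r') = \log(\min\{r, r'\})$.

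\emph{Trace bound via Birman-Schwinger.} Arguing as at the end of the proof of Lemma \ref{lem-01}, the limit kernel is positive definite on $(1,\infty)^2$ thanks to the representation $\log\min\{r, r'\} = \int_1^\infty s^{-1}\, \id_{[s,\infty)}(r)\, \id_{[s,\infty)}(r')\, ds$, so the operator on $L^2((1,\infty), r\, dr)$ with kernel $\sqrt{\V(r)}\log(\min\{r,r'\})\sqrt{\V(r')}$ is positive and its diagonal integrated against $r\, dr$ equals $\int_1^\infty \V(r)(\log r)\, r\, dr$, finite by hypothesis. By \cite[Thm.~2.12]{si-tr} it is trace class with exactly that trace, and the Birman-Schwinger principle applied as in \eqref{BS-op-trace} gives $N(T - \V) \leq \int_1^\infty \V(r)(\log r)\, r\, dr$, which is the claim.

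The main obstacle is precisely the zero-energy limit of the resolvent kernel: since $T$ is critical (no spectral gap and no decaying zero-energy mode in $L^2((1,\infty), r\, dr)$), neither $g_1$ nor $g_2$ has a finite limit individually as $\kappa\to 0^+$. The logarithmic weight $\log r$ in the final bound arises precisely from the cancellation of divergent $-\log\kappa$ terms, giving the exterior analogue of the interior $m=0$ case treated in Lemma \ref{lem-01}.
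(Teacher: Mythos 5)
Your proposal is correct and follows essentially the same route as the paper: a unitary reduction of $q_{\alpha,2}^-$ to a one-dimensional Dirichlet problem on $(1,\infty)$, an explicit Sturm--Liouville resolvent in terms of $I_0,K_0$, the cancellation of the $-\log\kappa$ divergences in the $\kappa\to 0$ limit yielding the kernel $\log(\min\{r,r'\})$, and then positive definiteness plus the Birman--Schwinger trace bound. The only (cosmetic) difference is that you work in $L^2((1,\infty),r\,dr)$ with $-r^{-1}\partial_r r\partial_r$ rather than passing, as the paper does via $v\mapsto r^{1/2-\alpha}v$, to $-\partial_r^2-\tfrac{1}{4r^2}$ on $L^2((1,\infty),dr)$; your explicit integral representation of $\log\min\{r,r'\}$ even supplies the positive-definiteness argument the paper omits.
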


\begin{proof}
Let $v\in C_0^\infty(1,\infty)$. 
Integration by parts gives 
\begin{align*}
q_{\alpha,2}^-[v]  & = \int_1^\infty     \Big | v' - \frac{\alpha v}{r} \Big|^2\, r^{1-2\alpha}\, dr  = \int_1^\infty  \Big( |v'|^2 - \frac{\alpha^2\, |v|^2}{r^2} \Big)\,   r^{1-2\alpha}\, dr\, .
\end{align*}
As in the proof of Lemma \ref{lem-1infty} we apply the mapping $v\mapsto r^{\frac 12 -\alpha} \, v$ and deduce that
\begin{equation} \label{h-alpha-upperb}
N( h_{\alpha,2}^-  -  \V)_{L^2((1,\infty); \, r^{1-2\alpha} dr)}  =   N\big (   -\partial_r^2 - \frac{1}{4r^2}  -  \V\big)_{L^2((1,\infty);  dr)}\, .
\end{equation} 
Keeping in mind that the operator on the right hands side is subject to Dirichlet boundary condition at $r=1$, we 
calculate the integral kernel of the resolvent using again the Strum-Liouville theory. This gives
\begin{equation}\label{green-1-infty}
\big(-\partial_r^2 - \frac{1}{4r^2}  +\kappa^2\big)^{-1}(r,r')= \sqrt{r r'}\ 
\big [ I_0(\kappa r) -\beta^{-1}_\kappa \,  K_0(\kappa r)  \big ] K_0(\kappa r')   \qquad  1 < r\leq r' <  \infty, 
\end{equation}
with $\beta_\kappa$ given by \eqref{beta-kappa}. Note that $ I_0(\kappa r) -\beta^{-1}_\kappa \,  K_0(\kappa r)  \geq 0$ for all $\kappa>0$ and all $1\leq r $, since $K_\nu$ is decreasing, $I_\nu$ is increasing, and $ I_0(\kappa ) -\beta^{-1}_\kappa \,  K_0(\kappa ) =0$.  With the help of \eqref{KI-zero}  we then get 
$$
\lim_{\kappa\to 0} \Big(  \sqrt{\V} \big(-\partial_r^2 - \frac{1}{4r^2}  +\kappa^2\big)^{-1}  \sqrt{\V} \Big) (r,r') =  \sqrt{\V(r)}\sqrt{r r'}\ \log (\min\{ r, r'\}) \sqrt{\V(r')} \, .
$$
This kernel is positive definite on $(1,\infty) \times (1,\infty)$. We omit the proof as this can be proven similarly to Lemma \ref{lem-positive-def_lem3-2}.
As above, we then get
\begin{align*}
N\big (   -\partial_r^2 - \frac{1}{4r^2}  -  \V\big)_{L^2((1,\infty);  dr)}\, & \leq  \,  \tr \big ( \sqrt{\V}\big (-\partial_r^2 -\frac{1}{4r^2}    \big)^{-1} \sqrt{\V}\, \big )
=   \int_1^\infty \V(r)\,  (\log r)\, r \, dr\, .
\end{align*}
The claim now follows from equation \eqref{h-alpha-upperb}.
\end{proof}

Combining the previous three lemmas yields

\begin{proposition} \label{prop-H-1}
Let $B$ satisfy \eqref{ass-B}. 
\begin{enumerate} 
\item Let $0 <\alpha\notin\Z$. Assume that $V \in L_{\rm loc}^1(\R^2)$ and that  $V  \log |x  | \in L^1(\B_1)$. Then 
there exists a constant $C=C(B)$ such that 
\begin{equation} 
N(P \, \h_\m\, P-  P\, V\, P) \, \leq\, [\alpha] +1 + C \int_{\R^2}  V(x) \big(1+ \id_{\B_1}(x) | \log |x || \big)\, dx .
\end{equation}
\item Let $0 \leq \alpha\in\Z$. Assume that $V \in L^1(\R^2)$ and that  $V  \log |x  | \in L^1(\R^2)$. Then 
there exists a constant $C=C(B)$ such that 
\begin{equation} 
N(P \, \h_\m\, P-  P\, V\, P) \, \leq\, \alpha +1 + C \int_{\R^2}  V(x) \big(1+  | \log |x || \big)\, dx .
\end{equation}
\end{enumerate}
\end{proposition}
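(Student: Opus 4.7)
The plan is to assemble the ingredients already developed in this section. I would start from \eqref{upperb-PHP}, which reduces the counting function of $P\mathcal{H}_- P - PVP$ on the weighted plane to a sum $\sum_{m=0}^n N(h_m^- - c_n \mathcal{V})$ on $L^2(\mathbb{R}_+;(1+r)^{-2\alpha}\, r\, dr)$. For each $m\in\{0,\dots,n\}$ I would then apply the Dirichlet decoupling at $r=1$ from \eqref{DBC-1}: since this is a rank-one perturbation of the resolvent, it costs an additive $+1$ per channel, accumulating to the constant $n+1$ that appears in both claimed bounds. The intervals $(0,1)$ and $(1,\infty)$ are then treated independently.

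Next I would pass from $\mathfrak{h}_m^-$ to the simpler operators $h_{m,1}^-$ and $h_{m,2}^-$ by means of \eqref{reduced-weights}, replacing the weight $(1+r)^{-2\alpha}$ by $1$ on $(0,1)$ and by $r^{-2\alpha}$ on $(1,\infty)$; the resulting multiplicative factors $4^{\pm\alpha}$ depend only on $B$ and are absorbed into the constant $C(B)$. The interior piece is then handled uniformly by Lemma \ref{lem-01}: the $m=0$ channel contributes the weighted bound $\int_0^1 \mathcal{V}(r)|\log r|\, r\, dr$, while each channel $m\in\{1,\dots,n\}$ contributes only $\int_0^1 \mathcal{V}(r)\, r\, dr$. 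Thus the logarithmic singularity at the origin arises solely from the zero-angular-momentum sector.

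For the exterior piece the analysis depends on the placement of $\alpha$ relative to $\mathbb{Z}$. If $\alpha\notin\mathbb{Z}$, then $n=[\alpha]<\alpha$, so every $m\in\{0,\dots,n\}$ satisfies $m<\alpha$ and Lemma \ref{lem-1infty} applies uniformly, giving an unweighted bound $\int_0^\infty \mathcal{V}(r)\, r\, dr$ for each channel. This produces the first statement, in which the logarithmic weight is confined to $\mathbb{B}_1$. If $\alpha\in\mathbb{Z}$, then $n=\alpha$ and the top channel $m=\alpha$ is critical on $(1,\infty)$; here I would invoke Lemma \ref{lem-1infty-bis}, which introduces precisely the weight $\log r$ on $(1,\infty)$. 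This generates the global logarithmic factor in the second statement.

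Combining these estimates and rewriting the radial average via
\[
\int_0^\infty \mathcal{V}(r)\, g(r)\, r\, dr = \frac{1}{2\pi}\int_{\mathbb{R}^2} V(x)\, g(|x|)\, dx
\]
yields the two inequalities, with the additive constant $n+1$ equal to $[\alpha]+1$ in case (1) and to $\alpha+1$ in case (2). I expect no real difficulty: the substantive analysis sits in Lemmas \ref{lem-01}--\ref{lem-1infty-bis} and in the Hardy-type estimate underpinning \eqref{reduced-weights}. The only care required is the bookkeeping of the case distinction $\alpha\in\mathbb{Z}$ versus $\alpha\notin\mathbb{Z}$ together with the verification that the interval $m\leq n$ always avoids the critical index $m=\alpha$ in the non-integer case.
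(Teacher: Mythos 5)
Your proposal is correct and follows essentially the same route as the paper: the paper's proof of Proposition \ref{prop-H-1} likewise combines \eqref{upperb-PHP}, the Dirichlet decoupling \eqref{DBC-1}, the weight replacement \eqref{reduced-weights}, and Lemmas \ref{lem-01}, \ref{lem-1infty} (plus \ref{lem-1infty-bis} for $\alpha\in\Z$), with the per-channel $+1$ summing to $n+1$. Your bookkeeping of the case split and of the constants $c_n$, $4^{\pm\alpha}$ being absorbed into $C(B)$ matches the intended argument.
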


\begin{proof}
(1) In view of \eqref{V-m-eq} the result follows by combining equations \eqref{upperb-PHP}, \eqref{DBC-1} and \eqref{reduced-weights}  with Lemmas \ref{lem-01} and \ref{lem-1infty}. (2) Similarly as in (1), the result follows by combining equations \eqref{upperb-PHP}, \eqref{DBC-1} and \eqref{reduced-weights}  with Lemmas \ref{lem-01}, \ref{lem-1infty} and \ref{lem-1infty-bis}.
\end{proof}

We can now state the main result of this section.

\begin{proposition}\label{prop-H-}
Let $B$ satisfy \eqref{ass-B}.  Then we have
\begin{enumerate} 
\item Assume that $0<\alpha \not\in\Z$. Then for any $p>1$ there exist constants $C_1=C_1(B,p)$ and $C_2=C_2(B)$ such that 
\begin{equation} 
N(  H_\m\, -   V) \, \leq\, m(\alpha) + C_1\, \|V\|_{1,p} +C_2 \| V  \log |x|\|_{L^1(\!\B_1)} 
\end{equation}
for all  $V \in L^1(\R_+, L^p(\Sph))$ with  $V  \log |\cdot | \in L^1(\B_1)$.

\item Assume that $0\leq \alpha \in\Z$. Then for any $p>1$ there exist constants $\CC_1=\CC_1(B,p)$ and $\CC_2=\CC_2(B)$ such that 
\begin{equation} 
N(  H_\m\, -   V) \, \leq\, m(\alpha) + \CC_1\, \|V\|_{1,p} +\CC_2 \| V\log |x|\|_{L^1(\R^2)} 
\end{equation}
for all  $V \in L^1(\R_+, L^p(\Sph))$ with  $V  \log |\cdot | \in L^1(\R^2)$.
\end{enumerate}
\end{proposition}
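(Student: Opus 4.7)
The strategy is to combine the two propositions already established in this section with the reduction to the weighted space. First, by \eqref{passage-weighted},
$$
N(H_\m - V)_{L^2(\R^2)} \leq N(\h_\m - M_\m^2 V)_{L^2(\R^2;(1+|x|)^{-2\alpha}dx)},
$$
so it suffices to bound the right-hand side. Since $P$ commutes with $\h_\m$ and $V\geq 0$, applying the splitting \eqref{split-1} with $M_\m^2 V$ in place of $V$ gives
$$
N(\h_\m - M_\m^2 V) \leq N(P\h_\m P - 2 M_\m^2 PVP) + N(P^\perp \h_\m P^\perp - 2 M_\m^2 P^\perp V P^\perp).
$$

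The $P^\perp$ summand is handled by Proposition \ref{prop-Hm-perp-1} (applied to $2M_\m^2 V$), which produces a bound of the form $C(B,p)\|V\|_{1,p}$ where the constant $2M_\m^2$ is absorbed. The $P$ summand is handled by Proposition \ref{prop-H-1} (again applied to $2M_\m^2 V$): in case (1) this yields $[\alpha]+1 + C(B)\int_{\R^2} V(1+\id_{\B_1}|\log|x||)\,dx$, while in case (2) it yields $\alpha+1 + C(B)\int_{\R^2} V(1+|\log|x||)\,dx$. By H\"older's inequality on the angular variable, $\|V\|_{L^1(\R^2)} \leq (2\pi)^{1-1/p}\|V\|_{1,p}$, so the unweighted $L^1$-term is absorbed into the $C_1\|V\|_{1,p}$ contribution (respectively $\CC_1\|V\|_{1,p}$), and the logarithmic-weighted term gives exactly the claimed $C_2\|V\log|x|\|_{L^1(\B_1)}$ (respectively $\CC_2\|V\log|x|\|_{L^1(\R^2)}$).

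To match the leading integer constant to $m(\alpha) = \max\{1+[\alpha],2\}$, I would use the elementary inequality $[\alpha]+1 \leq m(\alpha)$ in case (1) and $\alpha + 1 \leq m(\alpha)$ in case (2): the two coincide when $\alpha \geq 1$, and when $0\leq \alpha < 1$ the target $m(\alpha)=2$ strictly dominates. Replacing $[\alpha]+1$ (resp.\ $\alpha+1$) by $m(\alpha)$ then yields the claimed estimates in both cases.

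The genuine obstacles have already been cleared earlier in the section: the Hardy-type lower bound \eqref{p-perp-lowerb} that makes Theorem \ref{thm-ln} applicable to $P^\perp \h_\m P^\perp$, and the explicit Sturm-Liouville / Birman-Schwinger trace-norm analysis via modified Bessel Green's functions (Lemmas \ref{lem-01}, \ref{lem-1infty}, \ref{lem-1infty-bis}) that treats the critical low angular momenta. The final assembly above is essentially bookkeeping, the only mildly delicate point being the identification of the correct weak-coupling constant $m(\alpha)$ in the small-flux regime $0\leq \alpha < 1$, resolved by the elementary observation in the previous paragraph.
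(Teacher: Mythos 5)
Your proposal is correct and follows essentially the same route as the paper: reduce to $\h_\m$ via \eqref{passage-weighted}, split with \eqref{split-1}, invoke Propositions \ref{prop-Hm-perp-1} and \ref{prop-H-1}, and absorb the unweighted $L^1$-term using H\"older's inequality $\|V\|_{L^1(\R^2)} \leq (2\pi)^{\frac{p-1}{p}}\|V\|_{1,p}$. Your explicit remark that $[\alpha]+1 \leq m(\alpha)$ (resp.\ $\alpha+1 \leq m(\alpha)$) is exactly the bookkeeping the paper leaves implicit, and it is correct.
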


\begin{proof}
By H\"older inequality,  
$$
\|V\|_{L^1(\R^2)}  \leq (2\pi)^{\frac{p-1}{p}}\, \|V\|_{1,p} \qquad p\geq 1,
$$
Hence
the claim follows from equations \eqref{passage-weighted}, \eqref{split-1} and Propositions \ref{prop-Hm-perp-1} and \ref{prop-H-1}.
\end{proof}

\section{\bf Upper bound on $N( H_\pp -  V)$ }  
\label{sec-upperb-plus}

The goal of this section is to find an upper bound on $N( H_\pp -  V)$.  Note that for $\alpha=0$ we have $\h_\pp = \h_\m$. Hence one can conclude from \eqref{passage-weighted}
\begin{align*}
N( H_\pp -V)_{L^2(\R^2)}  \leq N( \h_\pp - M_\pp^2\, V)_{L^2(\R^2)} = N( \h_\m - M_\pp^2\, V)_{L^2(\R^2)}.
\end{align*}
Using Proposition \ref{prop-Hm-perp-1} and Proposition \ref{prop-H-1} (2), we see that the upper bound given in Proposition \ref{prop-H-} (2) also holds for $N( H_\pp -V)$. The statement of Theorem \ref{thm-main-pauli} (2) for $\alpha=0$ now follows from the bounds on $N( H_\pm -V) $. 

\smallskip
In the sequel we will therefore assume that  $\alpha>0$. Our aim is to prove Proposition \ref{prop-H+}. 
We estimate  the counting function of $ \h_\pp -  V$ as follows: if $\alpha\not\in\Z$, we write
\begin{align}  \label{split-2a}
N( \h_\pp -  V)_{L^2(\R^2;(1+|x|)^{2\alpha}\,dx )} & \leq N(P _0\, \h_\pp P_0 -  2P_0VP_0)_{L^2(\R^2;(1+|x|)^{ 2\alpha}\,dx )}  \nonumber \\[4pt]
& \quad + N(P _0^\perp \h_\pp P_0^\perp -  2P_0^\perp V P_0^\perp)_{L^2(\R^2;(1+|x|)^{ 2\alpha}\,dx )} \, ,
\end{align}
and if $\alpha\in\Z$, then
\begin{align}  \label{split-2b}
N( \h_\pp -  V)_{L^2(\R^2;(1+|x|)^{2\alpha}\,dx )} & \leq N(P _0\, \h_\pp P_0 -  4 P_0VP_0)_{L^2(\R^2;(1+|x|)^{ 2\alpha}\,dx )}  \nonumber \\[4pt]
&\quad +N(P _\alpha\, \h_\pp P_\alpha -  4P_\alpha VP_\alpha)_{L^2(\R^2;(1+|x|)^{ 2\alpha}\,dx )} \\[4pt]
& \quad + N( (P_0+P_\alpha)^\perp \h_\pp (P_0+P_\alpha)^\perp -  4(P_0+P_\alpha)^\perp V (P_0+P_\alpha)^\perp)_{L^2(\R^2;(1+|x|)^{ 2\alpha}\,dx )}\, . \nonumber 
\end{align}

As in the previous section, we first prove an upper bound on the counting functions of $\h_\pp - V$ restricted to the range of $P_0^\perp$ resp.\ $(P_0+P_\alpha)^\perp$.

\begin{proposition} \label{prop-Hp-perp-1}
Let $B$ satisfy \eqref{ass-B} and let $\alpha >0$.
Assume that $V \in L^1(\R_+, L^p(\Sph))$  for some $p>1$. Then 
there exists a constant $C=C(B,p)$ such that 
\begin{equation}  \label{hp-perp-upperb}
N( (P_0+P_\alpha)^\perp\, \h_\pp\, (P_0+P_\alpha)^\perp -  (P_0+P_\alpha)^\perp\, V\, (P_0+P_\alpha)^\perp)_{L^2(\R^2;(1+|x|)^{ 2\alpha}\,dx )} \, \leq\, C\, \| V \|_{1,p} \, .
\end{equation}
Moreover, if $\alpha \not\in\Z$, then 
\begin{equation}  \label{hp-perp-upperb-bis}
N(P _0^\perp\, \h_\pp\, P_0^\perp -  P_0^\perp\, V\, P_0^\perp)_{L^2(\R^2;(1+|x|)^{ 2\alpha}\,dx )} \, \leq\, C\, \| V \|_{1,p} \, .
\end{equation}
\end{proposition}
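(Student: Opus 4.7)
The plan is to proceed in close parallel with the proof of Proposition \ref{prop-Hm-perp-1}, with three principal modifications: the weight becomes $(1+|x|)^{2\alpha}$, the Fourier modes to project out are $\{0\}$ (when $\alpha\notin\Z$) or $\{0,\alpha\}$ (when $\alpha\in\Z$) rather than $\{0,1,\dots,n\}$, and the roles of positive and negative Fourier indices are interchanged because the sign of the cross term produced by integration by parts flips. For $\varphi$ in the range of $(P_0+P_\alpha)^\perp$ (respectively $P_0^\perp$), the target is the Hardy-type lower bound
$$
\Q_\pp[\varphi] \, \gtrsim_{\,\alpha}\, \int_{\R^2}(1+|x|)^{2\alpha}\Big(|\nabla\varphi|^2+\frac{|\varphi|^2}{|x|^2}\Big)\,dx,
$$
after which I would conjugate by the unitary $\U\colon L^2(\R^2,dx)\to L^2(\R^2,(1+|x|)^{2\alpha}dx)$ given by $\U\psi=(1+|x|)^{-\alpha}\psi$ (which commutes with $P_0$ and $P_\alpha$) and apply Theorem \ref{thm-ln} to the resulting inverse-square Schr\"odinger operator on $L^2(\R^2)$.

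Starting from the Fourier expansion \eqref{fi-m}, the analog of \eqref{h-fourier} for $\h_\pp$ reads
$$
\Q_\pp[\varphi] = \sum_{m\in\Z}\int_0^\infty (1+r)^{2\alpha}\Big|\varphi_m'+\frac{m\varphi_m}{r}\Big|^2 r\,dr = \sum_{m\in\Z}\int_0^\infty (1+r)^{2\alpha}\,r^{1-2m}\,\big|\partial_r(r^m\varphi_m)\big|^2\,dr.
$$
For each surviving index $m$, the mode-by-mode goal is
$$
\int_0^\infty (1+r)^{2\alpha}\Big|\varphi_m'+\frac{m\varphi_m}{r}\Big|^2 r\,dr \,\gtrsim_{\,\alpha}\, \int_0^\infty (1+r)^{2\alpha}\Big(|\varphi_m'|^2+\frac{m^2|\varphi_m|^2}{r^2}\Big) r\,dr.
$$
For $m<0$, expanding the square and integrating the cross term by parts produces the non-negative contribution $-2\alpha m\int_0^\infty(1+r)^{2\alpha-1}|\varphi_m|^2\,dr\geq 0$, so the bound is immediate, in exact analogy with \eqref{m<0}. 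For $m>0$ with $m\neq\alpha$, I would pass to $g_m(r)=r^m\varphi_m(r)$ and apply Theorem \ref{thm-class-1} with $U(r)=r^{1-2m}(1+r)^{2\alpha}$ and $W(r)=r^{-2}U(r)$; the resulting Hardy constant is controlled from below by $\min\{m^2,(m-\alpha)^2\}$, which is comparable to $m^2$ for $m$ outside a bounded range and positive (depending only on $\alpha$) for the finitely many remaining values. Together with \eqref{young}, this yields the target bound. Summing over $m$ via Parseval then gives $\Q_\pp[\varphi]\gtrsim_{\,\alpha}\int(1+|x|)^{2\alpha}|\nabla\varphi|^2\,dx$, and the angular Hardy inequality (as in \eqref{hardy-orth}) applied to $\varphi$, which is orthogonal to the radial subspace, supplies the companion estimate $\int(1+|x|)^{2\alpha}|\nabla\varphi|^2\,dx\geq\int(1+|x|)^{2\alpha}|\varphi|^2/|x|^2\,dx$.

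To transfer this into an inverse-square Hardy bound on the unweighted $L^2(\R^2)$, I would compute, for $\psi=\U^*\varphi$, the identity
$$
\int_{\R^2}(1+|x|)^{2\alpha}|\nabla((1+|x|)^{-\alpha}\psi)|^2\,dx = \int_{\R^2}|\nabla\psi|^2\,dx + \alpha\int_{\R^2}\frac{|\psi|^2}{|x|(1+|x|)}\,dx + \alpha(\alpha-1)\int_{\R^2}\frac{|\psi|^2}{(1+|x|)^2}\,dx,
$$
which, after absorbing the last term when $0<\alpha<1$ via $(1+|x|)^{-2}\leq(|x|(1+|x|))^{-1}$, is bounded below by $\int|\nabla\psi|^2\,dx$. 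A convex combination of this with the angular Hardy estimate (transferred via $\U$) then yields
$$
\langle\psi,\U^*(P_0+P_\alpha)^\perp\h_\pp (P_0+P_\alpha)^\perp\U\psi\rangle \,\gtrsim_{\,\alpha}\, \int_{\R^2}\Big(|\nabla\psi|^2+\frac{|\psi|^2}{|x|^2}\Big)dx,
$$
and \eqref{hp-perp-upperb} follows from Theorem \ref{thm-ln} exactly as at the end of the proof of Proposition \ref{prop-Hm-perp-1}; \eqref{hp-perp-upperb-bis} is obtained identically, replacing $(P_0+P_\alpha)^\perp$ by $P_0^\perp$. The main technical obstacle is the mode-by-mode Hardy inequality: because $(1+r)^{2\alpha}$ is not a pure power, the Hardy constants emerging from the small-$r$ and large-$r$ regimes differ, and one must verify that they patch together into a constant bounded below uniformly over the surviving values of $m$. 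The projection $P_\alpha$ in the integer-flux case is essential precisely because the large-$r$ Hardy constant $(m-\alpha)^2$ vanishes at $m=\alpha$, the angular momentum supporting the Aharonov--Casher zero mode at infinity; when $\alpha\notin\Z$ no integer Fourier index hits this critical value, which is why projecting away $P_0$ alone already suffices and produces the stronger estimate \eqref{hp-perp-upperb-bis}.
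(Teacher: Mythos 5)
Your overall route is the same as the paper's: project out the modes $m=0$ (and $m=\alpha$ when $\alpha\in\Z$), prove a mode-by-mode Hardy bound to get $\Q_\pp[\varphi]\gtrsim_\alpha\int(1+|x|)^{2\alpha}\big(|\nabla\varphi|^2+|x|^{-2}|\varphi|^2\big)dx$, conjugate by $\U\psi=(1+|x|)^{-\alpha}\psi$, and finish with Theorem \ref{thm-ln}. The treatment of $m<0$ by integration by parts, the identity for $\int(1+|x|)^{2\alpha}|\nabla((1+|x|)^{-\alpha}\psi)|^2dx$ (your form is algebraically equivalent to the paper's, and indeed no $\eps$-balancing is needed here, in contrast to the $\h_\m$ case), the angular Hardy step, and the reduction to $-\Delta+|x|^{-2}-c_\alpha V$ all match the paper.

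The gap is in the central step, the Hardy inequality for the positive modes. You propose to apply Theorem \ref{thm-class-1} with $U(r)=r^{1-2m}(1+r)^{2\alpha}$ and $W(r)=r^{-2}U(r)$ for every $m>0$, $m\neq\alpha$, and to read off a constant $\min\{m^2,(m-\alpha)^2\}$. This does not work as stated: for $m>0$ the weight $W(r)=r^{-1-2m}(1+r)^{2\alpha}$ is not integrable at the origin, so the Muckenhoupt quantity in \eqref{C-upperb-1} is infinite and Theorem \ref{thm-class-1} gives nothing; the relevant tool for $m>\alpha$ is Theorem \ref{thm-class-2} (boundary condition at $r=0$, which $g_m=r^m\varphi_m$ satisfies), whose condition $\int_s^\infty W<\infty$ holds precisely because $m>\alpha$. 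More seriously, for $0<m<\alpha$ \emph{neither} global theorem applies with the full weight ($\int_0^sW=\infty$ kills Theorem \ref{thm-class-1}, and $\int_s^\infty W=\infty$ kills Theorem \ref{thm-class-2}), so the claimed constant is not obtained by any single application of the appendix inequalities. You flag this as "the patching obstacle" but do not resolve it; the paper resolves it exactly there, by splitting the integral at $r=1$ and applying Corollary \ref{cor-class-2} on $(0,1)$ with the pure power weight $r^{1-2m}$ (constant $\gtrsim m^2$) and Corollary \ref{cor-class-1} on $(1,\infty)$ with $r^{1-2m+2\alpha}$ (constant $\gtrsim_\alpha(m-\alpha)^2\gtrsim_\alpha m^2$), then recombining via $(1+r)^{2\alpha}\asymp_\alpha\max\{1,r^{2\alpha}\}$. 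With that splitting supplied (and the citation for $m>\alpha$ corrected to Theorem \ref{thm-class-2}), the rest of your argument goes through and coincides with the paper's proof; as written, however, the key estimate is asserted rather than proven.
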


\begin{proof}
Let $\tilde{P}=P_0$ if $\alpha \notin \mathbb{Z}$ and $\tilde{P}=P_0+ P_\alpha$ if $\alpha \in \mathbb{Z}$. As in Proposition \ref{prop-Hm-perp-1}, we first prove that 
\begin{equation} \label{prop4-eq-enough}
\langle v,  \tilde{P}^\perp\, \h_+\, \tilde{P}^\perp  v \rangle  \,  \gtrsim_{\, \alpha} \,  \int_{\R^2} (1+|x|)^{2\alpha}\big( |\nabla v|^2+   |x|^{-2} |v|^2\big)\,dx \qquad v\in D_\m(\alpha).
\end{equation}
and by density it suffices to show the above estimate for all $v\in C_0^\infty(\R^2)$. Let $\varphi= \tilde{P}^\perp\, v$. First of all, we have 
\begin{align} \label{prop4-hardy-orth}
\int_{\R^2}  (1+|x|)^{2\alpha} |\nabla \varphi |^2\,dx
&  \geq \, \int_{\R^2} (1+|x|)^{2\alpha}   |x|^{-2} |\varphi|^2\,dx \,.
\end{align}
Let $\varphi_m$ again denote the Fourier coefficients of $\varphi$, see \eqref{fi-m}. Then 
\begin{equation} \label{prop4-h-fourier}
\begin{aligned}  
 \langle \varphi,   \h_+ \varphi \rangle = \Q_+[\varphi] & = \sum_{m\in\Z} \int_0^\infty (1+r)^{2\alpha}  \big | \varphi'_m(r) + \frac{m\, \varphi_m(r)}{r} \big|^2\, r\, dr  \\[5pt]
&=   \sum_{m\in\Z}   \int_0^\infty (1+r)^{2\alpha}\,    r^{1-2m}\, \big | \partial_r( r^{m}\, \varphi _m(r))\big|^2\, dr \, .
\end{aligned}
\end{equation} 
We estimate the integrals in the sum from below. 

First, suppose $m<0$. Then, after integration by parts, we have
\begin{align} \label{prop4-m<0}
\int_0^\infty (1+r)^{2\alpha}\,   \big | \varphi_m' + \frac{m\, \varphi_m}{r} \big|^2\, r\, dr & = \int_0^\infty (1+r)^{2\alpha}\,   \Big( |\varphi'_m|^2 + \frac{m^2\, |\varphi_m|^2}{r^2} -2\alpha m \frac{ |\varphi_m|^2}{(1+r)r} \Big)\, r\, dr
\nonumber  \\[5pt]
& \geq \int_0^\infty (1+r)^{2\alpha}\,   \Big( |\varphi'_m|^2 + \frac{m^2\, |\varphi_m|^2}{r^2} \Big)\, r\, dr.
\end{align}

Now consider $m> \alpha$. Let $g_m= r^{m} \varphi _m$. Then $\liminf_{t\to 0} |g_m(t)| =0$ and  Theorem \ref{thm-class-2} applied with $U(t) = t^{1-2m } (1+t)^{2\alpha}$ 
and $W(t) = t^{-2}\, U(t)$ gives 
\begin{align*}
\int_0^\infty (1+r)^{2\alpha}\,  r^{1-2m}\, |g'_m(r)|^2\, dr\, &\gtrsim_{\, \alpha} \,  (m-\alpha)^2 \int_0^\infty (1+r)^{2\alpha}\,  r^{1-2m}\, \frac{|g_m(r)|^2}{r^2}\, dr \\
& \gtrsim_{\, \alpha} \, m^2\int_0^\infty (1+r)^{2\alpha}\,   \frac{|\varphi_m(r)|^2}{r^2} \, r\, dr .
\end{align*}

If $0<m< \alpha$, we proceed as in \cite{FaKo} and get
\begin{align*}
\int_0^\infty (1+r)^{2\alpha}\,    r^{1-2m}\, \big |g_m'(r)\big|^2\, dr \geq \int_0^1    r^{1-2m}\, \big |g_m'(r)\big|^2\, dr +  \int_1^\infty  r^{1-2m+2\alpha}\, \big |g_m'(r)\big|^2\, dr.
\end{align*}
We have by Corollary \ref{cor-class-2},
\begin{align*}
\int_0^1 r^{1-2m}\, \big |g_m'(r)\big|^2\, dr  \gtrsim_{\, \alpha} \, m^2 \int_0^1 r^{1-2m}\,  \frac{\big|g_m(r)\big|^2}{r^2} \, dr  
\end{align*}
and by Corollary \ref{cor-class-1},
\begin{align*}
 \int_1^\infty  r^{1-2m+2\alpha}\, \big |g_m'(r)\big|^2\, dr \gtrsim_{\, \alpha} \, (m-\alpha)^2  \int_1^\infty  r^{1-2m+2\alpha}\, \big |g_m'(r)\big|^2\, dr \gtrsim_{\, \alpha} \, m^2  \int_1^\infty  r^{1-2m+2\alpha}\, \frac{\big|g_m(r)\big|^2}{r^2}\, dr.
\end{align*}
As $1 \geq 2^{-2\alpha} (1+r)^{2\alpha} $ for $0 \leq r\leq 1$ and $r^{2\alpha} \geq 2^{-2\alpha} (1+r)^{2\alpha} $ for $r \geq 1$, we conclude
\begin{align*}
\int_0^\infty  (1+r)^{2\alpha}  \big | \varphi'_m(r) + \frac{m\, \varphi_m(r)}{r} \big|^2\, r\, dr &= \int_0^\infty (1+r)^{2\alpha}\,    r^{1-2m}\, \big |g_m'(r)\big|^2\, dr \\
& \gtrsim_{\, \alpha} \, m^2  \int_0^\infty (1+r)^{2\alpha} r^{1-2m}\, \frac{\big|g_m(r)\big|^2}{r^2}\, dr \\
& = m^2 \int_0^\infty (1+r)^{2\alpha}\,   \frac{|\varphi_m(r)|^2}{r^2} \, r\, dr.
\end{align*}

Hence using \eqref{young}, we see that for $m> \alpha$ and $0<m< \alpha$, we have
\begin{equation} \label{prop4-m>n}
\int_0^\infty  (1+r)^{2\alpha}  \big | \varphi'_m + \frac{m\, \varphi_m}{r} \big|^2\, r\, dr \,  \gtrsim_{\, \alpha} \, 
\int_0^\infty (1+r)^{2\alpha}\, \Big(  |\varphi'_m |^2+ \frac{m^2 |\varphi_m|^2}{r^2}\Big) \, r\, dr\,  .
\end{equation}

Now, by definition of $\tilde{P}^\perp$,
\begin{equation}  \label{prop4-fi-m-zeros}
\varphi_0 = 0 \qquad \text{and} \qquad \varphi_\alpha = 0 \quad (\text{if } \alpha \in\mathbb{Z}),
\end{equation} 
therefore \eqref{prop4-m<0} and \eqref{prop4-m>n} combined with the Parseval identity imply
\begin{equation}  \label{prop4-Q-lowerb1}
 \Q_+[\varphi]  \, \gtrsim_{\, \alpha} \,  \sum_{m\in\Z} \int_0^\infty (1+r)^{2\alpha}   \Big( |\varphi'_m|^2 + \frac{m^2\, |\varphi_m|^2}{r^2} \Big)\, r\, dr  =  \int_{\R^2}  (1+|x|)^{2\alpha}\,  |\nabla \varphi |^2\,dx\, .
 \end{equation}
Together with \eqref{prop4-hardy-orth}, this proves \eqref{prop4-eq-enough}. 

Let now $\U: L^2(\R^2,dx) \to L^2(\R^2,(1+|x|)^{2\alpha}dx)$ given by $\U \psi=  (1+|x|)^{-\alpha} \psi$. The operator $\U$ is unitary and commutes with $\tilde{P}^\perp$. Hence by \eqref{prop4-Q-lowerb1} and \eqref{prop4-hardy-orth},
\begin{equation} \label{prop4-no-weight}
\langle \psi , \U^* \tilde{P}^\bot \mathcal H_\m \tilde{P}^\bot \U\,  \psi  \rangle  \, \gtrsim_{\, \alpha} \,  \int_{\R^2} (1+|x|)^{2\alpha} |\nabla ((1+|x|)^{-\alpha }\psi)|^2\,dx \, \geq \, \int_{\R^2} \frac{|\psi|^2}{|x|^2} \,dx \, .
\end{equation} 
On the other hand, integration by parts shows that
\begin{align*}
		\int_{\R^2} (1+|x|)^{2\alpha} |\nabla ((1+|x|)^\alpha \psi)|^2\,dx
		& = \int_{\R^2} (|\nabla \psi|^2 + \alpha (1+|x|)^{-2} (|x|^{-1}+\alpha ) |\psi|^2 )\,dx \, \geq \int_{\R^2} |\nabla \psi|^2  \,dx  \, .
\end{align*}
Thus, 
$$
\langle \psi , \U^* \tilde{P}^\bot \mathcal H_\m \tilde{P}^\bot \U\,  \psi  \rangle  \, \gtrsim_{\, \alpha} \,  \int_{\R^2} \Big (|\nabla \psi|^2 + \frac{|\psi|^2}{|x|^2} \Big)\,dx \, 
$$
and there exists a constant $c_\alpha>0$ such that 
\begin{align*}
 N( \tilde{P}^\perp \h_\m \tilde{P}^\perp - \tilde{P}^\perp V \tilde{P}^\perp)_{L^2(\R^2;(1+|x|)^{\pm 2\alpha}\,dx )} \, & \leq\,  N( \tilde{P}^\perp ( -\Delta + |x|^{-2} -c_\alpha V)\tilde{P}^\perp)_{L^2(\R^2)} \\[5pt] 
 & \leq N( -\Delta + |x|^{-2} -c_\alpha V)_{L^2(\R^2)} \, .
\end{align*}
The statement of the Proposition now follows as previously from Theorem \ref{thm-ln}. 
\end{proof}

We continue with estimates on $N(P _0\, \h_\pp P_0 -  P_0VP_0)$ and $N(P _\alpha\, \h_\pp P_\alpha -  P_\alpha VP_\alpha)$. Recall that the operators $P_m$ are defined in \eqref{pm-def}.  By \eqref{new-qform}, 
\begin{equation} \label{h-fourier-bis}
\begin{aligned}  
 \langle  \varphi,  \h_\pp\,   \varphi \rangle = \Q_\pp[\varphi] & = \sum_{m\in\Z} \int_0^\infty (1+r)^{2\alpha}  \big | \varphi'_m(r) + \frac{m\, \varphi_m(r)}{r} \big|^2\, r\, dr  
 \\[5pt]
&=   \sum_{m\in\Z}   \int_0^\infty (1+r)^{2\alpha}\,    r^{1-2m}\, \big | \partial_r( r^{m}\, \varphi _m(r))\big|^2\, dr \, ,
\end{aligned}
\end{equation} 
with $\varphi_m$ given by \eqref{fi-m}. Hence denoting by $h_m^\pp$ the operator associated in $L^2(\R_+; (1+r)^{2\alpha}\, r dr)$ with the closure of the quadratic from
\begin{equation} \label{qm+form}
  \int_0^\infty   \Big| u'(r) + \frac{m u}{r} \Big|^2\, (1+r)^{2\alpha}\, r\, dr  
\end{equation} 
originally defined on $C^1_c(\R_\pp)$,  it follows that 
\begin{equation} \label{upperb-PHP-0}
N(P _0\, \h_\pp P_0 -  P_0VP_0)_{L^2(\R^2;(1+|x|)^{ 2\alpha}\,dx )}   =   N( h_0^\pp-  \V)_{L^2(\R_+; (1+r)^{2\alpha}\, r dr)}, 
\end{equation} 
with $\V$ as in \eqref{V-m-eq}. Similarly, for $\alpha\in\Z$, 
\begin{equation} \label{upperb-PHP-alpha}
N(P _\alpha\, \h_\pp P_\alpha-  P_\alpha VP_\alpha)_{L^2(\R^2;(1+|x|)^{ 2\alpha}\,dx )} =  N( h_\alpha^\pp-  \V)_{L^2(\R_+; (1+r)^{2\alpha}\, r dr)} .
\end{equation}

Let us now focus on the operators $h_m^\pp$, $m\in\mathbb{Z}$. We are particularly interested in the cases $m=0$ and $m=\alpha$ (if $\alpha\in\mathbb{Z}$). We denote 
\begin{equation}
w(r) = \begin{cases}
			 r  &\quad  \text{if}\quad  0 < r\leq 1 \,,
				\\[3pt]	
			r^{1+2\alpha}  &\quad \text{if}\quad   1 < r\, .
			\end{cases}
\end{equation}
Since
\begin{equation} \label{w-2-sided}
2^{-2\alpha} \, r (1+r)^{2\alpha}  \, \leq w(r)  \leq \,  r (1+r)^{2\alpha},
\end{equation}
we deduce from \eqref{qm+form} that 
$$
N( h_m^\pp-  \V)_{L^2(\R_+; (1+r)^{2\alpha}\, r dr)} 	\, \leq \, N( \widetilde T_m-  2^{2\alpha}\, \V)_{L^2(\R_+; w(r) dr)},
$$
where $\widetilde T_m$ is the operator in $L^2(\R_+; w(r) dr)$ generated by the quadratic form 
$ \int_0^\infty   | u'(r)|^2 \, w(r) dr  $.  The unitary mapping  
\begin{equation} \label{H+sec-unit-trans}
\U: L^2(\R_+, w(r) dr) \to L^2(\R_+)
\end{equation}
given by  $\U u=:v= \sqrt{w}\, u$, then shows that 
\begin{equation} \label{H+sec-compare-1}
N( h_m^\pp-  \V)_{L^2(\R_+; (1+r)^{2\alpha}\, r dr)} 	 \, \leq \, N( T_m-  2^{2\alpha}\, \V)_{L^2(\R_+)}
\end{equation}
where $T_m=\U\, \widetilde T_m\, \U^{-1}$. An integration by parts shows that $T_m$ is generated by the closure of the quadratic form 
\begin{equation} \label{H+sec-T-form}
	\int_0^\infty |v'(r)|^2  \,dr +\alpha |v(1)|^2 +\Big(m^2-\frac 14\Big) \int_0^1 \frac{ v(r)^2}{r^2}\, dr +  \Big((\alpha-m)^2-\frac 14\Big) \int_1^\infty \frac{ v(r)^2}{r^2}\, dr
\end{equation}
defined for $v\in C^1_c(\R_\pp)$.  Notice that the term $\alpha  |v(1)|^2$ comes from the non-derivability of $w$ at $r=1$.

Next we calculate the integral kernel of  $(T_m+\kappa^2)^{-1}$ and its limit as $\kappa \to 0$ for the cases $m=0$ and $m=\alpha$ (if $\alpha \in\mathbb{Z}$).

 By Sturm--Liouville theory, the integral kernel of $(T_m+\kappa^2)^{-1}$ can be written in terms of two solutions $v_1$ and $v_2$ satisfying
	\begin{align}
		-v'' +(m^2-\tfrac14 ) \, r^{-2} v & = -\kappa^2 v \qquad \text{in}\ (0,1) \,, \nonumber\\
		-v'' + ((\alpha-m)^2-\tfrac14) \, r^{-2} v & = -\kappa^2 v \qquad \text{in}\ (1,\infty) \,, \nonumber\\
		v(1_-) & = v(1_+)   \label{H+sec-continuity}\\ 
		v'(1_-) & = v'(1_+) - \alpha v(1_+) \label{H+sec-jump}\,.
	\end{align}
The jump condition for the derivative at $r=1$ comes from the term $\alpha |v(1)|^2$ in \eqref{H+sec-T-form}. The function $v_1$ is supposed to lie in the form domain of $T_m$ near the origin and $v_2$ is supposed to be square-integrable at infinity.

Using standard facts about Bessel's equation \cite[Sec.~9]{as}, we find that these two solutions are given by 
	\begin{align} \label{H+sec-v-1}
		v_1(r) & = \sqrt{r} \times \begin{cases}
			I_m(\kappa r)  & \text{if }\quad   0 < r\leq 1 \,,
			\\[5pt] 
		A_m(\kappa) I_{\alpha-m}(\kappa r) +  B_{m}(\kappa) K_{\alpha-m}(\kappa r) & \text{if }\quad  1 < r < \infty \,,
		\end{cases}
	\end{align}
	and 
	\begin{align}  \label{H+sec-v-2} 
		v_2 (r) & = \sqrt{r} \times \begin{cases}
			 D_m(\kappa) I_{m}(\kappa r) +  \ C_m(\kappa) K_m(\kappa r)  & \text{if }\quad   0 < r\leq 1 \,,
			\\[5pt] 
			K_{\alpha-m}(\kappa r) & \text{if }\quad  1 < r < \infty \,,
		\end{cases}
	\end{align}
	with coefficients $A_m(\kappa),  B_m(\kappa),C_m(\kappa)$ and $ D_m(\kappa)$ that are determined by matching conditions \eqref{H+sec-continuity} and \eqref{H+sec-jump}.  From the Wronski relation \cite[Eq.~9.6.15]{as} for the Bessel functions, 
	\begin{equation}\label{H+sec-wronski}
		W\big\{ K_\nu(z), I_\nu(z)\big\} = I _\nu(z) K_{\nu+1} (z) +K_\nu(z) I_{\nu+1} (z) = \frac 1z \,,
	\end{equation}
it follows that
\begin{equation} \label{H+sec-wronski-2}
	W\big\{ v_2, v_1 \big\} = C_m(\kappa) =  A_m(\kappa)  = - W\big\{ v_1, v_2 \big\}\, .
\end{equation}
Inserting \eqref{H+sec-v-1}, \eqref{H+sec-v-2} into \eqref{H+sec-continuity}, \eqref{H+sec-jump}, and using \eqref{H+sec-wronski} we obtain
\begin{equation}\label{H+sec-coef}
	\begin{aligned} 
		A_m(\kappa) & = \kappa I_m'(\kappa) K_{\alpha-m}(\kappa) +\alpha I_m(\kappa) K_{\alpha-m}(\kappa) -\kappa I_m(\kappa) K_{\alpha-m}'(\kappa)  \,, \\[2pt]
		B_m(\kappa) & = -\kappa I_m'(\kappa) I_{\alpha-m}(\kappa) - \alpha I_m(\kappa) I_{\alpha-m}(\kappa)+ \kappa I_m(\kappa) I_{\alpha-m}'(\kappa)  \,, \\[2pt]
		D_m(\kappa) & = -\kappa K_m'(\kappa) K_{\alpha-m}(\kappa) - \alpha K_m(\kappa) K_{\alpha-m}(\kappa)+ \kappa K_m(\kappa) K_{\alpha-m}'(\kappa)  \, .
	\end{aligned}
\end{equation}
Moreover, since  $T_m$ is a nonnegative operator, it has no negative eigenvalues. Consequently we must have $A_m(\kappa)\neq 0$ for all $\kappa>0$.
Using the Sturm--Liouville theory, we deduce that for any $\kappa>0$ and $r \leq r'$, 
	\begin{align}\label{H+sec-green}
		\begin{aligned} 
			(T_m+\kappa^2)^{-1}(r,r')&= -\frac{v_1(r)v_2(r')}{W\{v_1,v_2\}} \\[5pt]
			&=\sqrt{r r'} \times
			\begin{cases}
				 I_m(\kappa r) K_m(\kappa r')+  f_m(\kappa)  I_m(\kappa r) I_m(\kappa r')  & \text{if}\  0 < r\leq r'\leq  1 \,,
				\\[5pt] 
				 I_{\alpha-m}(\kappa r) K_{\alpha-m}(\kappa r')+  g_m(\kappa)  K_{\alpha-m}(\kappa r) K_{\alpha-m}(\kappa r')  & \text{if}\    1 < r \leq r' \,, \\[5pt]
				A^{-1}_m(\kappa)\, I_m(\kappa r)  K_{\alpha-m}(\kappa r')  & \text{if}\    0 < r \leq 1 \leq r' \,,
			\end{cases}
		\end{aligned}
	\end{align}
	where we have denoted 
	$$
 f_m(\kappa) := \frac{D_m(\kappa) }{ A_m(\kappa)} \qquad \text{and}\qquad 
	g_m(\kappa) := \frac{ B_m(\kappa) }{ A_m(\kappa)} \, .
	$$

For $m=0$, the limit $\kappa \to 0$ yields

\begin{lemma} \label{lem-T-0}
Let $\alpha>0$. Then for any $r,r'\in\R_+$,
\begin{equation} \label{limit-kernel-0}
T_0^{-1}(r,r') := \lim_{\kappa\to 0} (T_0+\kappa^2)^{-1}(r,r') = 
 \sqrt{r r'} \times
			\begin{cases}
			 \frac{1}{2\alpha} -\log ( r')		 & \text{if}\quad  0 < r\leq r'\leq  1 \,,
				\\[5pt] 
				 \frac{1}{2\alpha}\  ( r/r')^\alpha    & \text{if}\quad    1 < r \leq r' \,, \\[5pt]
				 \frac{1}{2\alpha}\  (r')^{-\alpha}& \text{if}\quad   0 < r \leq 1 \leq r' \, .
			\end{cases}
\end{equation}
As usual, the formula for $r> r'$ follows by interchanging the variables. 
\end{lemma}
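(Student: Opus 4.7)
My plan is to compute the limit directly from the explicit formula \eqref{H+sec-green} with $m=0$, doing the asymptotic analysis region by region. The three pieces I need to control are the coefficients $A_0(\kappa)$, $f_0(\kappa)= D_0(\kappa)/A_0(\kappa)$, and $g_0(\kappa)=B_0(\kappa)/A_0(\kappa)$, together with the pointwise limits of the Bessel products, all as $\kappa\to 0$. The divergent parts of the Bessel functions must be tracked carefully because, in the inner region, they cancel against a divergent part of $f_0(\kappa)$.

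The first step is to simplify the expressions for $A_0(\kappa)$ and $D_0(\kappa)$ in \eqref{H+sec-coef}. Using the identity $\kappa K_\nu'(\kappa)=\alpha K_\nu(\kappa)-\kappa K_{\nu+1}(\kappa)$ with $\nu=\alpha$, the $\alpha K_0 K_\alpha$ terms in $D_0$ cancel, yielding
\begin{equation*}
A_0(\kappa) = \kappa I_0'(\kappa)K_\alpha(\kappa)+\kappa I_0(\kappa)K_{\alpha+1}(\kappa),\qquad D_0(\kappa)=\kappa K_1(\kappa)K_\alpha(\kappa)-\kappa K_0(\kappa)K_{\alpha+1}(\kappa).
\end{equation*}
Plugging in the expansions from \eqref{KI-zero}, with $K_1(\kappa)\sim \kappa^{-1}$, $I_0'(\kappa)=I_1(\kappa)\sim \kappa/2$, and $K_{\alpha+1}(\kappa)\sim (\kappa/2)^{-(\alpha+1)}\Gamma(\alpha+1)/2$, a routine calculation gives
\begin{equation*}
A_0(\kappa)=(\kappa/2)^{-\alpha}\,\Gamma(\alpha+1)\bigl(1+o(1)\bigr),\qquad f_0(\kappa)=\log\kappa + \tfrac{1}{2\alpha}-C+o(1),
\end{equation*}
where $C$ is the constant from \eqref{KI-zero}.

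With these in hand, I treat the three regions. In the inner region $0<r\leq r'\leq 1$, the expansion $K_0(\kappa r')=-\log\kappa-\log r'+C+o(1)$ together with $I_0(\kappa r), I_0(\kappa r')\to 1$ shows that the $\log\kappa$ pieces in $I_0(\kappa r)K_0(\kappa r')$ and $f_0(\kappa)I_0(\kappa r)I_0(\kappa r')$ cancel exactly and the constants combine to $\tfrac{1}{2\alpha}-\log r'$, producing the first case of \eqref{limit-kernel-0}. In the mixed region $0<r\leq 1\leq r'$, one uses $K_\alpha(\kappa r')\sim (\kappa r'/2)^{-\alpha}\Gamma(\alpha)/2$ and divides by the asymptotic of $A_0(\kappa)$ above; the factors $(\kappa/2)^{-\alpha}$ cancel and yield $(r')^{-\alpha}/(2\alpha)$, which matches the third case. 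For the outer region $1<r\leq r'$, the product $I_\alpha(\kappa r)K_\alpha(\kappa r')$ converges to $(r/r')^\alpha/(2\alpha)$ by \eqref{IK-gamma}; it remains to check that the $g_0(\kappa)K_\alpha(\kappa r)K_\alpha(\kappa r')$ correction vanishes. Since $K_\alpha(\kappa r)K_\alpha(\kappa r')=O(\kappa^{-2\alpha})$, it suffices to show $B_0(\kappa)=o(\kappa^{\alpha})$, so that $g_0(\kappa)=o(\kappa^{2\alpha})$.

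The main obstacle is precisely this last estimate: the leading $(\kappa/2)^\alpha$ terms in $B_0(\kappa)$ cancel, so I must expand $I_\alpha$, $I_{\alpha-1}$, $I_0$, and $I_1$ to the next order in $\kappa^2$. Using $I_\nu(\kappa)=(\kappa/2)^\nu\Gamma(\nu+1)^{-1}\bigl(1+(\kappa/2)^2/(\nu+1)+O(\kappa^4)\bigr)$ and the recursion $\kappa I_\alpha'(\kappa)=\kappa I_{\alpha-1}(\kappa)-\alpha I_\alpha(\kappa)$ (as I did for $A_0$ and $D_0$) gives, after simplification, $B_0(\kappa)=O(\kappa^{\alpha+2})$, which is enough. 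Combining the three region-wise limits produces \eqref{limit-kernel-0}, and the formula for $r>r'$ follows from the symmetry of the resolvent kernel, completing the lemma.
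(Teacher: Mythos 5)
Your proposal is correct and follows essentially the same route as the paper's proof: extract the asymptotics of the coefficients $f_0(\kappa)$ and $g_0(\kappa)$ (equivalently $A_0, B_0, D_0$) from \eqref{H+sec-coef} and \eqref{KI-zero}, then pass to the limit in \eqref{H+sec-green} region by region, with the $\log\kappa$ cancellation in the inner region and the smallness $g_0(\kappa)=o(\kappa^{2\alpha})$ in the outer region. You simply carry out in more detail (via the Bessel recurrences and the second-order expansion of $B_0$) the computations the paper states summarily, and your intermediate asymptotics agree with those recorded in the paper's proof.
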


\begin{proof}
With \eqref{H+sec-coef} and \eqref{KI-zero}
we obtain 
\begin{align*}
f_0(\kappa) &= \frac{1}{2\alpha} - K_0(\kappa) + o(1)  \qquad\text{as}\ \kappa\to 0, \\
g_0(\kappa) &= o(\kappa^{2\alpha}) \qquad\qquad\qquad\quad \ \text{as}\ \kappa\to 0.
\end{align*}
Hence, using \eqref{H+sec-green} and the asymptotic expansions \eqref{KI-zero} again, if $0<r\leq r'\leq 1$, then
$$
T_0^{-1}(r,r') = \lim_{\kappa\to 0} \big( K_0(\kappa r') I_0(\kappa r) +  f_0(\kappa)  I_0(\kappa r) I_0(\kappa r')  \big) = \frac{1}{2\alpha} -\log ( r')	.
$$
The other two identities in \eqref{limit-kernel-0} are obtained in a similar way. 
\end{proof}

The case $m=\alpha$ yields a similar result.

\begin{lemma} \label{lem-T-alpha}
Let $0<\alpha\in\Z$. Then for any $r,r'\in\R_+$,
\begin{equation} \label{limit-kernel-alpha}
T_\alpha^{-1}(r,r') := \lim_{\kappa\to 0} (T_\alpha+\kappa^2)^{-1}(r,r') = 
 \sqrt{r r'} \times
			\begin{cases}
			        \frac{1}{2\alpha}\  ( r/r')^\alpha   & \text{if}\quad  0 < r\leq r'\leq  1 \,,
				\\[5pt] 
			 	\frac{1}{2\alpha}  + \log ( r)	 & \text{if}\quad    1 < r \leq r' \,, \\[5pt]
		    		\frac{1}{2\alpha} \ (r)^{\alpha}	& \text{if}\quad   0 < r \leq 1 \leq r' \, .
			\end{cases}
\end{equation}
The formula for $r> r'$ is obtained by interchanging the variables. 
\end{lemma}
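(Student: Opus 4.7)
The plan is to mirror the proof of Lemma~\ref{lem-T-0}, using the explicit Sturm--Liouville formula \eqref{H+sec-green} for the resolvent kernel with $m=\alpha$ and computing the pointwise limit as $\kappa\to 0$ separately in each of the three regions. The structure of the problem is essentially dual to the case $m=0$: when $m=\alpha$, the Bessel functions on $(0,1)$ are $I_\alpha,K_\alpha$ (as in Lemma~\ref{lem-T-0} on the outer region) and on $(1,\infty)$ they degenerate to $I_0,K_0$ (as in Lemma~\ref{lem-T-0} on the inner region). This symmetry already suggests the claimed formula.

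The first step is to extract the small-$\kappa$ asymptotics of the coefficients $A_\alpha(\kappa),B_\alpha(\kappa),D_\alpha(\kappa)$ from \eqref{H+sec-coef}. Using the recurrences $zI_\nu'(z)=\nu I_\nu(z)+zI_{\nu+1}(z)$ and $zK_\nu'(z)=\nu K_\nu(z)-zK_{\nu+1}(z)$ together with \eqref{KI-zero}, one obtains
\begin{align*}
A_\alpha(\kappa) & = I_\alpha(\kappa)\bigl[-2\alpha\log\kappa+\tau+o(1)\bigr], \\
B_\alpha(\kappa) & = -2\alpha I_\alpha(\kappa)\bigl[1+o(1)\bigr], \\
D_\alpha(\kappa) & = -K_\alpha(\kappa)\bigl[1+o(1)\bigr],
\end{align*}
where $\tau = 1+2\alpha C$ with $C$ the constant in $K_0(z)=-\log z+C+O(z^2|\log z|)$. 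The non-trivial point is that in $D_\alpha(\kappa)$ the two leading contributions $-2\alpha K_\alpha(\kappa)K_0(\kappa)$ cancel once we substitute $\kappa K_{\alpha+1}(\kappa)=2\alpha K_\alpha(\kappa)(1+o(1))$, leaving the subleading $\kappa K_\alpha(\kappa)K_0'(\kappa)\sim-K_\alpha(\kappa)$ as the dominant term.

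From these I read off the two ratios
\[
g_\alpha(\kappa)=\frac{1}{\log\kappa}+\frac{\tau}{2\alpha(\log\kappa)^2}+o\!\left(\frac{1}{(\log\kappa)^2}\right),\qquad f_\alpha(\kappa)\sim\frac{\Gamma(\alpha)^2(\kappa/2)^{-2\alpha}}{4\log\kappa}.
\]
Plugging these into \eqref{H+sec-green} and applying \eqref{KI-zero} and \eqref{IK-gamma} handles the easy regions. In $0<r\leq r'\leq1$, the term $I_\alpha(\kappa r)K_\alpha(\kappa r')$ converges to $\frac1{2\alpha}(r/r')^\alpha$ by \eqref{IK-gamma}, while $f_\alpha(\kappa)I_\alpha(\kappa r)I_\alpha(\kappa r')=O(1/\log\kappa)\to 0$; in the mixed region $0<r\leq 1\leq r'$, writing $I_\alpha(\kappa r)/I_\alpha(\kappa)\to r^\alpha$ and $K_0(\kappa r')/(-2\alpha\log\kappa)\to 1/(2\alpha)$ gives directly $r^\alpha/(2\alpha)$.

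The main obstacle, and the delicate case, is the outer region $1<r\leq r'$, where $I_0(\kappa r)K_0(\kappa r')$ and $g_\alpha(\kappa)K_0(\kappa r)K_0(\kappa r')$ each diverge like $|\log\kappa|$. Here I would expand $K_0(\kappa s)=-\log\kappa-\log s+C+o(1)$ and use the two-term asymptotic expansion of $g_\alpha(\kappa)$ above. The $(\log\kappa)^2$-contribution from $g_\alpha K_0K_0$ exactly cancels the $-\log\kappa$-term of $I_0K_0$ plus the leading linear-in-$\log\kappa$ piece; the surviving finite remainder is $\log r-C+\tau/(2\alpha)=\log r+\frac{1}{2\alpha}$, which is the claimed value. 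It is precisely this cancellation that forces one to compute $A_\alpha(\kappa)$ to two leading orders in $\log\kappa$; once the constant $\tau$ is pinned down, the rest of the argument is routine bookkeeping in the three regions.
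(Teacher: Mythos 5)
Your proposal is correct and follows essentially the same route as the paper: both insert the coefficient formulas \eqref{H+sec-coef} and the Bessel asymptotics \eqref{KI-zero} into the resolvent formula \eqref{H+sec-green}, obtain $f_\alpha(\kappa)=o(\kappa^{-2\alpha})$ and the two-term expansion of $g_\alpha(\kappa)$ (your $\tfrac{1}{\log\kappa}+\tfrac{\tau}{2\alpha(\log\kappa)^2}$ with $\tau=1+2\alpha C$ agrees with the paper's $-K_0(\kappa)^{-1}+\tfrac{1}{2\alpha}K_0(\kappa)^{-2}$), and then pass to the limit region by region, the only delicate cancellation being in $1<r\le r'$, which you handle exactly as the paper does. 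One minor point of hygiene: to justify keeping the $(\log\kappa)^{-2}$ term in $g_\alpha$ you should record that the $o(1)$ remainders in your expansions of $A_\alpha(\kappa)$, $B_\alpha(\kappa)$ (and the relative error in $\kappa K_{\alpha+1}(\kappa)=2\alpha K_\alpha(\kappa)(1+o(1))$ used for $D_\alpha$) are actually $O(\kappa^2|\log\kappa|)$, hence $o(1/\log\kappa)$, which your computation indeed yields.
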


\begin{proof}
As in the proof of Lemma \ref{lem-T-0} we use asymptotic expansions \eqref{KI-zero} to find 
\begin{align*}
f_\alpha(\kappa) &= o(\kappa^{-2\alpha})\qquad\text{as}\ \kappa\to 0, \\
g_\alpha(\kappa) &= -\frac{1}{K_0(\kappa)}+\frac{1}{2\alpha K_0^2(\kappa)} + o\big(K_0^{-2}(\kappa)\big)\qquad\text{as}\ \kappa\to 0,
\end{align*}
and, consequently, for $1<r\leq r'$,
$$
T_\alpha^{-1}(r,r') = \lim_{\kappa\to 0} \big(K_0(\kappa r') I_0(\kappa r) +  g_\alpha(\kappa)  K_0(\kappa r) K_0(\kappa r')  \big) = \frac{1}{2\alpha}+\log (r) \, .
$$
 The remaining parts of equation \eqref{limit-kernel-alpha} follow in the same way.
\end{proof}

From the previous two lemmas, we deduce

\begin{proposition} \label{prop-H+radial+alpha}
Let $B$ satisfy \eqref{ass-B} and suppose that $\alpha>0$.
\begin{enumerate}
\item If $V \in L^1_{\rm loc}(\R^2)$ and $V  \log |x  | \in L^1(\B_1)$, then 
there exists a constant $C=C(B)$ such that 
\begin{equation} 
N(P _0\, \h_\pp P_0-  P_0 V P_0) \, \leq\,  C \int_{\R^2}  V(x) \big(1+ \id_{\!\B_1}(x) | \log |x || \big)\, dx .
\end{equation}
\item If $\alpha\in \Z$ and $V \in L_{\rm loc}^1(\R^2)$ and $V  \log |x  | \in L^1(\R^2)$, then 
there exists a constant $C=C(B)$ such that 
\begin{equation} 
N(P _\alpha\, \h_\pp P_\alpha-  P_\alpha V P_\alpha) \, \leq\,  C \int_{\R^2}  V(x) \big(1+\id_{\R^2\setminus\B_1} (x)| \log |x || \big)\, dx .
\end{equation}
\end{enumerate}

\end{proposition}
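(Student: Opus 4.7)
The plan is to mirror the strategy used in Section \ref{sec-upperb-minus}: combine the reductions \eqref{upperb-PHP-0}--\eqref{H+sec-compare-1} (respectively \eqref{upperb-PHP-alpha}) with a Birman--Schwinger trace estimate applied to the one-dimensional operator $T_0$ (respectively $T_\alpha$), whose zero-energy Green's function has been computed in Lemma \ref{lem-T-0} (respectively Lemma \ref{lem-T-alpha}).

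For part (1), those reductions yield
\[
N(P_0 \h_\pp P_0 - P_0 V P_0)_{L^2(\R^2;(1+|x|)^{2\alpha}\,dx)} \;\leq\; N(T_0 - 2^{2\alpha}\V)_{L^2(\R_+)},
\]
so it suffices to bound the right hand side. Since $T_0\geq 0$, I would write $N(T_0 - 2^{2\alpha}\V) = \lim_{\kappa\to 0+} N(T_0+\kappa^2-2^{2\alpha}\V)$ by monotonicity and then apply the Birman--Schwinger principle: for each $\kappa>0$ the latter counting function equals the number of eigenvalues $\geq 1$ of the nonnegative compact operator $K_\kappa := 2^{2\alpha}\sqrt{\V}\,(T_0+\kappa^2)^{-1}\sqrt{\V}$, which in turn is bounded by $\tr(K_\kappa)$. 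Using the explicit kernel \eqref{H+sec-green} together with the expansions \eqref{KI-zero}, the trace converges as $\kappa\to 0+$ to $2^{2\alpha}\int_0^\infty T_0^{-1}(r,r)\,\V(r)\,dr$. By Lemma \ref{lem-T-0} the diagonal satisfies $T_0^{-1}(r,r)\lesssim_{\,\alpha} r(1+|\log r|)$ on $(0,1]$ and $T_0^{-1}(r,r)\lesssim_{\,\alpha} r$ on $(1,\infty)$, which combined with $\V(r) = (2\pi)^{-1}\int_0^{2\pi} V(r,\theta)\,d\theta$ produces exactly the bound $\int_{\R^2} V\,dx + \int_{\B_1} V\,\bigl|\log|x|\bigr|\,dx$ claimed in part (1).

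Part (2) is entirely parallel, with $T_\alpha$ in place of $T_0$ and Lemma \ref{lem-T-alpha} replacing Lemma \ref{lem-T-0}. The diagonal of the Green's function now behaves like $r$ on $(0,1]$ and like $r(1+\log r)$ on $(1,\infty)$, which explains why the logarithmic weight is now supported in $\R^2\setminus\B_1$ rather than in $\B_1$.

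The main obstacle will be verifying positive definiteness of the limit kernels of Lemmas \ref{lem-T-0} and \ref{lem-T-alpha} on $\R_+\times\R_+$, since that is what legitimates replacing the Birman--Schwinger eigenvalue count by the trace. I plan to decompose each limit kernel into a homogeneous piece proportional to $\sqrt{rr'}\,\bigl(\min\{r/r',r'/r\}\bigr)^\alpha$, handled exactly as in Lemma \ref{lem-positive-def_lem3-3}, plus a correction supported on $(0,1)^2$ or $(1,\infty)^2$ that can be identified with the Green's function of an auxiliary subcritical Sturm--Liouville operator and hence is positive definite by construction, in the spirit of Lemmas \ref{lem-01} and \ref{lem-1infty-bis}. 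Once positive definiteness is in place, the remaining interchange of $\kappa\to 0+$ with the trace is a routine monotone-convergence argument parallel to \eqref{BS-op-trace}, and the proposition follows.
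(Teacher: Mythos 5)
Your skeleton coincides with the paper's own proof: the reductions \eqref{upperb-PHP-0}, \eqref{upperb-PHP-alpha} and \eqref{H+sec-compare-1}, the Birman--Schwinger principle, the bound of the eigenvalue count by the trace, and reading off the weights from the diagonals of $T_0^{-1}$ and $T_\alpha^{-1}$ in Lemmas \ref{lem-T-0} and \ref{lem-T-alpha} are exactly the steps taken there. The genuine gap sits in the step you yourself flag as the main obstacle: your plan for proving positive definiteness of the limit kernels does not work. The difference between $T_0^{-1}(r,r')$ and the homogeneous kernel $\frac{\sqrt{rr'}}{2\alpha}\bigl(\min\{r/r',r'/r\}\bigr)^\alpha$ is \emph{not} supported in $(0,1)^2$: for $0<r\le 1\le r'$ it equals $\frac{\sqrt{rr'}}{2\alpha}(r')^{-\alpha}(1-r^\alpha)\neq 0$. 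Worse, this difference is not positive definite at all: with $r_1=\tfrac12$, $r_2=2$ one gets the $2\times 2$ matrix with entries $D(r_1,r_1)=\tfrac12\log 2$, $D(r_2,r_2)=0$, $D(r_1,r_2)=\frac{1}{2\alpha}\bigl(2^{-\alpha}-4^{-\alpha}\bigr)>0$, whose determinant is negative. On the operator level your splitting would amount to $T_0\le -\partial_r^2+(\alpha^2-\tfrac14)r^{-2}$, which fails because of the positive $\delta$-term $\alpha|v(1)|^2$ in \eqref{H+sec-T-form}. By the relation $T_\alpha^{-1}(r,r')=rr'\,T_0^{-1}(1/r,1/r')$ used in Lemma \ref{lem-pos-def-alphs}, the same objection applies to part (2).

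The correct decomposition (Lemma \ref{lem-pos-def-0}) goes the other way around: subtract from $T_0^{-1}$ the logarithmic kernel $-\sqrt{rr'}\,\log\max\{r,r'\}$ \emph{restricted to} $(0,1)^2$, which is positive definite by Lemma \ref{lem-positive-def_lem3-2}; the remainder is then the homogeneous kernel evaluated at $(\max(r,1),\max(r',1))$ — i.e.\ flattened to a constant in $r$ on $(0,1)$, not the kernel of Lemma \ref{lem-positive-def_lem3-3} itself — and its positive definiteness is shown by an explicit integral-of-squares representation. Alternatively, your detour through $\kappa>0$ can be made to carry the whole argument without any limit-kernel positivity: for $\kappa>0$ the operator $\sqrt{\V}\,(T_0+\kappa^2)^{-1}\sqrt{\V}$ is automatically nonnegative with continuous kernel, so the number of its eigenvalues $\ge 1$ is bounded by the integral of its diagonal, which increases as $\kappa\downarrow 0$ to $\int_0^\infty \V(r)\,T_0^{-1}(r,r)\,dr$, and $N(T_0-2^{2\alpha}\V)=\lim_{\kappa\to 0^+}N(T_0+\kappa^2-2^{2\alpha}\V)$ then yields the claim. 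As written, however, your argument hinges on a false decomposition, so this step must be replaced; the remaining ingredients (the reductions, the diagonal bounds, and the bookkeeping with $\V$) are correct and match the paper.
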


\begin{proof} (1) The integral kernel  $\sqrt{\V(r)} \ T_0^{-1}(r,r')\, \sqrt{\V(r')}$ is positive-definite, see Lemma \ref{lem-pos-def-0}. Hence, arguing as in the proof of Lemma \ref{lem-01}, we conclude with the Birman-Schwinger principle
\begin{equation} \label{t0-upperb}
N( T_0 -  \V)_{L^2(\R_+)}\,    \leq\,   \tr \big(\sqrt{\V} \ T_0^{-1}\, \sqrt{\V} \, \big) \, \leq C \int_0^\infty  \V(r)\, (1+\id_{(0,1)}  (r) | \log r|\, ) r\, dr\, .
\end{equation}
The claim thus follows from \eqref{upperb-PHP-0}, \eqref{H+sec-compare-1} and \eqref{V-m-eq}. 

(2) In view of Lemma \ref{lem-pos-def-alphs} the integral kernel  $\sqrt{\V(r)} \ T_\alpha^{-1}(r,r')\, \sqrt{\V(r')}$ is positive-definite, hence 
\begin{equation} \label{t-alpha-upperb}
N( T_\alpha -  \V)_{L^2(\R_+)}\,    \leq\,   \tr \big(\sqrt{\V} \ T_\alpha^{-1}\, \sqrt{\V} \, \big) \leq C\,  \int_0^\infty  \V(r)\, (1+\id_{(1,\infty)}  (r) | \log r|\, ) r\, dr\, .
\end{equation}
The claim then follows from \eqref{upperb-PHP-alpha}, \eqref{H+sec-compare-1} and \eqref{V-m-eq}.
\end{proof}

Combining Propositions \ref{prop-Hp-perp-1} and \ref{prop-H+radial+alpha}, we find

\begin{proposition}\label{prop-H+}
Let $B$ satisfy \eqref{ass-B} and suppose that $\alpha>0$.
\begin{enumerate} 
\item Assume that $\alpha \not\in\Z$. Then for any $p>1$ there exist constants $C_1=C_1(B,p)$ and $C_2=C_2(B)$ such that 
\begin{equation} 
N(  H_\pp\, -   V) \, \leq\,  C_1\, \|V\|_{1,p} +C_2\,  \| V  \log |x|\|_{L^1(\!\B_1)} 
\end{equation}
for all  $V \in L^1(\R_+, L^p(\Sph))$ with  $V  \log |\cdot | \in L^1(\B_1)$.

\item Assume that $\alpha \in\Z$. Then for any $p>1$ there exist constants $\CC_1=\CC_1(B,p)$ and $\CC_2=\CC_2(B)$ such that 
\begin{equation} 
N(  H_\pp\, -   V) \, \leq\,   \CC_1\, \|V\|_{1,p}  +\CC_2\,  \| V\log |x|\|_{L^1(\R^2)} 
\end{equation}
for all  $V \in L^1(\R_+, L^p(\Sph))$ with  $V  \log |\cdot | \in L^1(\R^2)$.
\end{enumerate}
\end{proposition}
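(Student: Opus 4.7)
The proof will be a straightforward assembly of the preceding results in this section; essentially all the hard work has already been done in Propositions~\ref{prop-Hp-perp-1} and \ref{prop-H+radial+alpha}. The plan is to use the weighted comparison \eqref{passage-weighted} to reduce $N(H_+-V)$ to a counting function for $\h_+$ on the weighted space $L^2(\R^2;(1+|x|)^{2\alpha}\,dx)$, then apply the operator-level projection splittings \eqref{split-2a}--\eqref{split-2b} depending on whether $\alpha \in \Z$ or not, and finally estimate each summand by the propositions mentioned.

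More precisely, the argument will proceed as follows. In both cases we start from
\begin{equation*}
N(H_+ - V)_{L^2(\R^2)} \leq N(\h_+ - M_+^2\, V)_{L^2(\R^2;(1+|x|)^{2\alpha}\,dx)},
\end{equation*}
which is \eqref{passage-weighted}. For $\alpha \notin \Z$ I will invoke \eqref{split-2a} with $V$ replaced by $M_+^2 V$, obtaining two terms. The $P_0^\perp$-term is bounded by $C\,\|V\|_{1,p}$ thanks to \eqref{hp-perp-upperb-bis} of Proposition \ref{prop-Hp-perp-1} (absorbing the factor $2 M_+^2$ into the constant $C_1$); the $P_0$-term is bounded by a constant times $\int_{\R^2}V_+(x)(1+\id_{\B_1}(x)|\log|x||)\,dx$ by part (1) of Proposition \ref{prop-H+radial+alpha}. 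Applying Hölder's inequality $\|V\|_{L^1(\R^2)} \leq (2\pi)^{(p-1)/p}\,\|V\|_{1,p}$ to the non-logarithmic part of the latter, I can rewrite the whole estimate in the form $C_1\|V\|_{1,p}+C_2\|V\log|x|\|_{L^1(\B_1)}$ as required.

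For $\alpha \in \Z$ the plan is analogous but using \eqref{split-2b}. The $(P_0+P_\alpha)^\perp$-term is controlled by $C\|V\|_{1,p}$ via \eqref{hp-perp-upperb} of Proposition \ref{prop-Hp-perp-1}. The $P_0$-term produces the local logarithmic integral $\int V_+(1+\id_{\B_1}|\log|x||)\,dx$ by part (1) of Proposition~\ref{prop-H+radial+alpha}, while the $P_\alpha$-term produces the complementary integral $\int V_+(1+\id_{\R^2\setminus\B_1}|\log|x||)\,dx$ by part~(2). Summing these two and noting that $\id_{\B_1}+\id_{\R^2\setminus\B_1}\equiv 1$, the combined $P_0$- and $P_\alpha$-contributions are dominated by a constant times $\|V\|_{L^1(\R^2)} + \|V\log|x|\|_{L^1(\R^2)}$; absorbing the $L^1$-norm via Hölder into the $\|V\|_{1,p}$-term (lumped with the $(P_0+P_\alpha)^\perp$ contribution) yields the desired global bound $\CC_1\|V\|_{1,p} + \CC_2\|V\log|x|\|_{L^1(\R^2)}$.

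There is no genuine obstacle here: all the spectral and kernel estimates have been established, and the proof reduces to bookkeeping of constants and collecting the appropriate integrals. The only mildly delicate point is ensuring that the factors $M_+^2$ appearing in \eqref{passage-weighted} and the prefactors in \eqref{split-2a}, \eqref{split-2b} (namely $2$ and $4$) are safely absorbed into the constants $C_1, C_2, \CC_1, \CC_2$, since $B$ alone determines $M_+$ via \eqref{h-bounds}--\eqref{M-def}, so these multiplicative factors remain $B$- (and $p$-)dependent constants.
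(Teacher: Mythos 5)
Your proposal is correct and follows essentially the same route as the paper: the paper's proof of Proposition \ref{prop-H+} is exactly the assembly of \eqref{passage-weighted}, the splittings \eqref{split-2a}--\eqref{split-2b}, and Propositions \ref{prop-Hp-perp-1} and \ref{prop-H+radial+alpha}, with the $L^1$-part absorbed into $\|V\|_{1,p}$ via H\"older as you describe. Your remarks on absorbing $M_+^2$ and the factors $2$ and $4$ into the $B$- and $p$-dependent constants match the paper's (implicit) bookkeeping.
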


\begin{proof}
 This follows from equations \eqref{passage-weighted},  \eqref{split-2a} and \eqref{split-2b} combined with Propositions \ref{prop-Hp-perp-1} and \ref{prop-H+radial+alpha}.
\end{proof}


\section{\bf Strong coupling asymptotic}
\label{sec-strong-coupling}

In this section we are going to discuss the behavior of $N( H_\ppm -\lambda V)$ and $N\big ((i\nabla +A)^2-  \lambda V\big)$ in the limit $\lambda\to \infty$. 
We start by the case of regular fast decaying potentials $V$ in which the strong coupling asymptotic of the counting function displays the semi-classical 
behavior.

\subsection*{Semi-classical behavior} If $B$ and $V$ are bounded and compactly supported, then
\begin{equation} \label{strong-coupling}
N\big ((i\nabla +A)^2-  \lambda V\big) ,\; N( H_\m -\lambda V) , \; N( H_\pp -\lambda V) =
\frac{\lambda}{2\pi} \, \int_{\R^2} V(x)_\pp\, dx  + o(\lambda)\, 
\end{equation}
as $\lambda\to \infty$. This follows from \cite[Theorem.~1.1 \and Remark 1.2]{rai}, see also \cite{FK}.

\subsection*{Non semi-classical behavior} As mentioned in Section \ref{sec-intro}, the strong coupling asymptotic of $N (\PP - \lambda V)$
might display a non-semiclassical behavior even for potentials in $L^1(\R^2)$. 

\begin{proposition}[Slowly decaying potentials] \label{prop-semiclass-1}
Let $B$ satisfy assumption \eqref{ass-B} and assume $0<\alpha \in \mathbb{Z}$. Let $V\in L^q_{\rm loc}(\R^2),  q>1$ and assume that 
\begin{equation}  \label{cond-asymp}
V(x) - W_\sigma(x) = o(W_\sigma(x)) \qquad |x|\to \infty, 
\end{equation} 
for some $\sigma >1$. Then
\begin{equation} 
N\big (  H_\pm -\lambda V\big) \, \asymp\, \lambda^\sigma \qquad \text{as } \lambda\to \infty.
\end{equation}
\end{proposition}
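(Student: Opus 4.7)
The plan is to establish both $N(H_\pm - \lambda V) \lesssim \lambda^\sigma$ and $N(H_\pm - \lambda V) \gtrsim \lambda^\sigma$ as $\lambda \to \infty$, by tracking the same critical angular-momentum channel through both directions and comparing it to the Birman--Laptev asymptotic \eqref{bl-asymp} for $-\Delta - \lambda W_\sigma$.

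\textbf{Critical channel identification.} Using \eqref{passage-weighted} to pass to $\h_\pm$ on the weighted spaces, and expanding $v$ into angular Fourier modes, the substitution $\psi = r^{\pm m}g$ shows that the $m$-th radial form of $\h_\pm$ reads
$$\mathcal{Q}_\pm^{(m)}[v] \;=\; 2\pi\int_0^\infty (1+r)^{\mp 2\alpha}\, r^{1\pm 2m}\,|g'(r)|^2\, dr,$$
which is equivalent to the 2D radial Dirichlet form $\int r|g'|^2 dr$ near infinity \emph{precisely} for $m=\alpha$, for both signs. Away from this critical channel an effective repulsive inverse-square potential $\gtrsim r^{-2}$ appears at infinity, making the counting function in that channel grow at most linearly in $\lambda$.

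\textbf{Upper bound.} Apply the Cauchy--Schwarz/projection splittings \eqref{split-1} for $H_-$ and \eqref{split-2b} for $H_+$. The $P^\perp$-contributions are $O(\lambda)$ by Propositions \ref{prop-Hm-perp-1} and \ref{prop-Hp-perp-1}, since $\|\V\|_{L^1(\R^2,\,r dr)} < \infty$ under \eqref{cond-asymp} (because $W_\sigma$ is radial and integrable). The subcritical channels within $P$ contribute $O(\lambda)$ via Lemmas \ref{lem-01} and \ref{lem-1infty} (together with their $H_+$ analogues from Section \ref{sec-upperb-plus}). For the critical channel $m=\alpha$ the unitary equivalence \eqref{h-alpha-upperb} (for $H_-$) and its $H_+$ counterpart built from Lemma \ref{lem-T-alpha} give
$$N(h_\alpha^\pm - c\lambda\V) \;=\; N(-\Delta - c\lambda\V)_{L^2(\R^2\setminus\B_1)}^{\mathrm{rad,\,Dir}}$$
with $c=c(\alpha,B)>0$. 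Since $\V \sim W_\sigma$ at infinity by \eqref{cond-asymp}, the Birman--Laptev argument from \cite[Sec.~6]{bl} extends to this Dirichlet setting and yields an $\asymp \lambda^\sigma$ bound on the right-hand side. Using $\sigma>1$ to absorb the $O(\lambda)$ terms we conclude $N(H_\pm - \lambda V) \lesssim \lambda^\sigma$.

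\textbf{Lower bound.} Apply the variational principle in $L^2((1+|x|)^{\mp 2\alpha}dx)$ to the form of $\h_\pm - M_\pm^{-2}\lambda V$, restricted to the trial subspace spanned by $v(x) = e^{i\alpha\theta}\, r^{\pm\alpha}\, g(r)$ with $g \in H^1_0((R,\infty))$ for $R$ large. Because $2^{-2\alpha}\leq (1+r)^{\mp 2\alpha} r^{\pm 2\alpha}\leq 1$ for $r\geq 1$, the strict form inequality $\mathcal{Q}_\pm[v] < M_\pm^{-2}\lambda\int V|v|^2(1+|x|)^{\mp 2\alpha}dx$ on such $v$ is implied by
$$\int_R^\infty r|g'|^2\, dr \;<\; c\lambda \int_R^\infty \V(r)\,r|g|^2\, dr,$$
for some $c = c(\alpha,B) > 0$. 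By \eqref{cond-asymp}, $\V$ and $W_\sigma$ are comparable in leading order on $\{|x|>R\}$, so the Birman--Laptev construction \cite[Sec.~6]{bl} supplies $\asymp \lambda^\sigma$ linearly independent $g$'s satisfying this inequality. Passing back through \eqref{passage-weighted} yields $N(H_\pm - \lambda V) \gtrsim \lambda^\sigma$.

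\textbf{Main obstacle.} The delicate step is to certify that the BL asymptotic \eqref{bl-asymp} survives two modifications: (i) the Dirichlet cutoff at $|x|=1$ imposed by the channel reduction, and (ii) the replacement $W_\sigma \leadsto \V$, which is known only through \eqref{cond-asymp}. Both are controlled by the fact that the BL construction concentrates trial functions in a sequence of well-separated annuli $\{e^{e^k}<|x|<e^{e^{k+1}}\}$ escaping to infinity; only finitely many such annuli intersect any bounded region or can see the discrepancy $\V - W_\sigma = o(W_\sigma)$, so the number of eigenvalues affected is bounded uniformly in $\lambda$ and cannot disturb the leading $\lambda^\sigma$.
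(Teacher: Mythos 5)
Your proposal is essentially correct and, for $H_\pp$, follows the same route as the paper: pass to $\h_\pp$ via \eqref{passage-weighted}, isolate the critical angular channel $m=\alpha$ by the Schwarz splitting, control everything else by the linear-in-$\lambda$ bounds (Propositions \ref{prop-Hm-perp-1}, \ref{prop-Hp-perp-1}, Lemma \ref{lem-01}, Lemma \ref{lem-1infty}), and reduce the critical channel, after Dirichlet bracketing at $r=1$ and replacing $(1+r)^{\pm 2\alpha}$ by $r^{\pm2\alpha}$, to $-\partial_r^2-\tfrac{1}{4r^2}$ on $L^2(1,\infty)$, where Birman--Laptev gives the $\lambda^\sigma$ asymptotics. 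Where you genuinely deviate is the treatment of $H_\m$ and of the lower bound: the paper deduces the $H_\m$ case from the $H_\pp$ case through the form comparison $\h_\pp-2|B|\le\h_\m\le\h_\pp+2|B|$ (absorbing $|B|$ into the potential, which is harmless since $|B|=\mathcal O(|x|^{-2-\eps})$ decays faster than $W_\sigma$), and it obtains the lower bound ``for free'' from the two-sided Schwarz splitting \eqref{two-sided} together with the two-sided Dirichlet bracketing \eqref{DBC-2}--\eqref{reduced-weights-2} and the exact BL limit; you instead analyze the $m=\alpha$ channel of $\h_\m$ directly (consistent with \eqref{h-alpha-upperb}) and build an explicit trial-space lower bound with $v=e^{i\alpha\theta}r^{\pm\alpha}g(r)$ supported in disjoint far-out annuli. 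Both routes work; the paper's comparison trick is shorter, while your direct channel analysis makes transparent why $m=\alpha$ is critical for both spin components. Two small imprecisions to fix, neither a real gap: the displayed identity $N(h_\alpha^\ppm-c\lambda\V)=N(-\Delta-c\lambda\V)^{\mathrm{rad,Dir}}_{L^2(\R^2\setminus\B_1)}$ is not an equality but a pair of two-sided estimates up to the constants $4^{\pm\alpha}$, the rank-one Dirichlet error $\le 1$, and an $O(\lambda)$ contribution from $(0,1)$ (which is all you need for $\asymp\lambda^\sigma$); and the $P^\perp$ terms require $\|V_\pp\|_{1,p}<\infty$ for some $p>1$, not merely $\V\in L^1(r\,dr)$ --- this does follow from $V\in L^q_{\rm loc}$, $q>1$, plus \eqref{cond-asymp}, but should be said that way.
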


\begin{proof}
In view of \eqref{passage-weighted} it suffices to prove the claim for $N(  \h_\pm -\lambda V)$. Let us consider first the operator $\h_\pp$.  Applying \eqref{schwarz-eps} with 
$\Pi=P_\alpha$ and $\eps<1$ we find
\begin{equation} \label{two-sided}
N\big (  P_\alpha (\h_\pp -(1-\eps) \lambda V_\pp) P_\alpha \big) \leq N\big (  \h_\pp -\lambda V\big)  \leq  N\big (  P_\alpha (\h_\pp -(1+\eps) \lambda V_\pp) P_\alpha \big) 
+ N\big (  P^\perp_\alpha (\h_\pp -(1+\eps^{-1}) \lambda V) P^\perp_\alpha \big) .
\end{equation} 
Note that $W_\sigma\in  L^1(\R_+, L^p(\Sph))$ for any $\sigma >1$ and for any $p>1$. Hence by assumptions on $V$  we have $V_\pp \in L^1(\R_+, L^p(\Sph))$ for any $p >1$ and $V_\pp \log |\cdot | \in L^1_{\rm loc}(\R^2)$. 
Equation \eqref{hp-perp-upperb} in combination with Proposition
\ref{prop-H+radial+alpha} (1) then gives
$$
 \limsup_{\lambda\to \infty} \lambda^{-\sigma}  N\big (  P^\perp_\alpha (\h_\pp -(1+\eps^{-1}) \lambda V_\pp) P^\perp_\alpha \big) =0
$$
for any $0<\eps<1$.
As for the first term in \eqref{two-sided}, we note that 
\begin{equation}
N\big (  P_\alpha (\h_\pp - \lambda V_\pp ) P_\alpha \big) =  N( h_\alpha^\pp- \lambda \V_\pp )_{L^2(\R_+; (1+r)^{2\alpha}\, r dr)}, 
\end{equation}
where $h_\alpha^\pp$ is the operator in $L^2(\R_+; (1+r)^{2\alpha}\, r dr)$ associated with the quadratic form \eqref{qm+form} for $m=\alpha$, and where 
$$
\V_\pp = \frac{1}{2\pi} \int_0^{2\pi}\, V_\pp(r,\theta)\, d\theta\, .
$$
Repeating the reasoning of the proof of Proposition \ref{prop-Hm-perp-1}, see in particular equations \eqref{DBC-1} and \eqref{reduced-weights}, we find that 
\begin{align} \label{DBC-2}
0\leq N( h_\alpha^+ -  \V_\pp)_{L^2(\R_+; (1+r)^{2\alpha}\, r dr)}- N(  \mathfrak{h}_\alpha^\pp -  \V_\pp)_{L^2((0,1); (1+r)^{2\alpha}\, r dr)}    -  N( \mathfrak{h}_\alpha^\pp -  \V_\pp)_{L^2((1,\infty); (1+r)^{2\alpha}\, r dr)}   &\leq 1,
\end{align} 
where the  operator $ \mathfrak{h}_\alpha^\pp$ acts in $L^2((0,1), (1+r)^{2\alpha}\, r dr)$ respectively $L^2((1,\infty), (1+r)^{2\alpha}\, r dr)$  as $h_\alpha^\pp$ with additional Dirichlet boundary condition at $r=1$. 
Similarly as in \eqref{reduced-weights} we replace the integral weight $(1+r)^{2\alpha}$ by $1$ on $(0,1)$ and by $r^{2\alpha}$ on $(1,\infty)$. This implies 
\begin{equation} \label{reduced-weights-2}
\begin{aligned} 
N( h_{\alpha,1}^\pp -  4^{-\alpha}\, \V_\pp)_{L^2((0,1); \, r dr)} \, \leq\, N( \mathfrak{h}_\alpha^\pp -  \V_\pp)_{L^2((0,1); (1+r)^{2\alpha}\, r dr)}  \,& \leq \, N( h_{\alpha,1}^\pp -  4^\alpha\, \V_\pp)_{L^2((0,1); \, r dr)}  \\[6pt]
N( h_{\alpha,2}^\pp -  4^{-\alpha}\, \V_\pp)_{L^2((1,\infty); \, r^{1+2\alpha} dr)} \leq N( \mathfrak{h}_\alpha^\pp -  \V_\pp)_{L^2((1,\infty); (1+r)^{2\alpha}\, r dr)}  \, &\leq \, N(h_{\alpha,2}^\pp -  4^\alpha\, \V_\pp)_{L^2((1,\infty); \, r^{1+2\alpha} dr)} ,
\end{aligned}
\end{equation} 
where the operators $h_{\alpha,1}^\pp$ and $h_{\alpha,2}^\pp$ 
are associated with quadratic forms 
\begin{align} 
q_{\alpha,1}^\pp[v]  &=  \int_0^1   \big | \partial_r( r^{\alpha}\, v(r))\big|^2\, r^{1-2\alpha}\, dr , \quad \ v\in H^1((0,1), r dr), \qquad  \ \quad v(1)=0,
\\[4pt]
q_{\alpha,2}^\pp[v] &=  \int_1^\infty   \big | \partial_r( r^\alpha\, v(r))\big|^2\, r\, dr ,  \quad v\in H^1((1,\infty), r^{1+2\alpha} dr),  \qquad  \ \ v(1)=0.
\end{align} 

Note that the operator $h_{\alpha,1}^\pp$ is identical with the operator $h_{-\alpha,1}^-$ defined in \eqref{q-forms-dirichlet}. By Lemma \ref{lem-01}, 
$$
 \limsup_{\lambda\to \infty} \lambda^{-\sigma} N( h_{\alpha,1}^\pp -  \lambda \V_\pp)_{L^2((0,1); \, r dr)} = 0. 
$$
On the other hand, the application of the transform $\U v= r^{\alpha+\frac 12}\, v$, which maps $L^2(\R_+, r^{1+2\alpha} dr)$  unitarily onto $L^2(\R_+)$ gives  
$$
 N(h_{\alpha,2}^\pp -  \V_\pp)_{L^2((1,\infty); \, r^{1+2\alpha} dr)}  =  N\big( -\partial_r^2- \frac{1}{4 r^2} -  \V_\pp\big)_{L^2(1,\infty)} 
$$
Since $W_\sigma \geq 0$, it follows that $V_\pp$ satisfies condition \eqref{cond-asymp}. Hence 
\begin{equation} 
\lim_{\lambda\to \infty} \lambda^{-\sigma} N\big( -\partial_r^2- \frac{1}{4 r^2} - \lambda \V_\pp\big)_{L^2(1,\infty)}  = \frac{\Gamma\big(\sigma -\frac 12\big)}{2\sqrt{\pi} \ \Gamma(\sigma)}\, ,
\end{equation} 
see \cite[Sec.~4.4 and Prop.~6.1(b)]{bl}. Upon inserting the above estimates into equations \eqref{DBC-2} and  \eqref{two-sided} we thus get 
\begin{equation}  \label{h-plus-semicl}
0< \liminf_{\lambda\to \infty} \lambda^{-\sigma} N\big (   \h_\pp -\lambda V\big) \leq \limsup_{\lambda\to \infty} \lambda^{-\sigma} N\big ( \h_\pp -\lambda V\big ) < \infty
\end{equation} 
as claimed.

Next we consider the operator $\h_\m$. As quadratic forms, 
$$
\h_\pp - 2|B| \, \leq \, \h_\m  \leq \, \h_\pp + 2|B|\, .
$$
Hence for any $\lambda\geq 1$,
$$
N\big (   \h_\pp -\lambda (V-2 |B|) \big)\, \leq \, N\big (   \h_\m -\lambda V\big) \, \leq\, N\big (   \h_\pp -\lambda (V+2|B|) \big)
$$
Since $V$ satisfies \eqref{cond-asymp}, Assumption \eqref{ass-B} ensures that so does $V\pm 2 |B|$. The claim now follows from \eqref{h-plus-semicl}.
\end{proof}

\begin{proposition}[Potentials with local singularities] \label{prop-semiclass-2} Let $B$ satisfy assumption \eqref{ass-B} and assume moreover that 
$B\in L^\infty(\R^2)$. 
Let $V \in L^1(\R_+, L^p(\Sph))$ for some $p>1$ and suppose that  $V  \log |\cdot | \in L^1(\R^2\setminus\B_1)$.  
\begin{equation}  \label{cond-asymp-2}
V(x) - V_\sigma(x) = o(V_\sigma(x)) \qquad |x|\to 0, 
\end{equation} 
for some $\sigma >1$, then
\begin{equation} 
N\big (  H_\pm -\lambda V\big) \, \asymp\, \lambda^\sigma \qquad \lambda\to \infty,
\end{equation}
\end{proposition}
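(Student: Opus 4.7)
The plan is to mimic the approach of Proposition~\ref{prop-semiclass-1}, but with the role played by behavior at infinity now transferred to the origin and the critical angular mode switched from $m=\alpha$ to $m=0$. By \eqref{passage-weighted} it suffices to prove $N(\h_\pm - \lambda V) \asymp \lambda^\sigma$. For $\h_\pp$, I would apply the Schwarz inequality \eqref{schwarz-eps} with $\Pi = P_0$ (supplemented by a further splitting with $P_\alpha$ in the integer-flux case $\alpha \in \Z$) to obtain the two-sided sandwich
\begin{align*}
N\big( P_0 (\h_\pp - (1-\eps)\lambda V_\pp) P_0 \big) \,\leq\, N(\h_\pp - \lambda V) \,&\leq\, N\big( P_0 (\h_\pp - (1+\eps)\lambda V_\pp) P_0 \big) \\
&\quad + N\big( P_0^\perp (\h_\pp - (1+\eps^{-1})\lambda V) P_0^\perp \big) ,
\end{align*}
with the obvious adaptation when $\alpha\in\Z$. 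By Proposition~\ref{prop-Hp-perp-1} the $P_0^\perp$ term (or $(P_0+P_\alpha)^\perp$ term) is bounded by $C\lambda \|V\|_{1,p} = o(\lambda^\sigma)$ since $\sigma > 1$. When $\alpha \in \Z$, the additional $P_\alpha$ piece reduces via \eqref{upperb-PHP-alpha} to $N(T_\alpha - c\lambda \V_\pp)_{L^2(\R_+)}$; on $(0,1)$ the operator $T_\alpha$ has subcritical singularity $(\alpha^2-\tfrac14)r^{-2}$ and the Birman--Schwinger trace argument of Proposition~\ref{prop-H+radial+alpha} together with the Birman--Laptev $\asymp \lambda^\sigma$ is absent here, giving only $O(\lambda)$.

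The dominant contribution is the $P_0$-piece. By \eqref{upperb-PHP-0} and the unitary reduction \eqref{H+sec-compare-1}, this equals, up to constants, $N(T_0 - c\lambda \V_\pp)_{L^2(\R_+)}$. Imposing a Dirichlet condition at $r = 1$ is a rank-one perturbation of the resolvent and decouples the half-line into $(0,1)$ and $(1,\infty)$ with $O(1)$ error. On $(1,\infty)$ the quadratic form \eqref{H+sec-T-form} with $m=0$ reduces $T_0$ to $-\partial_r^2 + (\alpha^2 - \tfrac14) r^{-2}$ with Dirichlet at $r=1$; since $\V_\pp\, \log r \in L^1((1,\infty), r\,dr)$ by the hypothesis $V\log|\cdot| \in L^1(\R^2 \setminus \B_1)$, a Birman--Schwinger trace computation analogous to Lemma~\ref{lem-1infty} gives the upper bound $C\lambda = o(\lambda^\sigma)$ for the counting function there.

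On $(0,1)$, the form \eqref{H+sec-T-form} with $m=0$ shows that $T_0$ equals $-\partial_r^2 - \tfrac{1}{4 r^2}$ with Dirichlet at $r=0$ and (after decoupling) at $r=1$. The unitary $v = \sqrt{r}\, u$ identifies this with the radial part of $-\Delta$ on the unit disk. The hypothesis \eqref{cond-asymp-2} yields $\V_\pp(r) - V_\sigma(r) = o(V_\sigma(r))$ as $r \to 0$, and the Birman--Laptev asymptotic \cite[Sec.~6]{bl} (the $V_\sigma$ half of \eqref{bl-asymp}) then implies
\[
\lim_{\lambda\to\infty} \lambda^{-\sigma}\, N\Big( -\partial_r^2 - \tfrac{1}{4 r^2} - \lambda \V_\pp \Big)_{L^2(0,1)} \,=\, \frac{\Gamma(\sigma - \tfrac12)}{2\sqrt{\pi}\, \Gamma(\sigma)} \,.
\]
Combining these estimates gives $N(\h_\pp - \lambda V) \asymp \lambda^\sigma$. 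For $\h_\m$ the pointwise form bound $\h_\pp - 2\|B\|_\infty \leq \h_\m \leq \h_\pp + 2\|B\|_\infty$, valid since $B \in L^\infty$, yields
\[
N\big(\h_\pp - \lambda(V + 2\|B\|_\infty/\lambda)\big) \,\leq\, N(\h_\m - \lambda V) \,\leq\, N\big(\h_\pp - \lambda(V - 2\|B\|_\infty/\lambda)\big) ,
\]
and for large $\lambda$ both $V \pm 2\|B\|_\infty/\lambda$ still obey \eqref{cond-asymp-2} with the same $\sigma$, so the conclusion carries over.

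The main obstacle is the precise invocation of the Birman--Laptev asymptotic on the bounded interval $(0,1)$ with Dirichlet boundary conditions and with the perturbed radial mean $\V_\pp$ in place of $V_\sigma$: one must verify that the truncation at $r = 1$ and the deviation $\V_\pp - V_\sigma$ contribute only $o(\lambda^\sigma)$, which follows because the Birman--Laptev formula is driven entirely by the concentration of $V_\sigma$ near $r=0$ and because any radially averaged perturbation of $V_\sigma$ that is $o(V_\sigma)$ near zero produces only $o(\lambda^\sigma)$ extra eigenvalues. The remaining pieces (contributions away from the origin, the delta-term $\alpha|v(1)|^2$ in \eqref{H+sec-T-form}, and the bounded perturbation $\pm 2|B|$) are all of order $O(\lambda)$ and hence swallowed by $\lambda^\sigma$.
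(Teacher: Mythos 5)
Your overall route is the same as the paper's: reduce to $\h_\ppm$ via \eqref{passage-weighted}, split with \eqref{schwarz-eps} using $\Pi=P_0$ (handling the extra $P_\alpha$ mode through Proposition \ref{prop-H+radial+alpha}(2) when $\alpha\in\Z$), dispose of the orthogonal complement by Proposition \ref{prop-Hp-perp-1} as $O(\lambda)$, pass to $T_0$ via \eqref{upperb-PHP-0} and \eqref{H+sec-compare-1}, decouple at $r=1$ by a Dirichlet (rank-one) perturbation, bound the $(1,\infty)$ piece by a Birman--Schwinger trace estimate, and extract the $\asymp\lambda^\sigma$ behavior on $(0,1)$ from the Birman--Laptev asymptotics for $-\partial_r^2-\tfrac{1}{4r^2}-\lambda\V_\pp$; this is exactly how the paper argues.

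There is, however, one step that fails as written: the transfer from $\h_\pp$ to $\h_\m$. Besides the orientation of your sandwich being reversed (a larger potential gives a larger counting function), you replaced the perturbation $2|B|$ by the \emph{constant} $2\|B\|_\infty$. The comparison operator $\h_\pp-\lambda V-2\|B\|_\infty$ then has spectrum below zero of infinite dimension (test functions spread over large annuli give arbitrarily small Rayleigh quotients for $\h_\pp$, so subtracting any positive constant produces infinitely many negative spectral points), and none of the preceding estimates apply to a potential with a non-decaying constant tail; the bound is vacuous. The correct move, as in the paper, is to keep the function $2|B|$: by \eqref{ass-B} it decays like $|x|^{-2-\eps}$, so $V\pm 2|B|$ still satisfies the integrability hypotheses at infinity, and the extra assumption $B\in L^\infty$ is used precisely to guarantee $B(x)=o(V_\sigma(x))$ as $|x|\to 0$, so that $V\pm2|B|$ still obeys \eqref{cond-asymp-2}. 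Your closing parenthetical mentions ``$\pm 2|B|$'', so the fix is at hand, but the displayed argument should be corrected accordingly.
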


\begin{proof} 
By assumption we have $B(x) =o(V_\sigma(x))$ as $|x|\to 0$. Hence, as above, it suffices to prove the statement for the operator $\h_\pp$. We mimic the proof of Proposition \ref{prop-semiclass-1} and apply  \eqref{schwarz-eps} with 
$\Pi=P_0$. This gives 
\begin{equation} \label{two-sided-bis}
N\big (  P_0 (\h_\pp -(1-\eps) \lambda V_\pp) P_0 \big) \leq N\big (  \h_\pp -\lambda V\big)  \leq  N\big (  P_0 (\h_\pp -(1+\eps) \lambda V_\pp) P_0 \big) 
+ N\big (  P^\perp_0 (\h_\pp -(1+\eps^{-1}) \lambda V_\pp) P^\perp_0 \big) 
\end{equation} 
for all $0 <\eps <1$. From Propositions \ref{prop-Hp-perp-1}, \ref{prop-H+radial+alpha} (2) and assumptions on $V$ we deduce that 
$$
\limsup_{\lambda\to \infty} \lambda^{-\sigma} \, N\big (  P^\perp_0 (\h_\pp - \lambda V_\pp) P^\perp_0 \big) =0.
$$
Moreover, 
\begin{equation}
N\big (  P_0(\h_\pp - \lambda V_\pp ) P_0 \big) =  N( h_0^\pp-  \lambda \V_\pp )_{L^2(\R_+; (1+r)^{2\alpha}\, r dr)}, 
\end{equation}
where $h_0^\pp$ is the operator in $L^2(\R_+; (1+r)^{2\alpha}\, r dr)$ associated with the quadratic form \eqref{qm+form} for $m=0$. As in the proof of Proposition  \ref{prop-semiclass-1} 
we thus find that
\begin{align} \label{DBC-3}
0\leq N( h_0^+ +  \V_\pp)_{L^2(\R_+; (1+r)^{2\alpha}\, r dr)}- N(  \mathfrak{h}_0^\pp -  \V_\pp)_{L^2((0,1); (1+r)^{2\alpha}\, r dr)}    -  N( \mathfrak{h}_0^\pp -  \V_\pp)_{L^2((1,\infty); (1+r)^{2\alpha}\, r dr)}   &\leq 1,
\end{align} 
where the  operator $ \mathfrak{h}_0^\pp$  acts in $L^2((0,1), (1+r)^{2\alpha}\, r dr)$ respectively $L^2((1,\infty), (1+r)^{2\alpha}\, r dr)$ as $h_0^\pp$ with additional Dirichlet boundary condition at $r=1$. 
By variational principle,
$$
N( \mathfrak{h}_0^\pp -  \V_\pp)_{L^2((1,\infty); (1+r)^{2\alpha}\, r dr)}    \leq N( h_0^\pp-  \V_\pp \id_{(1,\infty)})_{L^2(\R_+; (1+r)^{2\alpha}\, r dr)} 
$$
with $h_0^\pp$ defined by the quadratic form \eqref{qm+form}. From equations \eqref{H+sec-compare-1}, \eqref{t0-upperb} and assumptions on $V$ we deduce 
$$
\limsup_{\lambda\to \infty} \lambda^{-\sigma} \, N( \mathfrak{h}_0^\pp -  \V_\pp)_{L^2((1,\infty); (1+r)^{2\alpha}\, r dr)}    =0.
$$
It remains to consider $N(  \mathfrak{h}_0^\pp -  \V_\pp)_{L^2((0,1); (1+r)^{2\alpha}\, r dr)}$. Note that $(1+r)^{2\alpha} \asymp 1$ on $(0,1)$. Hence
\begin{equation}
N(  \mathfrak{h}_0^\pp -  \lambda  \V_\pp)_{L^2((0,1); (1+r)^{2\alpha}\, r dr)} \asymp N(  \mathfrak{h}_0^\pp -  \lambda  \V_\pp)_{L^2((0,1); r dr)} =N\big( -\partial_r^2- \frac{1}{4 r^2} - \lambda \V_\pp\big)_{L^2(0,1)} 
\end{equation}
as $\lambda \to\infty$. 
Since 
\begin{equation} 
\lim_{\lambda\to \infty} \lambda^{-\sigma} N\big( -\partial_r^2- \frac{1}{4 r^2} - \lambda \V_\pp\big)_{L^2(0,1)}  = \frac{\Gamma\big(\sigma -\frac 12\big)}{2\sqrt{\pi} \ \Gamma(\sigma)}\, ,
\end{equation} 
see \cite[Secs.~4.4 and 6.5]{bl}, the claim follows from equations \eqref{two-sided-bis}-\eqref{DBC-3}.
\end{proof}


\section{\bf Long range potentials}
\label{sec-slow}
Here we show how estimate \eqref{clr-pauli-2} can be modified in order to cover also slowly decaying such as $W_\sigma$. The proof is based on a variation of the 
method of \cite{flr}. We are indebted to Rupert Frank for suggesting us to treat this problem.

In order to state the result we need some additional notation. Let
\begin{equation} \label{weight-hardy}
w(r) = \frac{1}{1+r^2 (\log r)^2}\, , \qquad r>0.
\end{equation}
Given $a>0$, we set 
\begin{equation} \label{V-ap}
[ V]_{a} = \sup_{t>0}\, t^{1+a} \int_{\big\{\frac{\V(r)}{w(r)} >t\big\} } w(r)\, (1+|\log r|)\, r dr,
\end{equation}
with $\V$ given by \eqref{V-m-eq}.
We then have

\begin{theorem} \label{pauli-long-range}
Let $B$ satisfy \eqref{ass-B} and assume that $0 < \alpha \in\Z$. Then for any $a>0$ and any $p>1$ there exist constants $C_1(B,p)$ and  $C_2(B,a)$ such that 
\begin{equation}  \label{eq-long-range}
N(  \PP-   V) \, \leq\,  m(\alpha)+ C_1(B,p) \| V \|_{1,p}  + C_2(B,a) \, [ V]_{a} 
\end{equation}
for all  $V \in L^1(\R_+, L^p(\Sph)) \cap L^1_{\rm loc}(\R^2,  |\log |x||\, dx)$ for which the right hand side is finite.  
\end{theorem}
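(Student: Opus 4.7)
The strategy is to rerun the proof of Theorem \ref{thm-main-pauli}(2) unchanged until the point where the global logarithmic term $\|V\log|x|\|_{L^1(\R^2)}$ entered, and to replace only that step by a weak-type Birman-Schwinger bound inspired by \cite{flr}. Splitting $N(\PP - V) \leq N(H_\pp - V) + N(H_\m - V)$ and applying the angular-radial decomposition of Sections \ref{sec-upperb-minus} and \ref{sec-upperb-plus} (Propositions \ref{prop-H-} and \ref{prop-H+}), the constant $m(\alpha)$ and the term $C_1(B,p)\|V\|_{1,p}$ arising from the blocks $P^\perp \h_\ppm P^\perp$ (Propositions \ref{prop-Hm-perp-1}, \ref{prop-Hp-perp-1}) and from the non-critical angular momentum modes are unaffected by the enlarged potential class and can be kept as they are. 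Everything therefore reduces to reproving the bounds on the critical radial one-dimensional operators: the operator $\mathfrak{h}_\alpha^-$ on $(1,\infty)$ from Lemma \ref{lem-1infty-bis} (on which, after the unitary reduction \eqref{h-alpha-upperb}, the operator is $-\partial_r^2 - \tfrac{1}{4r^2}$ with Dirichlet condition at $r=1$), together with $T_0$ and $T_\alpha$ on $\R_+$ from Lemmas \ref{lem-T-0} and \ref{lem-T-alpha}.

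For each of these operators $\mathfrak T$ I would first establish a critical Hardy inequality $\mathfrak T \geq \kappa\, w$, with $w$ the weight from \eqref{weight-hardy} and $\kappa = \kappa(B)>0$. Away from the transition scale $r=1$ this is the well-known critical logarithmic Hardy inequality
\begin{equation*}
-\partial_r^2 - \frac{1}{4r^2} \,\geq\, \frac{c}{r^2(\log r)^2},
\end{equation*}
valid both near $0$ and near $\infty$ on functions vanishing at $r=1$, and a matching bound at the scale $r=1$ is elementary. Using $\mathfrak T - \kappa w \geq 0$, one splits the angular average $\V$ as $\V = \V\,\id_{\{\V\leq tw\}} + \V\, \id_{\{\V > tw\}}$ with $0 < t < \kappa$, absorbs the first piece into the Hardy bound, and is left with the counting function of $\mathfrak T$ perturbed by the ``large'' part $\V\,\id_{\{\V > tw\}}$.

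The last step is a weak-type Birman-Schwinger estimate. Applying the Markov-type inequality $N(\mathfrak T - U) \leq \sum_k \lambda_k(K_U)^{1+a}$ to the Birman-Schwinger operator $K_U = U^{1/2}\mathfrak T^{-1}U^{1/2}$ with $U = \V\,\id_{\{\V > tw\}}$, and using the explicit Green's kernels established in Lemmas \ref{lem-1infty-bis}, \ref{lem-T-0} and \ref{lem-T-alpha}, which grow at infinity like $\sqrt{rr'}\,(1+|\log\min\{r,r'\}|)$, a layer-cake integration with respect to the measure $d\mu = w(r)(1+|\log r|)\,r\,dr$ restricted to $\{\V > tw\}$ yields the bound $C_a\, [V]_a$. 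Optimizing in $t$ and recombining the contributions from $H_\pm$ with the already-controlled terms delivers \eqref{eq-long-range}. The main technical obstacle is to align the logarithmic growth of the Green's kernels with the logarithmic divergence of $d\mu$ at infinity so that the weak-Lorentz functional $[V]_a$ really controls the $\mathscr{L}^{1+a}$-Schatten norm of $K_U$; the parameter $a>0$ is essential because for $a=0$ one recovers exactly the divergent trace $\int\V(1+|\log r|)\,r\,dr$ that one is trying to avoid.
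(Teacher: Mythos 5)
Your overall strategy is the same as the paper's: keep the angular decomposition, the $m(\alpha)$ constant and the $\|V\|_{1,p}$ contributions, prove a critical Hardy bound of the form $\mathfrak T \geq \kappa\, w$ with the weight \eqref{weight-hardy} for the remaining radial blocks, absorb the part of $\V$ lying below a fixed multiple of $w$, and estimate the remainder via Birman--Schwinger with the explicit Green's kernels and a layer-cake argument (this is exactly the \cite{flr}-type scheme the paper follows). However, there are two genuine gaps. The first is that your list of blocks needing the new treatment is incomplete: besides $h^-_{\alpha,2}$ (your $\mathfrak h_\alpha^-$ on $(1,\infty)$), $T_0$ and $T_\alpha$, the $m=0$ channel of $H_\m$ on $(0,1)$, i.e.\ the operator $h^-_{0,1}$ of Lemma \ref{lem-01}, produces the \emph{local} logarithmic term $\int_{\B_1}V\,|\log|x||\,dx$, and this quantity is not dominated by $C\big(\|V\|_{1,p}+[V]_a\big)$ with constants depending only on $B,p,a$ (test radial $V=t\,\id_{\{|x|<\eps\}}$ with $t=|\log\eps|^{-1/a}$). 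Keeping the earlier bounds ``as they are'' therefore yields only a weaker inequality with an extra $\|V\log|x|\|_{L^1(\B_1)}$ term, not \eqref{eq-long-range}. The paper avoids this by running the Hardy/absorption argument through \emph{all} blocks $m=0,\dots,\alpha$ of $H_\m$ (for $h^-_{m,1}$ the bound $h^-_{m,1}\geq c_h\,w$ is immediate from the Dirichlet condition at $r=1$ and $w\leq 1$; for $h^-_{m,2}$ it comes from Muckenhoupt's weighted Hardy inequality); the same fix is available to you.

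The second and more serious gap is the final estimate. You bound $N(\mathfrak T-U)$ by $\sum_k\lambda_k(K_U)^{1+a}$ and assert that a layer-cake integration of the kernel yields $C_a\,[V]_a$; but integrating the diagonal of the kernel controls the trace $\sum_k\lambda_k$, not a Schatten sum with the non-integer exponent $1+a$, and you yourself leave the needed weak-type Schatten bound as an unproved ``technical obstacle''. It is also unnecessary: once the small part $\V\,\id_{\{\V\leq tw\}}\leq t\,w\leq (t/\kappa)\,\mathfrak T$ has been absorbed, the plain trace of $K_U$ with $U=\V\,\id_{\{\V>tw\}}$ is finite and, by layer cake against $d\mu=w(r)(1+|\log r|)\,r\,dr$ and the definition \eqref{V-ap}, bounded by $(1+a^{-1})\,t^{-a}\,[V]_a$; combined with $N\leq\#\{k:\lambda_k\geq 1\}\leq\tr K_U$ this gives exactly the linear $[V]_a$ term, which is what the paper does (with the equivalent truncation $w(w^{-1}\V-s)_+$). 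Your closing remark that ``for $a=0$ one recovers the divergent trace'' shows where the confusion lies: the parameter $a$ enters only through the level-set decay encoded in $[V]_a$, not through a Schatten exponent, and the truncation (not the exponent) is what makes the trace finite. If instead you fall back on $\sum_k\lambda_k^{1+a}\leq(\tr K_U)^{1+a}$, you obtain a bound proportional to $[V]_a^{1+a}$, which is not the inequality \eqref{eq-long-range} and, for $V=\lambda W_\sigma$, grows like $\lambda^{\sigma^2}$ rather than $\lambda^\sigma$.
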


\begin{proof}
Recall that under the assumptions of Theorem \ref{pauli-long-range} we have $m(\alpha) =\alpha+1$. 
We note again that by \eqref{passage-weighted} it suffices to prove the claim for $N(  \h_\pm -\lambda V)$.
By Proposition \ref{prop-Hp-perp-1} and \eqref{schwarz-eps},  
\begin{equation} \label{p-0-alpha}
\begin{aligned} 
N(  \h_\pp -  V)\, &\leq\, C  \| V \|_{1,p}  +
N( (P_0+P_\alpha) \h_\pp (P_0+P_\alpha) -  2(P_0+P_\alpha)\, V\, (P_0+P_\alpha)) \\[5pt]
& \leq\,  C  \| V \|_{1,p}  + N(P _0 \h_\pp P_0 -  4 \V) + N(P _\alpha \h_\pp P_\alpha -  4 \V) \, .
\end{aligned}
\end{equation}
By \cite{FaKo} there exists a constant $C_h>0$ such that 
\begin{equation} \label{hardy}
P_0 \h_\pp P_0  \geq C_h\, w ,\qquad P_\alpha \h_\pp P_\alpha  \geq C_h\, w 
\end{equation}
in the sense of quadratic forms on $ P_0H^1(\R^2)$ and $P_\alpha H^1(\R^2)$ respectively. Now, following \cite{flr}, given $\eta \in (0,1)$  we set $s =\eta\, C_h$  and  estimate 
the operator $P_0 \h_\pp P_0 -   \V$ in the sense of quadratic forms as follows;
\begin{align} 
P_0 \h_\pp P_0-   \V &= \eta ( P_0 \h_\pp P_0-  C_h w) +(1-\eta) \big( P_0 \h_\pp P_0-  (1-\eta)^{-1} w (w^{-1} \V-s)\big) \nonumber \\[4pt]
 & \geq (1-\eta) \big( P_0 \h_\pp P_0-  (1-\eta)^{-1} w (w^{-1} \V-s)_+\big) , \label{eta-1}
\end{align}
where we have used inequality \eqref{hardy}.
Hence, 
$$
N(P _0 \h_\pp P_0 -  \V) \, \leq \, N(P _0 \h_\pp P_0 -  (1-\eta)^{-1} w (w^{-1} \V-s)_+),
$$
which by \eqref{t0-upperb} implies 
\begin{equation*} 
N(P _0 \h_\pp P_0 -  \V) \, \leq \, C (1-\eta)^{-1}  \int_0^\infty w(r) (w(r)^{-1} \V(r)-s)_+)\,  (1+\id_{(0,1)} |\log r|)\, r \, dr.
\end{equation*}
Applying the same argument to the operator $P_\alpha \h_\pp P_\alpha$ and using equation \eqref{t-alpha-upperb} we get 
\begin{equation*} 
N(P _\alpha \h_\pp P_\alpha -  \V) \, \leq \, C (1-\eta)^{-1}  \int_0^\infty w(r) (w(r)^{-1} \V(r)-s)_+)\,  (1+\id_{(1,\infty)} |\log r|)\, r \, dr.
\end{equation*}
All together, 
\begin{equation} \label{upperb-h0}
N(P _0 \h_\pp P_0 -  \V) +N(P _\alpha \h_\pp P_\alpha -  \V) \, \leq \, C (1-\eta)^{-1}  \int_0^\infty w(r) (w(r)^{-1} \V(r)-s)_+)\,  (1+ |\log r|)\, r \, dr.
\end{equation}
The layer cake representation, \cite[Sec.~1.13]{LL},  gives
\begin{align*}
\int_0^\infty\!\! w(r) (w(r)^{-1} \V(r)-s)_+)\,  (1+ |\log r|)\, r \, dr & = \int_0^\infty\!\! \int_{\{w(r)^{-1} \V(r)>s+\sigma\}} w(r) (1+ |\log r|)\, r \, dr d\sigma\\[5pt]
&\leq  \int_0^\infty  \sup_{t>0} \Big( t^{1+a} \int_{\big\{\frac{\V(r)}{w(r)} >t\big\} } w(r)\, (1+|\log r|)\, r dr \Big)\, (s+\sigma)^{-a-1}\, d\sigma\\[5pt]
&= a^{-1} s^{-a}\,  [ V]_{a}\, .
\end{align*}
Thus, in view of \eqref{p-0-alpha}, 
\begin{equation}\label{long-range-h-plus}
N(  \h_\pp -  V)\, \leq\, C  \| V \|_{1,p}  + C(B,a) \, [ V]_{a}\, .
\end{equation}

We now turn our attention to $\h_\m$. By  Proposition \ref{prop-Hm-perp-1} and equations   \eqref{split-1}, \eqref{split-n}, 
\begin{equation}  \label{1-aux}
N(  \h_\m -  V)\, \leq\, C(B,p)\,   \| V \|_{1,p}  + \sum_{m=0}^\alpha N(P_m \h_\m P_m -c_\alpha \V)_{L^2(\R_+; (1+r)^{-2\alpha}\, r dr)}\, .
\end{equation}
Combined with \eqref{DBC-1} and \eqref{reduced-weights} this further implies 
\begin{equation} \label{2-aux}
\begin{aligned}
N(  \h_\m -  V)\, &  \leq\, \alpha+1+ C(B,p)\,   \| V \|_{1,p}  \\[5pt]
& \ \ + \sum_{m=0}^\alpha \Big( N( h_{m,1}^- -  c_\alpha4^\alpha\, \V)_{L^2((0,1); \, r dr)} +N( h_{m,2}^- -  c_\alpha4^\alpha\, \V)_{L^2((1,\infty); \, r^{1-2\alpha} dr)}\Big) \, .
\end{aligned}
\end{equation}
Recall that the operators $h_{m,1}^-$ and $h_{m,2}^-$, associated to quadratic forms \eqref{q-forms-dirichlet}, act on ${L^2((0,1); \, r dr)}$ respectively ${L^2((1,\infty); \, r^{1-2\alpha} dr)}$ with additional Dirichlet boundary condition at $r=1$. Thus, by classical weighted one-dimensional Hardy inequalities, see  e.g.~\cite{muck}, the estimates \eqref{hardy} continues to hold for the operators  $h_{m,1}^-$ and $h_{m,2}^-$.  
More precisely, there exist a constant $c_h$, independent of $m$, such that 
 \begin{equation}  \label{hardy-muck} 
 h_{m,1}^- \, \geq\, c_h\, w \qquad \text{and}\qquad  h_{m,2}^- \, \geq\, c_h\, w, \qquad m=0, \dots \alpha,
 \end{equation}
in the sense of quadratic forms on $H^1((0,1), r dr),$ respectively $H^1((1,\infty), r^{1-2\alpha} dr)$ with Dirichlet boundary conditions at $r=1$. Indeed, the first inequality 
in \eqref{hardy-muck} is obvious; owing to the Dirichlet boundary condition at $r=1$ the operator $ h_{m,1}^- $ has discrete spectrum consisting of positive eigenvalues. The 
second inequality in \eqref{hardy-muck} follows from~\cite[Thm.~1]{muck} applied on the interval $(1,\infty)$ with $p=2, V(r) = r^{m+\frac 12} (r+1)^{-\alpha}$, $U(r) = V(r)\, \sqrt{ w(r)}$, and 
with 
$$
r^{-m} v(r)= \int_1^r f(t) \, dt\, ,
$$
see the second equation in  \eqref{q-forms-dirichlet}. Note also that the logarithmic term in \eqref{weight-hardy} is needed only when $m=\alpha$.

Since
 $$
 N( h_{m,1}^- - \V)_{L^2((0,1); \, r dr)} +N( h_{m,2}^- -  \V)_{L^2((1,\infty); \, r^{1-2\alpha} dr)}  \, \leq \, C_m  \int_0^\infty \V(r)\,  (1+ |\log r|)\, r \, dr,
 $$
 cf.~Lemmas \ref{lem-01}, \ref{lem-1infty} and \ref{lem-1infty-bis}, we use the same arguments as in \eqref{eta-1}-\eqref{upperb-h0}, and keeping in mind \eqref{hardy-muck} we arrive at
  $$
  N( h_{m,1}^- - \V)_{L^2((0,1); \, r dr)} +N( h_{m,2}^- -  \V)_{L^2((1,\infty); \, r^{1-2\alpha} dr)}  \leq  \frac{C_m}{1-\eta}  \int_0^\infty w(r) (w(r)^{-1} \V(r)-\tilde s)_+)\,  (1+ |\log r|)\, r \, dr, 
 $$
  where $\eta\in (0,1)$ has the same value as in \eqref{eta-1}, and $\tilde s= \eta\, c_h$. Proceeding  as above we thus get
 $$
  N( h_{m,1}^- - \V)_{L^2((0,1); \, r dr)} +N( h_{m,2}^- -  \V)_{L^2((1,\infty); \, r^{1-2\alpha} dr)}  \, \leq \,  \frac{C_m}{1-\eta}   \, a^{-1}\, \tilde s^{-a}\,  [ V]_{a}\, .
 $$
 Application of  inequality \eqref{2-aux} then completes the proof.
 \end{proof}

\begin{remark}
Note that Theorem \ref{pauli-long-range} is applicable, contrary to Propositions  \ref{prop-H-} and \ref{prop-H+}, also to slowly decaying potentials 
$V\in L^1(\R^2)$ such that $V\not\in L^1(\R^2,  |\log |x||\, dx)$. For example, for $V= W_\sigma$ we have $
[W_\sigma]_a < \infty $ if and only if $a\geq \sigma -1$, see \eqref{global}. Since $[\lambda V]_a = \lambda^{1+a}\,  [V]_a$ and since $ \| W_\sigma \|_{1,p}
=  \| W_\sigma \|_1$ for any $p>1$, setting $a= \sigma -1$ in 
\eqref{eq-long-range} gives 
$$
N( \PP-   \lambda W_\sigma) \, \leq\,  \alpha+1+  \lambda\, C_1(B)\,  \| V \|_1  + \lambda^\sigma \, C_2(B,a) \, [ W_\sigma]_{\sigma-1}\, . 
$$
By Proposition \ref{prop-semiclass-1}, this bound captures the correct behavior of $N( \PP -   \lambda W_\sigma)$ in the strong coupling limit.
\end{remark}


\section{\bf Magnetic Schr\"odinger operators}
\label{sec-schr}

\begin{proof}[\bf Proof of Corollary \ref{cor-main-schr}]
The positivity of the Pauli operator implies that, in the sense of quadratic forms on $H^1(\R^2)$, 
\begin{equation} \label{comparison}
2 (i\nabla +A)^2  \geq\, H_\pp. 
\end{equation} Hence
$$
N( (i\nabla +A)^2 -   V) \, \leq\,  N( H_\pp -  2 V) \, .
$$
Application of Proposition \ref{prop-H+} now completes the proof. 
\end{proof}


\begin{corollary}\label{cor-long-range}
Let $B$ satisfy \eqref{ass-B} and assume that $0 < \alpha \in\Z$. Then for any $p>1$ and any $a>0$ there exists a constants $C_1(B,p)$ and  $C_2(B,a)$  such that 
\begin{equation} 
N(  (i\nabla +A)^2-   V) \, \leq\,  C_1 (B,p)\,   \| V \|_{1,p} + C_2(B,a) \, [ V]_{a} 
\end{equation}
for all  $V \in L^1(\R_+, L^p(\Sph))$ for which the right hand side is finite.  
\end{corollary}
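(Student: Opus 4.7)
The plan is to imitate the proof of Corollary \ref{cor-main-schr} verbatim, but to substitute Theorem \ref{pauli-long-range} (or rather, the intermediate estimate \eqref{long-range-h-plus} obtained in its proof) for Proposition \ref{prop-H+}. The key observation driving the whole argument is that the additive constant $m(\alpha)=\alpha+1$ appearing in Theorem \ref{pauli-long-range} originates exclusively in the treatment of the critical component $\h_\m$, whereas the subcritical component $\h_\pp$ satisfies on its own the cleaner estimate \eqref{long-range-h-plus}, with no such additive constant. This is precisely the separation that Corollary \ref{cor-main-schr} already exploits when passing from Theorem \ref{thm-main-pauli} to the magnetic Schr\"odinger operator, and the same mechanism works verbatim in the long-range setting.

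Concretely, I would first use the positivity of the Pauli operator, cf.~\eqref{comparison}, to write $2(i\nabla + A)^2 \geq H_\pp$ in the sense of quadratic forms on $H^1(\R^2)$, obtaining
\begin{equation*}
N\bigl((i\nabla + A)^2 - V\bigr) \,\leq\, N(H_\pp - 2V).
\end{equation*}
Next, the right-hand inequality in \eqref{passage-weighted} transports this to the weighted space and yields
\begin{equation*}
N(H_\pp - 2V) \,\leq\, N\bigl(\h_\pp - 2 M_\pp^2\, V\bigr)_{L^2(\R^2;(1+|x|)^{2\alpha}dx)}.
\end{equation*}

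Finally, I would invoke the intermediate estimate \eqref{long-range-h-plus} from the proof of Theorem \ref{pauli-long-range}, which, for $0<\alpha\in\Z$, asserts that
\begin{equation*}
N(\h_\pp - W) \,\leq\, \tilde{C}_1(B,p)\, \|W\|_{1,p} \,+\, \tilde{C}_2(B,a)\, [W]_{a}
\end{equation*}
for every nonnegative $W$ with finite right-hand side. Applying this with $W = 2 M_\pp^2\, V$ and using the homogeneities $\|\lambda V\|_{1,p} = \lambda\, \|V\|_{1,p}$ and $[\lambda V]_{a} = \lambda^{1+a}\, [V]_{a}$ absorbs the scalar factor $2 M_\pp^2$ into the constants and yields the stated bound, with $C_1(B,p) = 2 M_\pp^2\, \tilde{C}_1(B,p)$ and $C_2(B,a) = (2 M_\pp^2)^{1+a}\, \tilde{C}_2(B,a)$.

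There is no genuine obstacle here; the only point to verify is that the argument leading to \eqref{long-range-h-plus} in Theorem \ref{pauli-long-range}, which relies only on Proposition \ref{prop-Hp-perp-1} and the Hardy bound \eqref{hardy} for $P_0 \h_\pp P_0$ and $P_\alpha \h_\pp P_\alpha$, uses nothing about $V$ beyond its positivity, so it goes through unchanged with any nonnegative weight. In particular, no global logarithmic integrability of $V$ is required, which is exactly what makes the result applicable to slowly decaying potentials of the type \eqref{global} that lie outside the scope of Corollary \ref{cor-main-schr}.
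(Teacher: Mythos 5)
Your proposal is correct and matches the paper's own argument: the paper likewise deduces the corollary from the form inequality $2(i\nabla+A)^2\geq H_\pp$ in \eqref{comparison} together with the intermediate bound \eqref{long-range-h-plus} for $\h_\pp$, with the passage through \eqref{passage-weighted} and the homogeneities $\|\lambda V\|_{1,p}=\lambda\|V\|_{1,p}$, $[\lambda V]_a=\lambda^{1+a}[V]_a$ left implicit. Your explicit tracking of the factor $2M_\pp^2$ and the observation that the $m(\alpha)$ term comes only from the critical component $\h_\m$ are exactly the points the paper relies on.
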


\begin{proof}
This follows from \eqref{comparison} and \eqref{long-range-h-plus}.
\end{proof}

\begin{remark}
If $\alpha=0$, then the arguments used in the proof of Corollaries \ref{cor-main-schr} and \ref{cor-long-range}  do not work because both operators $H_\pp$ and $H_\m$ are critical in this case. 
\end{remark}


\appendix

\section{\bf Positive definiteness of kernels}
\label{sec-app0}

\begin{lemma} \label{lem-positive-def_lem3-2}
Assume that $\V: (0,1) \to [0,\infty)$. Then the kernel  \begin{equation}
K(r,r') = - \sqrt{\V(r)}\ \sqrt{r r'}\,  \log(\max\{ r, r'\}) \,  \sqrt{\V(r')}
\end{equation}
is positive definite on $(0, 1)\times (0,1)$.

\end{lemma}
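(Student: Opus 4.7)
The factors $\sqrt{\V(r)}$ and $\sqrt{\V(r')}$ are harmless: if $M(r,r')$ is any positive definite kernel on $(0,1)\times(0,1)$, then so is $\sqrt{\V(r)}\,M(r,r')\,\sqrt{\V(r')}$, because for every $\phi\in L^2(0,1)$ one can test $M$ against $\psi(r)=\sqrt{\V(r)}\,\phi(r)$. It therefore suffices to show that
$$
M(r,r') := -\sqrt{rr'}\,\log\bigl(\max\{r,r'\}\bigr)
$$
is positive definite on $(0,1)\times(0,1)$. The plan is to exhibit $M$ as a Gram matrix by writing it as an integral of rank-one kernels.

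The key identity is the elementary observation that for $a,b>0$,
$$
\min(a,b) \;=\; \int_0^\infty \id_{[0,a]}(s)\,\id_{[0,b]}(s)\,ds.
$$
Setting $a=-\log r$, $b=-\log r'$ (both positive on $(0,1)$) gives
$$
-\log\bigl(\max\{r,r'\}\bigr) \;=\; \min(-\log r,-\log r') \;=\; \int_0^\infty \id_{\{r<e^{-s}\}}\,\id_{\{r'<e^{-s}\}}\,ds.
$$
Plugging this into the definition of $M$ produces the integral representation
$$
M(r,r') \;=\; \int_0^\infty \bigl(\sqrt r\,\id_{\{r<e^{-s}\}}\bigr)\bigl(\sqrt{r'}\,\id_{\{r'<e^{-s}\}}\bigr)\,ds.
$$

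Positive definiteness is then immediate: for any $\phi\in L^2(0,1)$, Fubini gives
$$
\int_0^1\!\!\int_0^1 M(r,r')\,\phi(r)\overline{\phi(r')}\,dr\,dr' \;=\; \int_0^\infty \Bigl|\int_0^{\min\{e^{-s},1\}} \!\sqrt{r}\,\phi(r)\,dr\Bigr|^{\,2} ds \;\geq\; 0.
$$
Multiplying pointwise by $\sqrt{\V(r)\V(r')}$ yields the claim for $K$. The only step requiring any care is justifying the use of Fubini, which follows from the local integrability of $\sqrt{r}\,\phi(r)$ on $(0,1)$ (via Cauchy--Schwarz); no subtlety beyond this arises, and there is no real obstacle.
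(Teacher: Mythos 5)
Your proof is correct and is essentially the paper's argument: the paper likewise writes $-\log(\max\{r,r'\})=\min\{-\log r,-\log r'\}$ as an overlap integral of indicator functions (in the variable $t=e^{s}$, i.e.\ $\int_1^\infty \id_{(1,r^{-1})}(t)\,\id_{(1,r'^{-1})}(t)\,\tfrac{dt}{t}$) and concludes by recognizing a perfect square under the integral. The only cosmetic difference is that you test against $L^2$ functions while the paper tests against finite point masses $\sum_j x_j\delta_{r_j}$; the same representation yields both formulations.
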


\begin{proof}
Let $N\in \N$, $r_1,\dots, r_N\in (0,1)$ and let $x_1,\dots, x_N\in\R$. Then, denoting
$$
y_j= x_j \sqrt{r_j \V(r_j)}, 
$$ 
we have
\begin{equation} 
\begin{aligned}
\sum_{j,k=1}^N x_j \, x_k \, K(r_j, r_k)  & = - \sum_{ j,k=1}^N \, y_j \, y_k \log(\max\{ r_j, r_k\}) = \sum_{ j,k=1}^N \, y_j \, y_k \log\Big(\min\big\{ \frac{1}{r_j}, \frac{1}{r_k}\big\} \Big) \\[4pt]
& =  \sum_{ j,k=1}^N \, y_j \, y_k \int_1^{\min\big\{ \frac{1}{r_j}, \frac{1}{r_k}\big\}}\ \frac 1t\, dt = \sum_{ j,k=1}^N \, y_j \, y_k \int_1^\infty \id_{(1, r_j^{-1})} (t) \ \id_{(1, r_k^{-1})} (t)\,  \frac 1t\, dt  \\[4pt]
& = \int_1^\infty  \sum_{ j,k=1}^N \, y_j \, y_k\  \id_{(1, r_j^{-1})} (t) \ \id_{(1, r_k^{-1})} (t)\,  \frac 1t\, dt   \\[4pt]
&=
\int_1^\infty \Big( \sum_{ j=1}^N   y_j \  \id_{(1, r_j^{-1})} (t) \Big)^2\,  \frac 1t\, dt  \geq 0.
\end{aligned}
\end{equation} 
\end{proof}

\begin{lemma} \label{lem-positive-def_lem3-3}
Let $\alpha>0$ and assume that $\V:\R_+ \to [0,\infty)$. Then the kernel  \begin{equation}
K(r,r') = \sqrt{ \V(r)}\  \frac{\sqrt{r r'}}{2\alpha} \left(\min \Big\{ \frac{r}{r'}, \frac{r'}{r} \Big\}\right)^\alpha  \sqrt{ \V(r')}
\end{equation}
is positive definite on $\R_+\times \R_+$.

\end{lemma}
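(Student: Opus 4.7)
The plan is to mimic the proof of Lemma~\ref{lem-positive-def_lem3-2} by producing an integral representation of the kernel that exhibits it as a (non-negative) quadratic form. First I would absorb the $\sqrt{\V(r_j)}$ factors into the test values by setting $y_j = x_j\sqrt{\V(r_j)}$, which reduces the claim to showing positive definiteness of the simpler kernel $K_0(r,r') := \frac{\sqrt{rr'}}{2\alpha}\bigl(\min(r,r')/\max(r,r')\bigr)^\alpha$ on $\R_+\times\R_+$.

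The central step is to verify the identity
$$K_0(r,r') = \int_0^\infty h_r(\tau)\, h_{r'}(\tau)\, d\tau, \qquad h_r(\tau) := (r/\tau)^{\alpha+1/2}\, \id_{(r,\infty)}(\tau) .$$
This is a direct computation: on $\{\tau > \max(r,r')\}$ the integrand equals $(rr')^{\alpha+1/2}\, \tau^{-2\alpha-1}$, which integrates to $(rr')^{\alpha+1/2}/(2\alpha\max(r,r')^{2\alpha})$, and this coincides with $K_0(r,r')$ after writing $(rr')^\alpha = \min(r,r')^\alpha \max(r,r')^\alpha$. With this representation in hand, Fubini yields
$$\sum_{j,k=1}^N x_j\, x_k\, K(r_j,r_k) \,=\, \int_0^\infty \Bigl(\sum_{j=1}^N y_j\, h_{r_j}(\tau)\Bigr)^{\!2} d\tau \,\geq\, 0,$$
which concludes the proof.

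The only genuine difficulty is guessing the correct integrand $h_r(\tau)$; once the ansatz is in place, the verification is mechanical. A conceptual way to arrive at it is to perform the substitution $r=e^s$, which transforms the kernel (up to the $\sqrt{rr'}$ factor) into $e^{-\alpha|s-s'|}$, the classical positive-definite Cauchy characteristic function. The standard identity $e^{-\alpha|s-s'|} = 2\alpha\int_{\max(s,s')}^\infty e^{-\alpha(t-s)} e^{-\alpha(t-s')}\, dt$, transported back via $\tau = e^t$, is exactly the factorization above. An alternative route would use the Fourier representation $e^{-\alpha|s|} = (\alpha/\pi)\int_\R (\alpha^2+\xi^2)^{-1} e^{i\xi s}\, d\xi$, which delivers the required non-negativity directly.
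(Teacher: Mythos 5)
Your proof is correct and follows essentially the same route as the paper: both exhibit the kernel as a Gram-type integral $\int g_r\,g_{r'}\,d\mu$ of separable nonnegative factors, so that the quadratic form becomes an integral of a square. The only difference is cosmetic — the paper absorbs the powers of $r_j$ into the coefficients and integrates indicator functions $\id_{(0,r_j^{-1})}$ against the weight $\alpha t^{\alpha-1}\,dt$, which is your factorization after the change of variables $t=1/\tau$ and a redistribution of the weight.
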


\begin{proof}
Let $N\in \N$, $r_1,\dots, r_N\in (0,1)$ and let $x_1,\dots, x_N\in\R$. Denoting
$$
y_j= \frac{x_j}{(r_j)^\alpha} \sqrt{r_j \V(r_j)}, 
$$ 
we have
\begin{equation} \label{positive-def}
\begin{aligned}
\sum_{j,k=1}^N x_j \, x_k \, K(r_j, r_k)  & =  \frac{1}{2\alpha} \sum_{ j,k=1}^N \, y_j \, y_k \left(\min\Big\{ \frac{1}{r_j}, \frac{1}{r_k}\Big\}\right)^\alpha = \frac{1}{2\alpha}\sum_{ j,k=1}^N \, y_j \, y_k \int_0^{\min\big\{ \frac{1}{r_j}, \frac{1}{r_k}\big\}}\ \alpha t^{\alpha -1}\, dt   \\[4pt]
&  = \frac{1}{2} \int_0^\infty \Big( \sum_{ j=1}^N   y_j \  \id_{(0, r_j^{-1})} (t) \Big)^2\,  t^{\alpha -1} \, dt  \geq 0.
\end{aligned}
\end{equation} 
\end{proof}

\begin{lemma} \label{lem-pos-def-0}
Let $\alpha >0$ and assume that $\V: \R_+ \to [0,\infty)$. Then the kernel  $\sqrt{\V(r)}\ T_0^{-1}(r,r')\, \sqrt{\V(r')}$ is positive definite on $\R_+\times \R_+$.
\end{lemma}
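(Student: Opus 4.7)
\textbf{Plan for the proof of Lemma \ref{lem-pos-def-0}.}

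The strategy is to decompose $T_0^{-1}$ into a rank-one positive-semidefinite piece plus a residue that is supported on $(0,1]^2 \cup [1,\infty)^2$ and that is positive-semidefinite on each of the two pieces by an explicit integral representation \emph{à la} Lemmas \ref{lem-positive-def_lem3-2}, \ref{lem-positive-def_lem3-3}. The crucial observation is that the three cases in \eqref{limit-kernel-0} all contain the ``matching'' summand $\frac{1}{2\alpha}$ (times the factor $(r')^{-\alpha}$ when $r' \geq 1$), which suggests introducing the auxiliary function
\begin{equation*}
h(r) := \begin{cases} 1 & \text{if } 0 < r \leq 1, \\ r^{-\alpha} & \text{if } r > 1. \end{cases}
\end{equation*}
Setting $k_0(r,r') := T_0^{-1}(r,r')/\sqrt{rr'}$ and checking each of the three cases in \eqref{limit-kernel-0}, the first step is to verify directly that
\begin{equation*}
k_0(r,r') = \tfrac{1}{2\alpha}\, h(r) h(r') + L(r,r'),
\end{equation*}
where $L$ vanishes on the ``cross'' region (one of $r,r'$ in $(0,1]$ and the other in $[1,\infty)$) and is given by
\begin{equation*}
L(r,r') = \begin{cases} \log\bigl(1/\!\max(r,r')\bigr) & \text{if } r,r' \in (0,1], \\[3pt] \dfrac{\min(r,r')^{2\alpha} - 1}{2\alpha\, (rr')^\alpha} & \text{if } r,r' \in [1,\infty). \end{cases}
\end{equation*}

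The rank-one kernel $\frac{1}{2\alpha} h(r) h(r')$ is trivially positive-semidefinite. For $L$, since $L(1,r) = 0$ for every $r>0$ and $L$ vanishes on the cross region, for any $x_1,\dots,x_N \in \R$ and $r_1,\dots,r_N>0$ the sum $\sum_{j,k} x_j x_k L(r_j,r_k)$ splits into two independent contributions, one over the indices with $r_j \leq 1$ and one over those with $r_j \geq 1$. The first contribution is handled exactly as in Lemma \ref{lem-positive-def_lem3-2}: writing $-\log\max(r_j,r_k) = \int_1^\infty \id_{(1,1/r_j)}(t) \id_{(1,1/r_k)}(t)\, dt/t$ it becomes $\int_1^\infty \bigl(\sum_j x_j\, \id_{(1,1/r_j)}(t)\bigr)^2 dt/t \geq 0$. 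For the second contribution, the identity $\min(r_j,r_k)^{2\alpha} - 1 = \int_1^\infty \id_{(1,r_j)}(s)\, \id_{(1,r_k)}(s)\, 2\alpha\, s^{2\alpha-1}\, ds$ gives
\begin{equation*}
L(r_j,r_k) = \int_1^\infty \frac{\id_{(1,r_j)}(s)}{r_j^\alpha}\, \frac{\id_{(1,r_k)}(s)}{r_k^\alpha}\, s^{2\alpha-1}\, ds,
\end{equation*}
so that the sum becomes $\int_1^\infty \bigl(\sum_j (x_j/r_j^\alpha)\, \id_{(1,r_j)}(s)\bigr)^2 s^{2\alpha-1}\, ds \geq 0$, parallel to the argument in Lemma \ref{lem-positive-def_lem3-3}.

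Combining the two contributions shows that $L$ is positive-semidefinite on $\R_+ \times \R_+$, hence so is $k_0$. Multiplication by the rank-one kernel $\sqrt{r}\sqrt{r'}$, i.e.\ replacing $x_j$ by $x_j \sqrt{r_j}$, preserves positive semi-definiteness, so $T_0^{-1}(r,r')$ is positive-semidefinite; a further replacement $x_j \mapsto x_j \sqrt{\V(r_j)}$ yields the claim for $\sqrt{\V(r)}\, T_0^{-1}(r,r')\, \sqrt{\V(r')}$.

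The main obstacle is guessing the right decomposition: one needs an $h$ that simultaneously matches the formula for $T_0^{-1}$ on the cross region (so that $L\equiv 0$ there) and is compatible on each diagonal strip with an integral representation with a non-negative density. The choice above is essentially forced by matching $h(r)h(r') = (r')^{-\alpha}$ in the third case of \eqref{limit-kernel-0}; once this is identified, verifying the decomposition and applying the two integral representations is routine.
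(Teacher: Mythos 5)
Your proposal is correct; the verification of the decomposition and both integral representations checks out (including the harmless boundary case $r_j=1$, where $L$ vanishes identically, so the quadratic form really does split into the two block contributions). It is, however, organized differently from the paper's proof, so a brief comparison: the paper splits off the logarithmic kernel $K_2$ supported on $(0,1]^2$ (positive by Lemma \ref{lem-positive-def_lem3-2}) and then shows that the entire remainder $K_1$ — which still carries the constant $\tfrac{1}{2\alpha}$, the cross terms and the exterior part — is positive in one stroke, by writing $K_1(r_j,r_k)$, up to the factors $y_j=x_j\sqrt{r_j\V(r_j)}$, as the product $\tfrac{1}{2\alpha}\bigl(1+\int_0^{\min\{r_j,r_k\}} f\bigr)\int_0^{\min\{1/r_j,1/r_k\}} g$ with $f(t)=\alpha t^{\alpha-1}\id_{(1,\infty)}(t)$, $g(s)=\alpha s^{\alpha-1}\id_{(0,1]}(s)$, and expanding into two squared integrals. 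You instead extract the rank-one kernel $\tfrac{1}{2\alpha}h(r)h(r')$ with $h=\min\{1,r^{-\alpha}\}$, which absorbs both the constant and the whole cross region, leaving a block-diagonal residue $L$ that you treat separately on $(0,1]^2$ (the Lemma \ref{lem-positive-def_lem3-2} computation) and on $[1,\infty)^2$ (a power-kernel representation parallel to Lemma \ref{lem-positive-def_lem3-3}). The underlying mechanism — representing each piece as an integral of products of indicator functions so the quadratic form becomes an integral of squares — is the same in both arguments; your version has the structural advantage that positivity of the cross terms is trivialized by the rank-one factorization, at the cost of having to guess $h$ and verify a three-case decomposition, while the paper's product formula handles the matching across $r=1$ automatically within a single kernel.
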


\begin{proof} Let us split the kernel as follows; 
\begin{equation}
\sqrt{\V(r)}\ T_0^{-1}(r,r')\, \sqrt{\V(r')} = K_1(r,r') +K_2(r,r'), 
\end{equation}
where 
$$
K_2(r,r')= 
\begin{cases} 
- \sqrt{\V(r)}\ \sqrt{r r'}\,  \log(\max\{ r, r'\}) \,  \sqrt{\V(r')}  & \quad \text{if}\  0<r, r'\leq 1  \\[4pt] 
0 &\quad \text{otherwise } .
\end{cases}
$$
Now let  $r_1,\dots, r_N\in  \R_+$ and $x_1,\dots, x_N\in\R$ for some $N\in \N$.
By Lemma \ref{lem-positive-def_lem3-2},
\begin{equation} \label{K2-pos}
\sum_{j,k=1}^N x_j x_k \, K_2(r_j, r_k)  \, \geq \, 0.
\end{equation}
Hence it remains to prove the positivity of $K_1$. We define auxiliary functions $f,g: \R_+\to \R_+$ by
$$
f(s)=
\begin{cases}
			0 & \text{if}\  0 <s \leq  1 \,, \\[4pt]
			\alpha s^{\alpha-1} & \text{if}\    1 < s,
\end{cases}
\qquad \text{and} \qquad
g(s)=
\begin{cases}
			\alpha s^{\alpha-1} & \text{if}\  0 <s \leq  1 \,, \\[4pt]
			0 & \text{if}\    1 < s\, .
\end{cases}
$$
Then, denoting $y_j= x_j \sqrt{r_j \V(r_j)}$, we deduce from \eqref{limit-kernel-0} that
\begin{align*}
\sum_{j,k=1}^N x_j \, x_k \, K_1(r_j, r_k)  & = \frac{1}{2\alpha} \sum_{j,k=1}^N y_j \, y_k \, \Big(1+  \int_0^{\min\{ r_j, r_k\}} f(t)\, dt\Big)  \int_0^{\min\big\{ \frac{1}{r_j}, \frac{1}{r_k}\big\}} g(s)\, ds \\[2pt]
& =  \frac{1}{2\alpha} \int_0^\infty \!\! \sum_{j,k=1}^N y_j \, y_k \,  \id_{(0, r_j^{-1})} (s)\ \id_{(0, r_k^{-1})} (s)\, g(s)\,  ds \\[2pt]
& \quad +   \frac{1}{2\alpha} \int_0^\infty\!\!\!  \int_0^\infty \!\sum_{j,k=1}^N y_j \, y_k \, \id_{(0, r_j)} (t)\ \id_{(0, r_k)} (t)\,  \id_{(0, r_j^{-1})} (s)\ \id_{(0, r_k^{-1})} (s)\, f(t)\,  g(s)\,  ds dt\\[2pt]
& =   \frac{1}{2\alpha} \int_0^\infty\! \Big( \sum_{j=1}^N y_j  \id_{(0, r_j^{-1})} (s)\Big)^2 g(s)\, ds \\
& \quad +   \frac{1}{2\alpha} \int_0^\infty\!\!\!  \int_0^\infty\!\! \Big( \sum_{j=1}^N y_j\, \id_{(0, r_j)} (t) \id_{(0, r_j^{-1})} (s)\Big)^2 f(t)\, g(s)\, ds dt\\[2pt]
& \geq 0 .
\end{align*} 
In of view of \eqref{K2-pos} this completes the proof.
\end{proof}

\begin{lemma} \label{lem-pos-def-alphs}
Let $\alpha >0$ and assume that $\V: \R_+ \to [0,\infty)$. Then the kernel  $\sqrt{\V(r)}\ T_\alpha^{-1}(r,r')\, \sqrt{\V(r')}$ is positive definite on $\R_+\times \R_+$.
\end{lemma}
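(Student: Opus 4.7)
\textbf{Proof plan for Lemma \ref{lem-pos-def-alphs}.}
My plan is to mimic the strategy of Lemma \ref{lem-pos-def-0}, splitting off the logarithmic piece and handling the remainder by the type of factorization trick used in Lemma \ref{lem-positive-def_lem3-3}. Concretely, using \eqref{limit-kernel-alpha}, I would write
$\sqrt{\V(r)}\, T_\alpha^{-1}(r,r')\,\sqrt{\V(r')} = K_1(r,r') + K_2(r,r')$,
where
\[
K_2(r,r') = \begin{cases} \sqrt{\V(r)\V(r')}\,\sqrt{rr'}\,\log\bigl(\min\{r,r'\}\bigr) & \text{if } r,r' > 1,\\ 0 & \text{otherwise},\end{cases}
\]
and $K_1$ captures everything else. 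A direct computation from \eqref{limit-kernel-alpha} shows that $K_1(r,r') = \frac{1}{2\alpha}\sqrt{\V(r)\V(r')}\,\sqrt{rr'}\,H(r,r')$, where, writing $\tilde r := \min\{r,1\}$,
\[
H(r,r') = \left(\frac{\min\{\tilde r,\tilde r'\}}{\max\{\tilde r,\tilde r'\}}\right)^{\!\alpha}.
\]
(The three cases of \eqref{limit-kernel-alpha} collapse to this single expression: on $r,r'\leq 1$ it gives $(r/r')^\alpha$ or its inverse; on $r,r'>1$ it gives $1$; in the mixed case $r\leq 1\leq r'$ it gives $r^\alpha$.)

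Step one: $K_2$ is positive definite by the very argument of the remark following Lemma~\ref{lem-1infty-bis} and of Lemma~\ref{lem-positive-def_lem3-2}. Namely, for $r,r'>1$ one has the representation $\log(\min\{r,r'\}) = \int_1^\infty \id_{(1,r)}(t)\,\id_{(1,r')}(t)\,\frac{dt}{t}$, so setting $y_j = x_j\,\id_{(1,\infty)}(r_j)\sqrt{r_j\V(r_j)}$ yields $\sum_{j,k} x_j x_k K_2(r_j,r_k) = \int_1^\infty \bigl(\sum_j y_j\id_{(1,r_j)}(t)\bigr)^{\!2}\,\tfrac{dt}{t} \geq 0$.

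Step two, the main step: show that $H$, and hence $K_1$, is positive definite. The key identity is
\[
\frac{\min\{a,b\}}{\max\{a,b\}} \;=\; \min\{a,b\}\cdot\min\{1/a,1/b\},
\]
which one checks by splitting on $a\le b$ vs.\ $a\ge b$. Substituting $z_j := \tilde r_j^{\,\alpha}$ and using this identity, together with the elementary layer-cake representations $\min\{z_j,z_k\}=\int_0^\infty \id_{(0,z_j)}(t)\,\id_{(0,z_k)}(t)\,dt$ and $\min\{1/z_j,1/z_k\}=\int_0^\infty \id_{(0,1/z_j)}(s)\,\id_{(0,1/z_k)}(s)\,ds$, I get
\[
H(r_j,r_k) \;=\; \int_0^\infty\!\!\int_0^\infty f_j(t,s)\,f_k(t,s)\,dt\,ds,\qquad f_j(t,s):=\id_{(0,\tilde r_j^{\alpha})}(t)\,\id_{(0,\tilde r_j^{-\alpha})}(s).
\]
With $y_j := x_j\sqrt{r_j\V(r_j)}$ this gives $\sum_{j,k} x_j x_k K_1(r_j,r_k) = \frac{1}{2\alpha}\int\!\!\int \bigl(\sum_j y_j f_j(t,s)\bigr)^{\!2} dt\,ds \geq 0$. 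Combining with Step one completes the proof.

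The only place where something nontrivial happens is the factorization $\min/\max=\min\cdot\min(1/\cdot,1/\cdot)$ applied after the substitution $z=\tilde r^\alpha$; this is what converts the kernel $H$, which a priori depends on $(r,r')$ through the clipped variables $\tilde r$ in a non-obviously-factorable way, into a manifest inner product of $L^2$-functions. Everything else is bookkeeping of the three cases in \eqref{limit-kernel-alpha}.
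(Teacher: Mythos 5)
Your proof is correct: the decomposition into the logarithmic piece $K_2$ supported in $(1,\infty)^2$ plus the remainder $K_1=\frac{1}{2\alpha}\sqrt{\V(r)\V(r')}\sqrt{rr'}\,\bigl(\min\{\tilde r,\tilde r'\}/\max\{\tilde r,\tilde r'\}\bigr)^\alpha$ with $\tilde r=\min\{r,1\}$ does reproduce all three cases of \eqref{limit-kernel-alpha}, the representation $\log(\min\{r,r'\})=\int_1^\infty \id_{(1,r)}(t)\id_{(1,r')}(t)\,t^{-1}dt$ disposes of $K_2$ exactly as in Lemma \ref{lem-positive-def_lem3-2}, and the identity $\min\{a,b\}/\max\{a,b\}=\min\{a,b\}\cdot\min\{1/a,1/b\}$ together with the two layer-cake formulas turns $H$ into a manifest Gram kernel, so $\sum_{j,k}x_jx_kK_1(r_j,r_k)$ is a double integral of a square. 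However, your route differs from the paper's: there the lemma is proved in one line by the inversion symmetry $T_\alpha^{-1}(r,r')=rr'\,T_0^{-1}(1/r,1/r')$, which reduces everything to Lemma \ref{lem-pos-def-0} (positive definiteness is preserved under the substitution $r\mapsto 1/r$ and multiplication by the factorized weight $\sqrt{rr'\,\cdot\,}$), whereas you redo the splitting directly for $T_\alpha^{-1}$; moreover, your factorization of the non-logarithmic part is cleaner than the one used in the paper's proof of Lemma \ref{lem-pos-def-0}, which writes that part as $\frac{1}{2\alpha}\bigl(1+\int_0^{\min\{r_j,r_k\}}f\bigr)\int_0^{\min\{1/r_j,1/r_k\}}g$ with two auxiliary functions and expands into two nonnegative terms. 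The paper's reduction buys brevity and makes the $m=0$ versus $m=\alpha$ duality explicit; your argument buys a self-contained proof and a unified formula — indeed, with $\hat r:=\max\{r,1\}$ instead of $\tilde r$ the same clipped-variable factorization also handles $T_0^{-1}$, so your method would streamline Lemma \ref{lem-pos-def-0} as well.
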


\begin{proof}
One can exploit that
\begin{align*}
T_\alpha^{-1}(r,r') = r r' \,T_0^{-1}\Big(\frac{1}{r},\frac{1}{r'}\Big).
\end{align*}
The claim then follows in the same way as in the proof of Lemma \ref{lem-pos-def-0}. Details are omitted. 
\end{proof}

\section{\bf One-dimensional weighted Hardy inequalities}
\label{sec-app1}

Here we recall some classical results on weighted Hardy inequalities. For their proofs we refer to \cite{muck}, see also \cite{flw}.

\begin{theorem}  \label{thm-class-1}
Let $U,W$ be nonnegative, a.e.-finite, measurable functions
on $\R_+$ such that
\begin{equation} 
\int_s^\infty U(t)^{-1}\, dt \, < \infty \qquad \forall\, s >0.
\end{equation}
Then for any locally absolutely continuous function $f$ on $\R_+$ with $\liminf_{t\to\infty} |f(t)|=0$ we have 
\begin{equation} 
\int_0^\infty W(t)\, |f(t)|^2\, dt \, \leq\, C(U,W) \, \int_0^\infty U(t)\, |f'(t)|^2\, dt,
\end{equation} 
where the constant $C(U,W)$ satisfies 
\begin{align} \label{C-upperb-1} 
C(U,W) & \leq 4 \, \sup_{s>0} \Big (\int_s^\infty U(t)^{-1}\, dt \Big)  \Big (\int_0^s W(t)\, dt \Big)  \, .
\end{align} 
\end{theorem}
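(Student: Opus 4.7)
The inequality is the classical Muckenhoupt weighted Hardy inequality, and the stated bound on $C(U,W)$ is the standard one. I would prove it by a two-layer Cauchy–Schwarz argument with a cleverly chosen weight, combined with Fubini and the assumed boundary behavior of $f$.

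\textbf{Step 1 (reduction to a pointwise bound on $f$).} We may assume the right-hand side is finite, else there is nothing to prove. By hypothesis there is a sequence $t_n\to\infty$ with $|f(t_n)|\to 0$. Since $f$ is locally absolutely continuous, for any $t<t_n$,
$$|f(t)|\,\le\,|f(t_n)|+\int_t^{t_n}|f'(s)|\,ds.$$
If $\int_t^\infty|f'|\,ds=\infty$ the desired inequality is trivial; otherwise, letting $n\to\infty$ gives the pointwise bound
$$|f(t)|\,\le\,\int_t^\infty|f'(s)|\,ds \quad\text{for every } t>0.$$

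\textbf{Step 2 (weighted Cauchy–Schwarz on the tail).} Set $G(t):=\int_t^\infty U(s)^{-1}\,ds$, which is finite by assumption and satisfies $G'(t)=-U(t)^{-1}$, $G(\infty)=0$. Choose the weight $\phi(s):=U(s)^{-1}G(s)^{-1/2}$ and apply Cauchy–Schwarz:
$$\Big(\int_t^\infty |f'|\,ds\Big)^2
\,\le\,\Big(\int_t^\infty \phi(s)\,ds\Big)\Big(\int_t^\infty \phi(s)^{-1}|f'(s)|^2\,ds\Big).$$
The antiderivative computation
$$\int_t^\infty U(s)^{-1}G(s)^{-1/2}\,ds\,=\,-\int_t^\infty G'(s)G(s)^{-1/2}\,ds\,=\,2\,G(t)^{1/2}$$
yields
$$|f(t)|^2\,\le\,2\,G(t)^{1/2}\int_t^\infty U(s)\,G(s)^{1/2}\,|f'(s)|^2\,ds.$$

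\textbf{Step 3 (integration against $W$, Fubini, and second Cauchy–Schwarz).} Multiplying by $W(t)$, integrating over $t\in(0,\infty)$, and swapping the order of integration ($0<t<s$) gives
$$\int_0^\infty W(t)|f(t)|^2\,dt
\,\le\,2\int_0^\infty U(s)\,G(s)^{1/2}\,|f'(s)|^2 \Big(\int_0^s W(t)\,G(t)^{1/2}\,dt\Big)\,ds.$$
Writing $A:=\sup_{s>0} G(s)\int_0^s W$, the Muckenhoupt hypothesis reads $G(t)^{1/2}\le A^{1/2}(\int_0^t W)^{-1/2}$. Inserting this and applying the same antiderivative trick in the $t$-variable,
$$\int_0^s W(t)(\int_0^t W)^{-1/2}\,dt\,=\,2(\int_0^s W)^{1/2},$$
the inner bracket is dominated by $2A^{1/2}(\int_0^s W)^{1/2}$. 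Substituting back and using once more $(\int_0^s W)^{1/2}G(s)^{1/2}\le A^{1/2}$ leaves exactly
$$\int_0^\infty W(t)|f(t)|^2\,dt\,\le\,4A\int_0^\infty U(s)|f'(s)|^2\,ds,$$
which is the claim with $C(U,W)\le 4A$.

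\textbf{Expected difficulties.} There is no serious obstacle; the proof is essentially two Cauchy–Schwarz inequalities glued by Fubini. The only points requiring care are: (i) justifying the pointwise representation of $f$ using only $\liminf|f|=0$ rather than a genuine limit; (ii) being honest about possibly infinite integrals (handled by reducing to the case when the right-hand side is finite, which forces $G(t)<\infty$ and validates every antiderivative identity used); (iii) the measurability and Fubini step, which is standard since all integrands are nonnegative.
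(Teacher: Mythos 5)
Your proposal is correct: it is precisely the classical Muckenhoupt argument (pointwise bound via $\liminf_{t\to\infty}|f(t)|=0$, weighted Cauchy--Schwarz with $\phi=U^{-1}G^{-1/2}$, Tonelli, and a second antiderivative trick), and it yields exactly the constant $4\sup_{s>0}\big(\int_s^\infty U^{-1}\big)\big(\int_0^s W\big)$ stated in \eqref{C-upperb-1}. The paper itself gives no proof of Theorem \ref{thm-class-1} but refers to Muckenhoupt \cite{muck} (see also \cite{flw}), whose proof is essentially the one you wrote, so there is nothing of substance to compare beyond the minor bookkeeping you already flag (sets where $G$ or $\int_0^t W$ vanish or are infinite, handled by the reduction to a finite right-hand side and to finite $\sup$).
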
 

\medskip

\begin{corollary}  \label{cor-class-1}
Let $R\in \R_+$, and let $U,W$ be nonnegative, a.e.-finite, measurable functions
on $\R_+$ such that
\begin{equation} 
\int_s^\infty U(t)^{-1}\, dt \, < \infty \qquad \forall\, s >R.
\end{equation}
Then for any locally absolutely continuous function $f$ on $(0,R)$ with $\liminf_{t\to 	\infty} |f(t)|=0$ we have 
\begin{equation} 
\int_R^\infty W(t)\, |f(t)|^2\, dt \, \leq\, C(U,W) \, \int_R^\infty U(t)\, |f'(t)|^2\, dt,
\end{equation} 
where the constant $C(U,W)$ satisfies 
\begin{align} \label{C-upperb-5} 
C(U,W) & \leq 4 \, \sup_{R<s} \Big (\int_s^\infty U(t)^{-1}\, dt \Big)  \Big (\int_R^s W(t)\, dt \Big)  \, .
\end{align} 
\end{corollary}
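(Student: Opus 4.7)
My plan is to deduce Corollary \ref{cor-class-1} from Theorem \ref{thm-class-1} by a simple translation of the variable, shifting the half-line $(R,\infty)$ back to $(0,\infty)$. This avoids any reproof of the Muckenhoupt-type estimate; everything reduces to checking that the hypotheses transfer and that the resulting constant matches \eqref{C-upperb-5}.

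Concretely, given $U$, $W$, and $f$ as in Corollary \ref{cor-class-1}, define
\begin{equation*}
\tilde U(t) := U(t+R), \qquad \tilde W(t) := W(t+R), \qquad \tilde f(t) := f(t+R), \qquad t>0.
\end{equation*}
Then $\tilde U, \tilde W$ are nonnegative and a.e.-finite on $\R_+$, and $\tilde f$ is locally absolutely continuous on $\R_+$ with $\tilde f'(t) = f'(t+R)$. The condition $\liminf_{t\to\infty}|\tilde f(t)| = \liminf_{u\to\infty}|f(u)| = 0$ is inherited directly. Moreover, for any $s>0$, the substitution $u = t+R$ gives
\begin{equation*}
\int_s^\infty \tilde U(t)^{-1}\, dt \,=\, \int_{s+R}^\infty U(u)^{-1}\, du \,<\, \infty,
\end{equation*}
since $s+R > R$ and by the hypothesis of Corollary \ref{cor-class-1}. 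Hence $\tilde U$ and $\tilde W$ satisfy the hypotheses of Theorem \ref{thm-class-1}.

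Applying Theorem \ref{thm-class-1} to $\tilde f$ with weights $\tilde U,\tilde W$, and then returning to the original variable $u=t+R$ in each integral, yields
\begin{equation*}
\int_R^\infty W(u)\,|f(u)|^2\, du \,=\, \int_0^\infty \tilde W\,|\tilde f|^2\, dt
\,\le\, C(\tilde U,\tilde W)\,\int_0^\infty \tilde U\,|\tilde f'|^2\, dt
\,=\, C(\tilde U,\tilde W)\,\int_R^\infty U(u)\,|f'(u)|^2\, du.
\end{equation*}
It remains to identify the constant. By \eqref{C-upperb-1},
\begin{equation*}
C(\tilde U,\tilde W) \,\le\, 4\sup_{s>0}\Bigl(\int_s^\infty \tilde U(t)^{-1}\,dt\Bigr)\Bigl(\int_0^s \tilde W(t)\,dt\Bigr).
\end{equation*}
Performing the substitution $u=t+R$ in both inner integrals and reparameterising $s' := s+R$, so that $s>0$ corresponds to $s'>R$, gives
\begin{equation*}
C(\tilde U,\tilde W)\,\le\, 4\sup_{s'>R}\Bigl(\int_{s'}^\infty U(u)^{-1}\,du\Bigr)\Bigl(\int_R^{s'} W(u)\,du\Bigr),
\end{equation*}
which is exactly the bound \eqref{C-upperb-5}. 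No step is genuinely hard; the only subtlety is the bookkeeping of the substitution in the supremum, and the verification that the integrability assumption on $U^{-1}$ (only for $s>R$) is exactly what is needed to make the translated weight $\tilde U$ meet the hypothesis of Theorem \ref{thm-class-1}.
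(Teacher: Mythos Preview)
Your translation argument is correct and is the natural way to derive the corollary from Theorem \ref{thm-class-1}. The paper itself does not supply a proof here: Appendix \ref{sec-app1} merely states these weighted Hardy inequalities as classical facts and refers to \cite{muck} and \cite{flw}. Your reduction by the shift $t\mapsto t+R$ is exactly what one would do to obtain the interval version from the half-line version, and your bookkeeping for the constant is accurate. (Note that the hypothesis ``$f$ on $(0,R)$'' in the stated corollary is evidently a typo for $(R,\infty)$, which you have correctly interpreted.)
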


\medskip

\begin{theorem}  \label{thm-class-2}
Let $U,W$ be nonnegative, a.e.-finite, measurable functions
on $\R_+$ such that
\begin{equation} 
\int_0^s U(t)^{-1}\, dt \, < \infty \qquad \forall\, s >0.
\end{equation}
Then for any locally absolutely continuous function $f$ on $\R_+$ with $\liminf_{t\to 0} |f(t)|=0$ we have 
\begin{equation} 
\int_0^\infty W(t)\, |f(t)|^2\, dt \, \leq\, C(U,W) \, \int_0^\infty U(t)\, |f'(t)|^2\, dt,
\end{equation} 
where the constant $C(U,W)$ satisfies 
\begin{align}  \label{C-upperb-3} 
C(U,W) & \leq 4 \, \sup_{s>0} \Big (\int_0^s U(t)^{-1}\, dt \Big)  \Big (\int_s^\infty W(t)\, dt \Big) \, .
\end{align} 
\end{theorem}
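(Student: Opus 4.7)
The assertion is the classical weighted Hardy inequality of Muckenhoupt type, and the plan is to give a direct proof via Cauchy--Schwarz together with Fubini and a carefully chosen weight.

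First I would use the boundary condition to reduce the problem to a Hardy integral. Since $f$ is locally absolutely continuous and $\liminf_{t \to 0^+}|f(t)| = 0$, there is a sequence $t_n \downarrow 0$ with $f(t_n)\to 0$; passing to the limit in $f(t) = f(t_n) + \int_{t_n}^t f'(s)\, ds$ (valid by absolute continuity) gives $f(t) = \int_0^t f'(s)\, ds$ for every $t>0$. Of course one may assume the right-hand side of the inequality is finite, as otherwise there is nothing to prove.

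Introduce $G(t) := \int_0^t U(s)^{-1}\, ds$, which is finite by hypothesis, nondecreasing, and satisfies $G'(t) = U(t)^{-1}$ a.e. The sharp step is Cauchy--Schwarz applied with the weight $G(s)^{1/2}$:
\begin{align*}
|f(t)|^2 &\leq \left(\int_0^t U(s)|f'(s)|^2\, G(s)^{1/2}\, ds\right)\left(\int_0^t \frac{ds}{U(s)\, G(s)^{1/2}}\right) \\
&= 2\, G(t)^{1/2} \int_0^t U(s)|f'(s)|^2\, G(s)^{1/2}\, ds,
\end{align*}
where I used $G'(s) = U(s)^{-1}$ to evaluate the second factor in closed form. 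Multiplying by $W(t)$ and invoking Fubini (all quantities are nonnegative),
\begin{equation*}
\int_0^\infty W(t)|f(t)|^2\, dt \leq 2\int_0^\infty U(s)|f'(s)|^2\, G(s)^{1/2}\left(\int_s^\infty W(t)\, G(t)^{1/2}\, dt\right) ds.
\end{equation*}

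Setting $A := \sup_{s>0} G(s)\, \Phi(s)$ with $\Phi(s) := \int_s^\infty W(t)\, dt$, one has $\Phi(s) \leq A/G(s)$. An integration by parts, with the boundary term at infinity vanishing since $\Phi(\infty) = 0$ whenever $A<\infty$, yields
\begin{align*}
\int_s^\infty W(t)\, G(t)^{1/2}\, dt &= \Phi(s)\, G(s)^{1/2} + \frac{1}{2}\int_s^\infty \Phi(t)\, G(t)^{-1/2}\, G'(t)\, dt \\
&\leq A\, G(s)^{-1/2} + \frac{A}{2}\int_s^\infty G(t)^{-3/2}\, G'(t)\, dt = 2A\, G(s)^{-1/2}.
\end{align*}
Combining the last three displays produces $\int_0^\infty W|f|^2 \leq 4A \int_0^\infty U|f'|^2$, which is precisely \eqref{C-upperb-3}.

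The only nontrivial input is the identification of the right weight. Testing $G(s)^a$ for $a \in (0,1)$ leads by the same computation to the constant $1/[a(1-a)]$ in place of $4$, which is minimized exactly at $a = 1/2$. No substantive obstacle arises otherwise: the remaining work is bookkeeping, namely justifying the initial reduction $f(t) = \int_0^t f'(s)\, ds$ via the $\liminf$ hypothesis, and checking that the boundary terms in the integration by parts are under control, both of which are standard.
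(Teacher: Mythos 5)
Your proof is correct, and it is exactly the classical Muckenhoupt argument (for $p=2$, with the weight $G(s)^{1/2}$, $G(s)=\int_0^s U^{-1}$) that the paper does not reproduce: Appendix B only cites \cite{muck} and \cite{flw} for the proof, so there is no in-paper argument to diverge from. The one step worth writing out is the reduction $f(t)=\int_0^t f'(s)\,ds$: integrability of $f'$ near $0$ is not automatic from local absolute continuity on $\R_+$, but follows from Cauchy--Schwarz, $\int_0^t |f'|\le \big(\int_0^t U|f'|^2\big)^{1/2}G(t)^{1/2}$, once the right-hand side is assumed finite --- which you flagged, and which together with $\liminf_{t\to0}|f(t)|=0$ closes the argument; the rest (the identity $\int_0^t U^{-1}G^{-1/2}=2G(t)^{1/2}$, Tonelli, and the integration by parts with $\Phi(t)\le A/G(t)$) is sound and yields the stated constant $4\sup_{s>0}G(s)\int_s^\infty W$.
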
 

\medskip

\begin{corollary}  \label{cor-class-2}
Let $R\in \R_+$, and let $U,W$ be nonnegative, a.e.-finite, measurable functions
on $\R_+$ such that
\begin{equation} 
\int_0^s U(t)^{-1}\, dt \, < \infty \qquad \forall\, s <R.
\end{equation}
Then for any locally absolutely continuous function $f$ on $(0,R)$ with $\liminf_{t\to 0} |f(t)|=0$ we have 
\begin{equation} 
\int_0^R W(t)\, |f(t)|^2\, dt \, \leq\, C(U,W) \, \int_0^R U(t)\, |f'(t)|^2\, dt,
\end{equation} 
where the constant $C(U,W)$ satisfies 
\begin{align} \label{C-upperb-2}
C(U,W) & \leq 4 \, \sup_{0<s<R} \Big (\int_0^s U(t)^{-1}\, dt \Big)  \Big (\int_s^R W(t)\, dt \Big)   \, .
\end{align} 
\end{corollary}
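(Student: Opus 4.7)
My plan is to reduce Corollary \ref{cor-class-2} to Theorem \ref{thm-class-2} by an extension-and-truncation argument: by chopping off a tiny piece of $(0,R)$ near the right endpoint, I can extend both the weights and the test function to $\R_+$ in a way that makes Theorem \ref{thm-class-2} directly applicable, and then pass to the limit.

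For each $\eps \in (0, R)$, the first step is to extend the weights to $\R_+$ by setting $\tilde U_\eps(t) = U(t)$ for $0 < t < R-\eps$ together with the convention $\tilde U_\eps(t)^{-1} = 0$ for $t \geq R-\eps$, and $\tilde W_\eps(t) = W(t)\, \id_{(0, R-\eps)}(t)$. The hypothesis $\int_0^s \tilde U_\eps(t)^{-1}\, dt < \infty$ required by Theorem \ref{thm-class-2} then holds for every $s > 0$, since this integral equals $\int_0^{\min(s, R-\eps)} U(t)^{-1}\, dt$, which is finite by the hypothesis of the corollary because $R - \eps < R$.

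The next step is to extend $f$ to $\R_+$. Given $f$ locally absolutely continuous on $(0, R)$ with $\liminf_{t \to 0}|f(t)| = 0$, I would set $\tilde f(t) = f(t)$ for $t \leq R-\eps$ and $\tilde f(t) = f(R-\eps)$ for $t > R-\eps$. This $\tilde f$ is l.a.c.\ on $\R_+$ with $\liminf_{t \to 0}|\tilde f(t)| = 0$, and satisfies $\tilde f'(t) = f'(t)\, \id_{(0, R-\eps)}(t)$ a.e. Applying Theorem \ref{thm-class-2} to $(\tilde U_\eps, \tilde W_\eps, \tilde f)$ would then give
\begin{equation*}
\int_0^{R-\eps} W(t)\, |f(t)|^2\, dt \ \leq \ 4\, K_\eps \int_0^{R-\eps} U(t)\, |f'(t)|^2\, dt,
\end{equation*}
where $K_\eps = \sup_{s > 0}\bigl(\int_0^s \tilde U_\eps^{-1}\bigr)\bigl(\int_s^\infty \tilde W_\eps\bigr)$. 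A direct inspection shows $K_\eps \leq \sup_{0<s<R}\bigl(\int_0^s U^{-1}\bigr)\bigl(\int_s^R W\bigr)$: for $s \leq R-\eps$ the product equals $\bigl(\int_0^s U^{-1}\bigr)\bigl(\int_s^{R-\eps} W\bigr)$, which is bounded by $\bigl(\int_0^s U^{-1}\bigr)\bigl(\int_s^R W\bigr)$; and for $s > R-\eps$ the second factor vanishes.

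Finally, I would let $\eps \to 0$ and invoke monotone convergence on both sides to obtain the inequality with the constant in \eqref{C-upperb-2}. I do not foresee serious obstacles; the only bookkeeping point is to note that $\int_0^\infty \tilde U_\eps |\tilde f'|^2\, dt$ equals $\int_0^{R-\eps} U|f'|^2\, dt$, so no issue arises from the convention $\tilde U_\eps = +\infty$ on $[R-\eps, \infty)$, as $\tilde f' \equiv 0$ there.
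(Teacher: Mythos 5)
Your reduction of Corollary \ref{cor-class-2} to Theorem \ref{thm-class-2} is correct, and it is essentially the natural argument; the paper itself gives no explicit proof of the corollaries in Appendix \ref{sec-app1} (it only refers to \cite{muck} and \cite{flw}), so your truncation-and-extension derivation fills in exactly the intended step. The truncation at $R-\eps$ is indeed needed (since $f$ need not have a limit at $R$), the glued function $\tilde f$ is locally absolutely continuous with $\tilde f'=f'\id_{(0,R-\eps)}$ a.e., the estimate $K_\eps\le \sup_{0<s<R}\bigl(\int_0^s U^{-1}\bigr)\bigl(\int_s^R W\bigr)$ is checked correctly, and monotone convergence in $\eps$ gives the stated bound with the constant \eqref{C-upperb-2}. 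One cosmetic point: your convention $\tilde U_\eps^{-1}=0$ on $[R-\eps,\infty)$ means $\tilde U_\eps=+\infty$ there, which formally violates the ``a.e.-finite'' hypothesis of Theorem \ref{thm-class-2}. This is harmless and instantly repaired: define $\tilde U_\eps$ to be any finite positive value (say $1$) on $[R-\eps,\infty)$. Since $\tilde W_\eps\equiv 0$ and $\tilde f'\equiv 0$ there, neither side of the inequality changes, and the supremum defining $K_\eps$ is still attained for $s\le R-\eps$ because $\int_s^\infty\tilde W_\eps=0$ for $s\ge R-\eps$; so the same bound on $K_\eps$ holds verbatim.
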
 

\medskip

\section*{\bf Acknowledgements}
We are grateful to Rupert Frank, Dirk Hundertmark and Timo Weidl for numerous useful comments. 


\bibliographystyle{amsalpha}

\end{document}